\newtheorem{theorem}{Theorem}[section]
\newtheorem{proposition}[theorem]{Proposition}
\newtheorem{corollary}[theorem]{Corollary}
\newtheorem{remark}[theorem]{Remark}
\newtheorem{lemma}[theorem]{Lemma}
\newtheorem{definition}[theorem]{Definition}
\newtheorem{Example}[theorem]{Example}
\def\1{\mathds{1}}
\newcommand{\footremember}[2]{%
   \footnote{#2}
    \newcounter{#1}
    \setcounter{#1}{\value{footnote}}%
}
\title{Statistical density of particles in one dimensional interaction and Jellium Model}
\author{%
    Mohamed Bouali\footremember{alley}{Department of mathematics, Preparatory Institute for Engineering Studies of Tunis\newline
     Department of mathematics, Faculty of Sciences of Tunis\newline E-mail: bouali25@laposte.net}%
    %\and Mohamed Saber Hamouda\footremember{trailer}{Um Al-Qura University, College of Applied Sciences. mshamouda@uqu.edu.sa}%
    %\and Hind Jeaïd\footrecall{alley} \footnote{Um Al-Qura University, College of Applied Sciences. hkmjeaid@uqu.edu.sa}%
}
\date{}
\begin{document}

% Insert title

\maketitle
\begin{abstract}
We study a one-dimensional gas of \( n \) charged particles confined by a potential and interacting through the Riesz potential or a more general potential. In equilibrium, and for symmetric potential the particles arrange themselves symmetrically around the origin within a finite region. Various models will be studied by modifying both the confining potential and the interaction potential. Focusing on the statistical properties of the system, we analyze the position of the rightmost particle, \( x_{\text{max}} \), and show that its typical fluctuations are described by a limiting distribution different from the Tracy-Widom distribution found in the one-dimensional log-gas. We also derive the large deviation functions governing the atypical fluctuations of \( x_{\text{max}} \) far from its mean.
\end{abstract}
{\sl Mathematics Subject Classification 2010: 82B05, 60G55, 60F10, 15B52
}.\\
	{\sl Key Words and Phrases: Long-range interactions, Riesz gas, One-dimensional Coulomb gas, Random matrix theory, probability measures, equilibrium measures.}
\section{Introduction}
The Riesz gas is one of the most general and widely applicable models, ideal for the
study of particles governed by long-range interactions \eqref{1} - for recent reviews with
mathematical perspectives see \cite{2, 3}. It is a system composed of $n$ particles that
interact via pairwise interactions and are confined by an external potential.
The pairwise repulsive interactions vary with the distance as a power law and therefore
the energy of the gas is given by
\begin{equation}\label{1}
E(x_1,...,x_n)=\sum_{k=1}^nQ(x_i)- {\rm sign}(k)\sum_{1\leq i\neq j\leq n}|x_i-x_j|^k,\quad k<2.
\end{equation}
Various special integer values of $k$ have been studied before. For example, by
setting $k\to 0^+$, which results in a pairwise repulsion that varies as the logarithm of
the distance, the Dyson's log-gas is recovered \cite{4, 5, 6}. Its special property is that the
positions of the particles can be mapped to the eigenvalues of an invariant random
matrices model. The particular potential $Q(x)=x^2$ corresponds to the Gaussian random matrix ensemble. The connection is made by identifying the Boltzman weight of the gas with the joint distribution of eigenvalues of an $n\times n$ invariant random matrix  ensembles. Another known model that is part of the Riesz gas family is the classical Calogero-Moser model. Its energy is given by Eq. (1)
with $k = 2$ and it is exactly solvable \cite{7, 8, 9}. Also, the case \(k=1\) which corresponds to one dimensional one
component plasma also known as the jellium model was studied for the the particular potential $Q(x)=x^2$ in \cite{12, 13, 14, 15}. For this model
most of the earlier studies considered bulk properties at the thermodynamic limit. In \cite{16}, the authors addressed the extreme value question in the 1d OCP or the
"jellium" model where they showed analytically that the limiting distribution of the
typical fluctuations of $x_{\text max}$ is indeed different from the Tracy-Widom distribution.

Let \( \Sigma \) be a finite or infinite interval of \( \mathbb{R} \), and consider the more general probability measure on \( \Sigma^n \):
\[
\mathbb{P}_n(dx) = \frac{1}{Z_{\Sigma, n}} e^{-E(x_1, \dots, x_n)} dx_1 \dots dx_n,
\]
where \[E(x_1,...,x_n)=\sum_{k=1}^nQ(x_i)- \sum_{1\leq i\neq j\leq n}V(x_i-x_j),\]
and $Q$, $V$ are two functions to be specified later.

The empirical density of the particles \( (x_1, \dots, x_n) \in\Sigma^n\) is defined by
\[
\nu_{\Sigma, n} = \frac{1}{n} \sum_{i=1}^n \delta_{x_i},
\]
and the statistical density is given by \( \mu_{\Sigma, n} = \mathbb{E}(\nu_{\Sigma, n}) \). That is, for a continuous function \( f \) on \( \Sigma \),
\[
\int_\Sigma f(t) \mu_{\Sigma, n}(dt) = \frac{1}{n} \int_{\Sigma^n} \sum_{i=1}^n f(x_i) \mathbb{P}_n(dx).
\]
One is interested to the statistical distribution as $n\to\infty$ for a sufficiently large class of potentials \( Q \) and interactions \( V \).
Which leads to the
following problem of potential theory. Consider the energy
\[
E(\mu) = \int_\Sigma Q(x) \mu(dx) - \int_{\Sigma \times \Sigma} V(x - y) \mu(dx) \mu(dy).
\]
We want to find the probability measure $\mu$ which realizes the minimum of the
energy $E(\mu)$.

Under the conditions that \( V \) is a negative definite and lower semicontinuous function satisfying
\[
\int_{\mathbb{R}^2} V(x - y) \mu(dx) \mu(dy) < 0
\]
for every signed measure \( \mu \) with compact support and \( \mu(\mathbb{R}) = 0 \), and \( Q \) is a lower semicontinuous function with
\[
\lim_{|x| \to \infty} Q(x) - 4V(x) = +\infty,
\]
we show that there exists a unique probability measure \( \mu^* \) with compact support such that
\[
\inf_{\mu \in \mathcal{M}^1(\Sigma)} E(\mu) = E(\mu^*),
\]
where \( \mathcal{M}^1(\Sigma) \) is the space of probability measures on \( \Sigma \). Furthermore, we prove that, after scaling the particles, the probability measure \( \mu_n \) converges in the tight topology to the equilibrium measure \( \mu^* \).

Next, we explicitly compute the density of the equilibrium measure \( \mu^* \) for various potentials \( Q \) and kernels \( V_k \), \( k < 2 \), where
\[
V_k(x) =
\begin{cases}
\text{sign}(k) |x|^k, & k \neq 0, \\
\log |x|, & k = 0.
\end{cases}
\]
Our first result addresses the case \( k = 1 \) and \( Q \) a twice differentiable and convex function with
\[
\lim_{|x| \to \infty} Q(x) - 4|x| = +\infty.
\]
We show that the density of \( \mu^* \) with respect to the Lebesgue measure is given by
\[
g(x) = \frac{1}{2} Q''(x) \chi_{[a, b]}(x),
\]
where \( \chi_{[a, b]} \) is the characteristic function of the interval \( [a, b] \). The parameters \( a \) and \( b \) are determined by the conditions
\[
Q'(b) - Q'(a) = 2, \quad Q(b) - Q(a) = a + b.
\]

The second result concerns the case \( k \in (-1, 2) \), \( k\neq 0 \) and \( Q(x) = |x|^\beta \), \( \beta > 1 \), corresponding to the generalized Ullman density. Let
\[
E_k(\mu) = \int_\mathbb{R} |x|^\beta \mu(dx) - \int_\mathbb{R} V_k(x - y) \mu(dx) \mu(dy),
\]
and let \( \mu^*_{k, \beta} \) be the measure that minimizes this energy.

For \( k \in (-1, 1) \), \(k\neq 0\) and \( \beta > 1 \), the density of \( \mu^*_{k, \beta} \) with respect to the Lebesgue measure is supported on \( [-A_{k, \beta}, A_{k, \beta}] \) and is given by
\[
f_{k, \beta}(x) = \frac{1}{C_{k, \beta}} |x|^{\beta - k - 1} \int_1^{A_{k, \beta} / |x|} \frac{u^{\beta - 1}}{(u^2 - 1)^{(1 + k)/2}} du,
\]
where
\[
C_{k, \beta} = \sqrt{\pi} \frac{|k| \Gamma\left(\frac{1 - k}{2}\right) \Gamma\left(\frac{1 + k}{2}\right)}{\Gamma\left(1 - \frac{k}{2}\right)} \frac{\Gamma\left(\frac{\beta - k}{2}\right)}{\beta \Gamma\left(\frac{\beta + 1}{2}\right)},
\]
and
\[
A_{k, \beta} = \left( \frac{2 |k| \Gamma\left(\frac{1 + k}{2}\right) \Gamma\left(1 + \frac{\beta - k}{2}\right)}{\beta \Gamma\left(\frac{\beta + 1}{2}\right)} \right)^{\frac{1}{\beta - k}}.
\]
For \( k \in (1, 2) \) and \( \beta > k \), the density is
\[
g_{k, \beta}(x) = \frac{1}{C'_{k, \beta}} |x|^{\beta - k - 1} \int_1^{A'_{k, \beta} / |x|} \frac{u^{\beta - 2}}{(u^2 - 1)^{k/2}} du,
\]
with \( \mu^*_{k, \beta} \) supported on \( [-A'_{k, \beta}, A'_{k, \beta}] \) and density \( g_{k, \beta} \), where
\[
C'_{k, \beta} = \sqrt{\pi} \frac{k (k - 1) \Gamma\left(\frac{k}{2}\right) \Gamma\left(1 - \frac{k}{2}\right)}{\beta (\beta - 1) \Gamma\left(\frac{3 - k}{2}\right)} \frac{\Gamma\left(\frac{\beta - k}{2}\right)}{\Gamma\left(\frac{\beta}{2}\right)},
\]
and
\[
A'_{k, \beta} = \left( \frac{k (k - 1) \Gamma\left(\frac{k}{2}\right) \Gamma\left(1 + \frac{\beta - k}{2}\right)}{(\beta - 1) \Gamma\left(1 + \frac{\beta}{2}\right)} \right)^{\frac{1}{\beta - k}}.
\]
For \( \beta > 1 \) and \( k = 1 \), the density of \( \mu^*_{1, \beta} \) is given by
\[
h_\beta(x) = \frac{\beta (\beta - 1)}{4} |x|^{\beta - 2} \chi_{[-(\frac{2}{\beta})^{1/(\beta - 1)}, (\frac{2}{\beta})^{1/(\beta - 1)}]}(x).
\]

The particular case \( \beta = 2 \) and \( k \in (-1, 2) \), \(k\neq 0\) generalizes the Wigner semicircle law, with the density of the equilibrium measure given by
\[
f_k(x) = \frac{1}{|k| C_k} (A_k^2 - x^2)^{\frac{1 - k}{2}} \chi_{[-A_k, A_k]}(x),
\]
where
\[
C_k = \frac{\pi (1 - k)}{\sin\left(\frac{\pi (1 - k)}{2}\right)},
\]
and
\[
A_k = \left( \frac{|k| C_k \Gamma\left(\frac{4 - k}{2}\right)}{\sqrt{\pi} \Gamma\left(\frac{3 - k}{2}\right)} \right)^{\frac{1}{2 - k}}.
\]
For \( k = 0 \) and \(\beta>1\), we retrieve the Ullman's density.
For \( k = 0 \) and \(\beta=2\), we recover Wigner's semicircle law:
\[
f_0(x) = \frac{1}{\pi} \sqrt{2 - x^2} \chi_{[-\sqrt{2}, \sqrt{2}]}(x),
\]

Our third result addresses the case \( \Sigma_\omega = [\omega, +\infty) \) and \( Q(x) = x^2 \), providing an explicit computation of the density of the equilibrium measure \( \mu^*_{k, \omega} \) that minimizes the energy
\[
E_{k, \omega}(\mu) = \int_\omega^\infty x^2 \mu(dx) - \int_\omega^\infty \int_\omega^\infty V_k(x - y) \mu(dx) \mu(dy).
\]
For \( k \in (-1, 1) \), \( k \neq 0 \), the density is
\[
g_{k, \omega}(t) = \frac{1}{|k| C_k} (L_k(\omega) + \omega - t)^{\frac{1 - k}{2}} (t - \omega)^{-\frac{1 + k}{2}} \left( t - k \omega + \frac{1 - k}{2} L_k(\omega) \right).
\]
The probability measure \( \mu_{k, \omega} \) admits the density \( g_{k, \omega} \) and is supported on \( [\omega, \omega + L_k(\omega)] \), where
\[
C_k = \frac{\pi (1 - k)}{\sin\left(\frac{(1 - k) \pi}{2}\right)},
\]
and \( L_k(\omega) \) is the unique positive solution to the equation in \( x \):
\[
(3 - k) x^{2 - k} + 2 (2 - k) \omega x^{1 - k} = \frac{2^{2 - k} |k| C_k \Gamma\left(\frac{4 - k}{2}\right)}{\sqrt{\pi} \Gamma\left(\frac{3 - k}{2}\right)}.
\]
The above result generalizes the Dean-Majumdar result presented in \cite{dean}, which specifically considers the case where \(k=0\). In that case the density turns to
\[g_{0, \omega}(t) = \frac{1}{2\pi} \sqrt{\frac{L_0(\omega) + \omega - t} {t - \omega}} \left( 2t + L_0(\omega) \right).
\]
The probability measure \( \mu^*_{0, \omega} \) admits the density \( g_{0, \omega} \) and is supported on \( [\omega, \omega + L_0(\omega)] \), where
\[ L_0(\omega)= \frac23(\sqrt{\omega^2+6}-\omega).\]
The last result that we consider is for $Q(x)=x$ and $\Sigma=[0,+\infty)$. We show that the equilibrium measure $\mu_k^*$ which minimizes the energy
\[
E_{k}(\mu) = \int_{\Bbb R_+} x\, \mu(dx) - \int_{\Bbb R^2_+}V_k(x - y) \mu(dx) \mu(dy),
\]
admits the following density:
For \( k \in (-1, 1) \) with \( k \neq 0 \),
\[
h_k(t) = \frac{\sin\left(\frac{(1-k)\pi}{2}\right)}{2|k|\pi} (a_k - t)^{\frac{1-k}{2}} t^{-\frac{1+k}{2}} \chi_{[0, a_k]}(t),
\]
where
\[
a_k = \left(\frac{|k| \Gamma\left(\frac{1+k}{2}\right) \Gamma\left(\frac{2-k}{2}\right)}{2^{k-2} \sqrt{\pi}}\right)^{\frac{1}{1-k}}.
\]
For \( k = 0 \), the density is:
\[
h_0(t) = \frac{1}{2\pi} \sqrt{\frac{4 - t}{t}} \chi_{[0, 4]}(t).
\]
This result can, in some sense, be interpreted as a generalization of the Marchenko-Pastur distribution.
 \section{Generalized repulsion}
% Let $k<0$ and set
%  $$\frac12Q(x)=-\alpha\int_\Bbb R \frac1{|x-y|^{k}}\mu^*(dy).$$
%  For $z\in\Bbb C\backslash\Bbb R$
%  $$H(z)=\frac12 Q(z)-\alpha \int_\Bbb R\frac1{(z-y)^k}\mu(dy).$$

 A function \( V: \mathbb{R} \to \mathbb{R} \) is said to be  {\it negative definite} if it is even, i.e., \( V(-x) = V(x) \), for any \( n \in \mathbb{N} \), any real numbers \( x_1, \dots, x_n \), and any complex numbers \( c_1, \dots, c_n \) satisfying \( \sum_{i=1}^n c_i = 0 \), the following inequality holds:
\[
\sum_{i,j=1}^n c_i \overline{c_j} V(x_i - x_j) \leq 0.
\]

Let \( Q: \mathbb{R} \to \mathbb{R} \) be a lower semicontinous function, \( \Sigma \subseteq \mathbb{R} \) be a finite or infinite closed interval, and \( \mathcal{M}^1(\Sigma) \) the set of probability measures on \( \Sigma \). For a probability measure \( \mu \in \mathcal{M}^1(\Sigma) \), the \( V \)-potential \( U^\mu \) is defined as:
\[
U^\mu(x) = -\int_\Sigma V(x - y) \, \mu(dy).
\]
The energy \( E(\mu) \) of the measure \( \mu \) is given by:
\[
E(\mu) = \int_\Sigma Q(x) \, \mu(dx) + \int_\Sigma U^\mu(x) \, \mu(dx).
\]
This can be rewritten as:
\[
E(\mu) = \int_\Sigma Q(x) \, \mu(dx) - \int_{\Sigma^2} V(x - y) \, \mu(dx) \, \mu(dy).
\]
Equivalently, the energy can be expressed in terms of the kernel \( k(x, y) \):
\[
E(\mu) = \int_{\Sigma^2} k(x, y) \, \mu(dx) \, \mu(dy),
\]
where the kernel \( k(x, y) \) is defined as:
\[
k(x, y) = \frac{1}{2} Q(x) + \frac{1}{2} Q(y) - V(x - y).
\]

%  For $\mu \in \mathcal{M}^1(\mathbb{R})$, the set of probability measures on $\mathbb{R}$, the number $\delta =\mu ([0,\infty [)$ will be called the index of $\mu $, and,
%for $0\leq \delta \leq 1$, $\mathcal{M}^1_{\delta }(\mathbb{R})$ is the set of probability measures $\mu $ with index $\delta $.

Let \(\Sigma \) be a closed interval \(\Sigma = \Bbb R,\, [a,b[,\, ] a, b]\, {\rm or}\, [a, b]\), and \(Q\) a
function defined on \(\Sigma\) with values on \(] -\infty,\infty]\). If
$\Sigma$ is unbounded, it is assumed that
\[\displaystyle\lim_{|x|\to\infty} Q(x)-4V(x)=+\infty.\]
\begin{lemma} Assume that $Q$ and $V$ are lower-semicontinuous on $\Sigma$. Then, the energy $E$ is bounded from below.
The equilibrium energy is defined by
$$E^*_\Sigma=\inf \{E(\mu )\mid \mu \in \mathcal{M}^1(\Sigma)\},$$
\end{lemma}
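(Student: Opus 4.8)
The plan is to recast the energy in its symmetric kernel form and reduce the whole statement to a single pointwise lower bound. Writing
\[
E(\mu)=\int_{\Sigma^2} k(x,y)\,\mu(dx)\,\mu(dy),\qquad k(x,y)=\tfrac12 Q(x)+\tfrac12 Q(y)-V(x-y),
\]
and using that every $\mu\in\mathcal M^1(\Sigma)$ satisfies $\int_{\Sigma^2}\mu(dx)\,\mu(dy)=1$, one gets at once $E(\mu)\ge \inf_{(x,y)\in\Sigma^2}k(x,y)$. Hence it suffices to show that $k$ is bounded below on $\Sigma^2$; the resulting bound is uniform in $\mu$, which is exactly what ``bounded from below'' means, and it gives $E^*_\Sigma=\inf_\mu E(\mu)>-\infty$.

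The heart of the argument is the pointwise inequality $V(x-y)\le 2V(x)+2V(y)+C$ for a constant $C$ depending only on $V$. I would derive it from negative definiteness: a continuous negative definite even function with $V(0)$ finite admits the L\'evy--Khintchine representation $V(x)=V(0)+q\,x^2+\int_{\mathbb R\setminus\{0\}}\bigl(1-\cos(x\xi)\bigr)\,\sigma(d\xi)$ with $q\ge 0$ and $\sigma$ symmetric. Combining $(x-y)^2\le 2x^2+2y^2$ with the trigonometric estimate obtained from $1-\cos\theta=2\sin^2(\theta/2)$ and $\bigl|\sin\tfrac{(x-y)\xi}{2}\bigr|\le \bigl|\sin\tfrac{x\xi}{2}\bigr|+\bigl|\sin\tfrac{y\xi}{2}\bigr|$, namely $1-\cos((x-y)\xi)\le 2\bigl(1-\cos(x\xi)\bigr)+2\bigl(1-\cos(y\xi)\bigr)$, and integrating against $\sigma$, yields the claim with $C=-3V(0)$. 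For the concrete kernels this is simply $|x-y|^k\le 2\bigl(|x|^k+|y|^k\bigr)$ when $0<k<2$.

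Granting this estimate, the kernel obeys
\[
k(x,y)\ge \tfrac12\bigl(Q(x)-4V(x)\bigr)+\tfrac12\bigl(Q(y)-4V(y)\bigr)-C.
\]
By the standing growth hypothesis the function $x\mapsto Q(x)-4V(x)$ tends to $+\infty$ as $|x|\to\infty$ and, being lower semicontinuous, is bounded below on $\Sigma$, say by $-m$. Therefore $k(x,y)\ge -m-C=:-M$ uniformly on $\Sigma^2$, so $E(\mu)\ge -M$ for every $\mu\in\mathcal M^1(\Sigma)$, and $E^*_\Sigma\ge -M>-\infty$.

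The main obstacle is precisely the pointwise inequality with the sharp constant $2$, which is what makes the factor $4$ in the growth condition the right one; for general negative definite $V$ it rests on the integral representation above, and the two singular families need separate elementary care. When $-V$ is bounded below (as for $V_k$ with $-1<k<0$, where $-V(x-y)=|x-y|^k\ge0$) the interaction term only helps and one uses directly $k(x,y)\ge\tfrac12 Q(x)+\tfrac12 Q(y)\ge\inf_\Sigma Q>-\infty$; while for $k=0$, where $V(0)=-\infty$, one replaces the representation by the asymptotic bound $\log|x-y|\le\log(1+|x|)+\log(1+|y|)+C$ valid away from the diagonal, the diagonal contributing $-V\to+\infty$, which is harmless for a lower bound.
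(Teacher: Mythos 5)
Your proof is correct, and its skeleton is the one the paper itself relies on: the paper has no standalone proof of this lemma, but its property (P1) together with inequality \eqref{k} inside the proof of Theorem~\ref{thm3} is exactly your reduction --- a pointwise kernel bound $k(x,y)\ge \tfrac12 g_V(x)+\tfrac12 g_V(y)$ (in your version, minus a constant) with $g_V=Q-4V$, followed by the remark that a lower semicontinuous function tending to $+\infty$ at infinity is bounded below. The genuine difference is how the crucial inequality $V(x-y)\le 2V(x)+2V(y)+C$ is produced. The paper takes $C=0$ and gets it from the subadditivity bound $|V(x-y)|\le\bigl(\sqrt{|V(x)|}+\sqrt{|V(y)|}\bigr)^2$, invoked without proof as a property of negative definite functions; that classical fact presupposes $V(0)\ge 0$, and it actually fails for the paper's singular kernels: for $V_k(x)=-|x|^k$ with $-1<k<0$ it would read $|x-y|^k\le 2\bigl(|x|^k+|y|^k\bigr)$, which is false for $x=1+\varepsilon$, $y=1$ as $\varepsilon\to 0^+$ (and \eqref{k} itself fails at $x=1$, $y=-1$). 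You instead derive the inequality from the L\'evy--Khinchin representation --- the same tool the paper uses in Corollary~\ref{lem2} for admissibility --- at the cost of assuming $V$ continuous with $V(0)$ finite and of the harmless additive constant $C=-3V(0)$, and you then treat the singular families $-1<k<0$ and $k=0$ by direct elementary bounds ($-V\ge 0$, respectively $\log|x-y|\le\log(1+|x|)+\log(1+|y|)$ combined with the growth hypothesis). This buys robustness precisely where the paper's one-line argument breaks down: your case analysis is what actually makes the lemma true for the kernels that are singular at the origin. What it costs is generality: an arbitrary lower semicontinuous negative definite $V$, neither continuous nor of the special forms, is covered by neither argument, so on that score the two proofs are on an equal footing.
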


\begin{definition} A function $V$ will be called {\it admissible} if it is lower semicontinuous, negative definite and for every singed measure with compact support and $\mu(\Sigma)=0$, we have
$$\int_{\Sigma^2}V(x-y)\mu(dx)\mu(dy)<0.$$
\end{definition}
\begin{corollary}\label{lem2}
Every non zero continuous negative definite function is admissible.
\end{corollary}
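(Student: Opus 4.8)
The plan is to reduce the statement to the non-strict inequality already encoded in negative definiteness and then to upgrade it to a strict one by means of a spectral (L\'evy--Khinchine) representation of $V$. Lower semicontinuity is automatic from continuity, and negative definiteness is assumed, so the only thing left to establish is the strict inequality
\[
\int_{\Sigma^2} V(x-y)\,\mu(dx)\,\mu(dy) < 0
\]
for every nonzero signed measure $\mu$ of compact support with $\mu(\Sigma)=0$. Since such an integral is unchanged when a constant is added to $V$ (the variation equals $c\,\mu(\Sigma)^2=0$), and adding a constant preserves negative definiteness (again because $\sum_i c_i=0$), I would first normalize so that $V(0)=0$.

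Next I would invoke the classical representation of continuous negative definite functions: an even, real, continuous $V$ with $V(0)=0$ can be written as
\[
V(x) = a\,x^2 + \int_{\mathbb{R}} \bigl(1-\cos(\xi x)\bigr)\,\nu(d\xi),
\]
where $a\ge 0$ and $\nu$ is a nonnegative symmetric measure with $\int_{\mathbb{R}}\min(1,\xi^2)\,\nu(d\xi)<\infty$. Substituting this into the quadratic form and using $\mu(\Sigma)=0$ to annihilate the constant term and the pure squares, I expect the clean identity
\[
\int_{\Sigma^2} V(x-y)\,\mu(dx)\,\mu(dy) = -2a\Bigl(\int_\Sigma x\,\mu(dx)\Bigr)^2 - \int_{\mathbb{R}} |\widehat{\mu}(\xi)|^2\,\nu(d\xi),
\]
where $\widehat{\mu}(\xi)=\int_\Sigma e^{i\xi x}\,\mu(dx)$; this uses $\cos(\xi(x-y))=\cos(\xi x)\cos(\xi y)+\sin(\xi x)\sin(\xi y)$. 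Both terms on the right are manifestly $\le 0$, which re-proves negative definiteness at the level of measures.

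The main obstacle is the strict inequality. Suppose the right-hand side vanishes; then $a\bigl(\int x\,\mu(dx)\bigr)^2=0$ and $\widehat{\mu}(\xi)=0$ for $\nu$-almost every $\xi$. Because $\mu$ has compact support, $\widehat{\mu}$ extends to an entire function of exponential type (Paley--Wiener), hence is real-analytic and, being not identically zero when $\mu\neq 0$, vanishes only on a discrete set of reals. The crux is therefore to show that a nonzero $V$ forces either $a>0$ together with a nonvanishing first moment, or forces $\nu$ to charge a set with an accumulation point, so that $\widehat{\mu}=0$ $\nu$-almost everywhere entails $\widehat{\mu}\equiv 0$ and hence $\mu=0$, a contradiction. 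For the kernels actually used in this paper, namely the Riesz potentials $V_k$, whose spectral measure is absolutely continuous with a density strictly positive on all of $\mathbb{R}$, this non-degeneracy is immediate, and this is precisely the step where one genuinely exploits that $V$ is not the trivial function. I expect this verification to be the delicate point, whereas the representation and the computation of the quadratic form are routine.
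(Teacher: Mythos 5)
Your proposal takes the same route as the paper's own proof: normalize $V(0)=0$, invoke the L\'evy--Khinchine representation of an even, real, continuous negative definite function, and use $\mu(\Sigma)=0$ to reduce the quadratic form to
\[
\int_{\Sigma^2}V(x-y)\,\mu(dx)\,\mu(dy)
=-2b\Bigl(\int_\Sigma x\,\mu(dx)\Bigr)^{2}-\int_{\mathbb{R}^{*}}|\hat{\mu}(t)|^{2}\,\nu(dt),
\]
which is exactly the identity displayed in the paper. The two arguments part ways only at the final step: the paper writes ``$<0$'' under this expression with no further comment, while you observe that strict negativity is not automatic and isolate precisely what is missing, namely that $\nu$ must not be concentrated on the zero set of $\hat{\mu}$ (a discrete set, by your Paley--Wiener remark, whenever $\mu\neq 0$ has compact support).

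Your refusal to wave through that step is the mathematically correct reaction, because the step is not provable: the corollary as stated is false. Take $V(x)=1-\cos x$, a nonzero, even, continuous function that is negative definite in the paper's sense (its representation has $b=0$, $\nu=\delta_{1}$), and take $\mu=\delta_{0}-\delta_{2\pi}$, a nonzero compactly supported signed measure with $\mu(\mathbb{R})=0$. Then $\hat{\mu}(t)=1-e^{2\pi i t}$ vanishes at $t=1$, so the quadratic form equals $-|\hat{\mu}(1)|^{2}=0$, not $<0$; this $V$ is not admissible. So the ``delicate point'' you flagged is a genuine gap in the paper's own proof, and your diagnosis of the repair is the right one: strictness holds once the L\'evy measure $\nu$ charges a set with a finite accumulation point, since then $\hat{\mu}=0$ $\nu$-a.e.\ forces the entire function $\hat{\mu}$ to vanish identically, hence $\mu=0$. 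This non-degeneracy is satisfied by the kernels the paper actually uses (for $0<k<2$ the representation of $|x|^{k}$ has $\nu(dt)\propto t^{-k-1}\,dt$, a measure of full support), so all later applications survive; but as a statement about arbitrary nonzero continuous negative definite functions, the corollary needs this extra hypothesis, and with it your argument is complete.
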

\begin{proof}
The function $V$ is continuous negative definite and real. By the Levy-Khinchin formula, there exists a constant $b\geq 0$ and positive measure $\nu$ on $\Bbb R^*$ such that
$$V(x)=V(0)+b x^2+\int_{\mathbb R^*}(1-\cos(xt))\nu(dt),$$
with $\displaystyle\int_{\mathbb R^*}\frac{t^2}{1+t^2}\nu(dt)<\infty$.
Therefore,
$$\int _{\Sigma^2}V( x-y)\mu (dx)\mu (dy)=-2b\Bigl(\int_{\Sigma}x\mu(dx)\Bigr)^2-\int_{\mathbb R^*}\int _{\Sigma^2}\cos((x-y)t)\mu (dx)\mu (dy)\nu(dt).$$
Since,
$$\int _{\Sigma^2}\cos((x-y)t)\mu (dx)\mu (dy)=|\hat\mu_\Sigma(t)|^2.$$
where $\hat \mu_\Sigma $ is the Fourier transform of $\mu\chi_\Sigma $, and $\chi_\Sigma$ the characteristic function of the set \(\Sigma\):
$$\hat \mu_\Sigma (t)=\int_{\Sigma} e^{itx}\mu (dx).$$
Then,
$$\int _{\Sigma^2}V( x-y)\mu (dx)\mu (dy)=-2b\Bigl(\int_{\Sigma}x\mu(dx)\Bigr)^2-\int_{\mathbb R^*}|\hat\mu_\Sigma(t)|^2\nu(dt)< 0.$$
\end{proof}
\begin{theorem}\label{thm3}  Assume that $Q$ is continuous and $V$ is an admissible function, and $\displaystyle\lim_{|x|\to\infty} Q(x)-4V(x)=+\infty$. Then,
there exists a unique measure $\mu \in \mathcal{M}^1(\Sigma)$ whose energy reaches the infimum:
$$E(\mu)=E^*_\Sigma.$$
It is called the %conditional
equilibrium measure, and denoted by
$\mu^*_\Sigma$. Moreover the support of $\mu^*_\Sigma$ is compact.

\end{theorem}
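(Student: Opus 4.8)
The plan is to obtain existence by the direct method of the calculus of variations, uniqueness from the strict convexity of $E$ encoded in the admissibility of $V$, and compactness of the support from the Frostman variational conditions together with the growth hypothesis. The starting point for all three parts is a coercive pointwise lower bound on the kernel. Writing $E(\mu)=\int_{\Sigma^2}k(x,y)\,\mu(dx)\mu(dy)$ with $k(x,y)=\tfrac12 Q(x)+\tfrac12 Q(y)-V(x-y)$, I would first establish, via a comparison inequality for the even negative definite function $V$ controlling $V(x-y)$ by a sum of a function of $x$ and a function of $y$ — for the kernels at hand of the form $V(x-y)\le 2\bigl(V(x)+V(y)\bigr)+c$ — the bound
\[
k(x,y)\ge \psi(x)+\psi(y),\qquad \psi(x):=\tfrac12\bigl(Q(x)-4V(x)\bigr)-\tfrac{c}{2}.
\]
Since $Q$ is continuous and $V$ is continuous as an extended-real-valued function, $\psi$ is lower semicontinuous, and the hypothesis $\lim_{|x|\to\infty}\bigl(Q(x)-4V(x)\bigr)=+\infty$ forces $\psi(x)\to+\infty$; in particular $\psi$ is bounded below on the closed set $\Sigma$, so $k$ is lower semicontinuous and bounded below. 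This sharpens the lower bound recorded above and is the single most important estimate: it is exactly the step where the constant $4$ in the growth condition is used, and I expect verifying the comparison inequality uniformly over the full admissible class (e.g.\ replacing $V$ by $\log^+|\cdot|$ in the logarithmic case) to be the main technical obstacle.

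For existence, take a minimizing sequence $\mu_n\in\mathcal{M}^1(\Sigma)$ with $E(\mu_n)\to E^*_\Sigma$. From $E(\mu_n)\ge 2\int_\Sigma \psi\,d\mu_n$ one gets a uniform bound $\int_\Sigma\psi\,d\mu_n\le C$; since the sublevel sets $\{\psi\le R\}$ are bounded and $\Sigma$ is closed, Markov's inequality yields $\sup_n\mu_n(\{\psi>R\})\to0$ as $R\to\infty$, so $\{\mu_n\}$ is tight. By Prokhorov's theorem a subsequence converges weakly to some $\mu^*\in\mathcal{M}^1(\Sigma)$. Because $k$ is lower semicontinuous and bounded below, the functional $\mu\mapsto\int_{\Sigma^2} k\,d\mu\,d\mu$ is lower semicontinuous for the weak topology: one writes $k$ as an increasing limit of bounded continuous functions and uses $\mu_n\times\mu_n\to\mu^*\times\mu^*$ weakly together with monotone convergence. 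Hence $E(\mu^*)\le\liminf_n E(\mu_n)=E^*_\Sigma$, and therefore $E(\mu^*)=E^*_\Sigma$.

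Uniqueness follows from strict convexity. If $\mu_0,\mu_1$ are two minimizers, set $\nu=\mu_1-\mu_0$, a signed measure with $\nu(\Sigma)=0$, and consider $\mu_t=\mu_0+t\nu$ for $t\in[0,1]$. Expanding, $\int Q\,d\mu_t$ is affine in $t$, while $-\int_{\Sigma^2} V\,d\mu_t\,d\mu_t$ contributes the single quadratic term $-t^2\int_{\Sigma^2}V(x-y)\,\nu(dx)\nu(dy)$; by admissibility of $V$ this term is strictly positive for $t\ne0$, once one knows $\nu$ has compact support (guaranteed by the compactness of supports established next). Thus $t\mapsto E(\mu_t)$ is strictly convex, so $E(\mu_{1/2})<\tfrac12 E(\mu_0)+\tfrac12 E(\mu_1)=E^*_\Sigma$, contradicting the definition of $E^*_\Sigma$ unless $\nu=0$. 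Hence $\mu_0=\mu_1=:\mu^*_\Sigma$.

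Finally, for compactness of the support I would derive the Euler--Lagrange (Frostman) conditions by differentiating $t\mapsto E\bigl((1-t)\mu^*+t\sigma\bigr)$ at $t=0^+$ for arbitrary $\sigma\in\mathcal{M}^1(\Sigma)$: with $U^{\mu^*}(x)=-\int V(x-y)\mu^*(dy)$ this yields $\int(Q+2U^{\mu^*})\,d\sigma\ge\int(Q+2U^{\mu^*})\,d\mu^*=:F$, so that the effective potential $\Phi:=Q+2U^{\mu^*}$ satisfies $\Phi\ge F$ quasi-everywhere on $\Sigma$, with equality quasi-everywhere on $\mathrm{supp}(\mu^*)$. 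Using once more $V(x-y)\le 2\bigl(V(x)+V(y)\bigr)+c$ and $\int V\,d\mu^*<\infty$ (finite energy), I bound $U^{\mu^*}(x)\ge -2V(x)-C'$, whence $\Phi(x)\ge Q(x)-4V(x)-C''\to+\infty$ as $|x|\to\infty$. Therefore $\{\Phi\le F\}$ is bounded, and since $\mathrm{supp}(\mu^*)$ lies in its closure, the support is compact. The delicate points here are that the variational inequality holds only quasi-everywhere — Dirac masses may have infinite self-energy for the $\log$ and $k<0$ kernels — and the justification of differentiating under the integral sign, both handled by standard capacity and truncation arguments.
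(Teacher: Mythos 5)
Your existence and uniqueness steps are essentially the paper's own proof: the same coercivity bound $k(x,y)\ge\tfrac12\bigl(g_V(x)+g_V(y)\bigr)$ with $g_V=Q-4V$ (the paper derives it from subadditivity of $\sqrt{|V|}$ for negative definite $V$), then tightness via Prokhorov plus lower semicontinuity of $E$ for existence, and strict convexity of $t\mapsto E\bigl((1-t)\mu_0+t\mu_1\bigr)$ via admissibility for uniqueness, with the same logical ordering (compact support first, so that $\nu=\mu_1-\mu_0$ is compactly supported and admissibility applies). Where you genuinely diverge is the compact-support step. You derive the full Frostman conditions by mixing $\mu^*$ with an \emph{arbitrary} $\sigma\in\mathcal{M}^1(\Sigma)$, which forces you into quasi-everywhere statements and capacity theory: for the $\log$ and $k<0$ kernels point masses have infinite self-energy, so $\Phi\ge F$ can only be asserted q.e., and for a general admissible $V$ (merely lower semicontinuous, negative definite, with the strict-energy property) the ``standard capacity and truncation arguments'' you invoke are not available off the shelf --- this is the one real soft spot in your write-up, though you flag it honestly. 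The paper avoids all of this with a lighter perturbation: it multiplies $\mu$ by $1+t\chi_A$ and renormalizes, which is exactly mixing $\mu$ with its \emph{own} normalized restriction $\chi_A\mu/\mu(A)$. Differentiating at $t=0$ and using the kernel bound gives
\[
\int_A\Bigl(g_V(x)+\int_\Sigma g_V(y)\,\mu(dy)-2E(\mu)\Bigr)\mu(dx)\le 0,
\]
in which every quantity is integrated against $\mu$ itself, so no exceptional sets can arise; taking $A=\Sigma\setminus[-a,a]$ with $a$ large, the integrand is strictly positive on $A$, forcing $\mu(A)=0$ and hence $\mathrm{supp}(\mu)\subset[-a,a]$. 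Your argument can be repaired in the same spirit by restricting your test measures $\sigma$ to normalized restrictions of $\mu^*$ (which have finite energy), at which point the two proofs essentially coincide. What your heavier route buys is the full Euler--Lagrange characterization of $\mu^*$, which is indeed needed later in the paper --- but the paper obtains it as a separate sufficiency statement (Proposition \ref{p3}) rather than extracting it from the minimizer; for the compactness claim alone it is more machinery than the statement requires.
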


We recall first three properties of the energy.

(P1)  Define $g_V(x)=Q(x)-4V(x)$. For $\mu \in \mathcal{M}^1(\Sigma)$,
$$\int _{\Sigma} g_V(x) \mu (dx)\leq E(\mu ).$$

(P2)  For a sequence $(\mu _n)$ in $\mathcal{M}^1(\Sigma)$ converging to a measure $\mu $
for the tight topology,
$$E(\mu )\leq \liminf_{n\to \infty } E(\mu _n).$$
This means that the map
$$\mu \mapsto E(\mu ),\quad \mathcal{M}^1(\Sigma) \to \mathbb{R}.$$
is lower semicontinuous.

(P3)  As a consequence of Prokhorov's criterium, for $C> E^*_\Sigma$, the set
$$M_C=\{\mu \in \mathcal{M}^1(\Sigma) \mid E(\mu )\leq C\}.$$
is compact for the tight topology.

\bigskip

\noindent
\begin{proof}[Proof of Theorem~{\upshape\ref{thm3}}]

\medskip

We follow the method of proof of Theorem 6.27 in \cite{10}. (See also Theorem I.2 in \cite{11}).

\medskip

a) {\it Existence of the equilibrium measure}

\medskip

We first prove that, for $C> E^*_\Sigma$, the set
$$M_{C}=\{\mu \in \mathcal{M}^1(\Sigma) \mid E(\mu )\leq C\}$$
is compact. By property (P2) $M_C$ is closed.
Since, $V$ is a negative definite function. Then, $|V(x-y)|\leq |V(x)|+|V(y)|+2\sqrt{|V(x)}\sqrt{|V(y)|}\leq 2(|V(x)|+|V(y)|) $. Then,
\begin{equation}\label{k}k(x,y)\geq {1\over 2}(g_V(x)+g_V(y)).\end{equation}
Therefore, for $\mu\in M_C$,
$$\int_\Sigma g_V(x)\leq C.$$
Since, $\lim_{|x|\to\infty }g_V(x)=+\infty$. By Prokhorov Criterium the set $M_C$ is compact.

%Since $M_{C,\delta }\subset M_C$, the set $M_{C,\delta}$ is relatively compact.
%Let $(\mu _n)$ be a sequence in $M_{C,\delta }$ converging to a measure $\mu $.
%By the property (3), the measure $\mu \in M_C$. Being of finite energy, the measure $\mu $ has no atom. In particular $0$ is not an atom. It follows that
%$$\mu ([0,\infty [)=\lim _{n\to \infty }\mu _n([0,\infty [).$$
%Hence $\mu \in \mathcal{M}_{\delta }^1(\mathbb{R})$. This proves that $M_{C,\delta }$ is closed.

The map $\mu \mapsto E(\mu )$ is lower semicontinuous, therefore reaches its infimum
on the compact set $M_{C}$.

\bigskip

b) {\it The support of a measure $\mu \in \mathcal{M}^1(\Sigma)$ with
$E(\mu )=E^*_\Sigma$ is compact.}

\medskip

%We will show that there is $a>0$ such that $\mu ([a,\infty [)=0$. We decompose
%the measure $\mu $ as $\mu =\mu _1+\mu _2$, with,
For a measurable set $B\subset \Sigma$, let $\chi$ be its characteristic function.
%$$\mu _1(B)=\mu (B\cap [0,\infty [),\quad \mu _2(B)=\mu (B\cap ]-\infty ,0[).$$We fix $A=[a,\infty [$ with $a>0$.
For $t>-1$, define
$$\mu _t=\frac{1+t\chi }{1+t\mu (A)}\mu.$$
The measure $\mu _t$ belongs to $\mathcal{M}^1(\Sigma)$, and equals $\mu $
for $t=0$. The function $t\mapsto E(\mu _t)$ reaches its minimum for $t=0$. Hence
$$\frac{d}{dt}E(\mu _t)\big\vert_{t=0}=0.$$
Let us compute $E(\mu _t)$:
\begin{eqnarray*}
E(\mu _t)
&=&\int _{\Sigma^2} k(x,y)\frac{(1+t\delta \chi (x))(1+t\delta \chi (y))}
{(1+t\mu (A))^2}\mu (dx)\mu  (dy),
\end{eqnarray*}
and its derivative at $t=0$:
\begin{eqnarray*}
\frac{d}{dt} E(\mu _t)\big\vert_{t=0}
&=&\int _{\Sigma^2} k(x,y)\bigl(\chi (x)+\chi (y)\bigr)\mu(dx)\mu(dy)
-2\mu (A)E(\mu _1)\int_{\Sigma^2} k(x,y)\mu(dx)\mu(dy).
\end{eqnarray*}
By using the inequality \eqref{k},
we obtain
$$\int _A \Bigl(g_V(x)+ \int _{\Sigma} g_V(y)\mu(dy) -2E(\mu)\Bigr)
\mu(dx)\leq 0.$$
Since $g_V(x)\to \infty $ as $x\to \infty $, choosing $a$ sufficiently large, then for $|x|\geq a$.
$$g_V(x)+ \int _{\Sigma} g_V(y)\mu(dy) -2E(\mu)>0.$$
Let $A=\Sigma\backslash[-a,a]$, then $\mu(A)=0$ and the support of $\mu$ is a subset of $[-a,a]$.

\bigskip

c) {\it Unicity of the equilibrium measure}

%One uses

\medskip

 Consider two probability measures $\mu _0$ and $\mu _1$ with compact support, and, for $t\in \mathbb{R}$,
$$\mu _t=(1-t)\mu _0+t\mu _1.$$
Then
$${d^2\over dt^2}E(\mu _t)=-2\int _{\mathbb{R}^2} V( x-y)\nu (dx)\nu (dy),$$
with $\nu =\mu _1-\mu _0$.
It follows from the hypothesis %Lemma~\ref{lem2}
that the map $\mu \mapsto E(\mu )$, defined on the set
of probability measures with compact support,
is strictly convex. This implies uniqueness.
\end{proof}
\section{Determination of the equilibrium measure}\label{sec5}

Observing that the equilibrium measure $\mu^*_\Sigma$ is essentially a critical point for the energy, we will prove:

\begin{proposition}\label{p3} We keep the same hypothesis of Theorem \ref{thm3}.
Let $\mu $ be a probability measure. Assume that its potential $U^{\mu }$ is continuous and that there exists a constants $C$ such that
\begin{eqnarray*}
(1)\quad &U^{\mu } (x)+{1\over 2} Q(x)&=C\ {\rm on}\  {\rm supp}(\mu ),\cr
(2)\quad &U^{\mu } (x)+{1\over 2} Q(x)&\geq C\  {\rm on}\ \Sigma\backslash {\rm supp}(\mu ).
\end{eqnarray*}
Then $\mu $ is the equilibrium measure $\mu^*_\Sigma$.
\end{proposition}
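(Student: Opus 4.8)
The plan is to exploit the exact quadratic structure of the energy functional together with the strict convexity already established in the proof of Theorem~\ref{thm3}, comparing the candidate $\mu$ against the equilibrium measure $\mu^*_\Sigma$ whose existence, uniqueness, and compact support are guaranteed by that theorem. First I would record the first-variation identity: for a signed measure $\nu$ with $\nu(\Sigma)=0$ and the interpolation $\mu_t=\mu+t\nu$, the fact that $E$ depends quadratically on $\mu$ yields the exact expansion
$$E(\mu_t)=E(\mu)+2t\int_\Sigma\Bigl(\tfrac12 Q(x)+U^\mu(x)\Bigr)\,\nu(dx)-t^2\int_{\Sigma^2}V(x-y)\,\nu(dx)\,\nu(dy),$$
where the linear coefficient is computed by expanding the double integral and using $U^\mu(x)=-\int V(x-y)\,\mu(dy)$ together with the symmetry of $V$, while the quadratic coefficient is exactly the second-derivative term already computed in part (c) of the proof of Theorem~\ref{thm3}.

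Next I would take $\nu=\mu^*_\Sigma-\mu$ (so that $\nu(\Sigma)=0$) and analyze the two coefficients separately. For the linear term, conditions (1) and (2) together say that $\tfrac12 Q+U^\mu\geq C$ throughout $\Sigma$ with equality on $\mathrm{supp}(\mu)$; hence $\int(\tfrac12 Q+U^\mu)\,d\mu^*_\Sigma\geq C$ because $\mu^*_\Sigma$ is a probability measure, while $\int(\tfrac12 Q+U^\mu)\,d\mu=C$ because the integrand is constantly $C$ on $\mathrm{supp}(\mu)$, so the linear coefficient $2\int(\tfrac12 Q+U^\mu)\,d\nu$ is $\geq 0$. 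For the quadratic term, negative definiteness of $V$ gives $-\int\int V\,d\nu\,d\nu\geq 0$, and admissibility upgrades this to a strict inequality whenever $\nu\neq 0$ and $\nu$ has compact support. Setting $t=1$ then gives $E(\mu^*_\Sigma)\geq E(\mu)$; since $\mu^*_\Sigma$ minimizes $E$ we also have $E(\mu^*_\Sigma)\leq E(\mu)$, forcing $E(\mu)=E^*_\Sigma$ and, by the uniqueness part of Theorem~\ref{thm3}, $\mu=\mu^*_\Sigma$.

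The main obstacle is the handling of supports and integrability: to invoke admissibility, which is stated only for compactly supported signed measures, I need $\nu=\mu^*_\Sigma-\mu$ to have compact support, which requires knowing that $\mu$ itself is compactly supported, and I must justify the splitting of the integrals (an application of Fubini and the finiteness of $\int Q\,d\mu$) used in the expansion. The continuity of $U^\mu$ and the normalization $\int(\tfrac12 Q+U^\mu)\,d\mu=C$ keep these integrals finite on $\mathrm{supp}(\mu)$. Even if the compact support of $\mu$ is not assumed outright, negative definiteness still delivers $-\int\int V\,d\nu\,d\nu\geq 0$ by approximating $\nu$ by atomic measures, so the inequality $E(\mu^*_\Sigma)\geq E(\mu)$ persists and the identification $\mu=\mu^*_\Sigma$ can be recovered from the uniqueness of the minimizer rather than from the strict inequality in the quadratic term.
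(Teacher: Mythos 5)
Your proposal is correct and follows essentially the same route as the paper: expand $E(\mu^*_\Sigma)=E(\mu+\nu)$ with $\nu=\mu^*_\Sigma-\mu$, observe that the linear term is $\geq 0$ by conditions (1)--(2) (equality on $\mathrm{supp}(\mu)$, inequality elsewhere) and the quadratic term $-\int\!\!\int V\,d\nu\,d\nu$ is $\geq 0$ by negative definiteness, then conclude $E(\mu^*_\Sigma)\geq E(\mu)$ and invoke minimality plus uniqueness from Theorem~\ref{thm3}. The only difference is that you flag the integrability and compact-support technicalities explicitly, which the paper passes over in silence.
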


\begin{proof}
By using the formula
$$E(\mu +\nu )
=E(\mu )+2\int _{\Sigma}\bigl(U^{\mu } (x)+{1\over 2} Q(x)\bigr)\nu (dx)$$
with $\nu =\mu^* -\mu $, we obtain
$$E(\mu^*_\Sigma)=E(\mu )+2\int _{\Sigma} \bigl(U^{\mu } (x)+{1\over 2}Q(x)\bigr)
(\mu^*_\Sigma-\mu )(dx)
-\int _{\mathbb{R}^2} V(x-y)
(\mu^*_\Sigma-\mu )(dx)(\mu^*_\Sigma-\mu )(dy).$$
The last integral is $\geq 0$, and
\begin{eqnarray*}
\int _{\Sigma}\bigl(U^{\mu } (x)+{1\over 2}Q(x)\bigr)\mu^*_\Sigma(dx)
&\geq &C,\cr
\int _{\Sigma}\bigl(U^{\mu } (x)+{1\over 2}Q(x)\bigr)\mu (dx)
&=&C.\cr
\end{eqnarray*}
Therefore $E(\mu^*_\Sigma)\geq E(\mu )$, and $\mu =\mu^*_\Sigma$
by Theorem~\ref{thm3}.

\end{proof}
\section{Examples}
\begin{proposition} For $k<2$, the function
$$V_k(x)=\left\{\begin{aligned}&{\rm sing}(k)|x|^k,\;k\neq 0,\\ &\log|x|,\quad\;\;\;k=0,\end{aligned}\right.$$ is admissible on $\Bbb R^*$.
\end{proposition}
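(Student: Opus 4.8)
The plan is to verify the three clauses in the definition of an admissible function for $V_k$: lower semicontinuity, negative definiteness, and the strict energy inequality $\int_{\Sigma^2}V_k(x-y)\,\mu(dx)\mu(dy)<0$ for every nonzero compactly supported signed measure $\mu$ with $\mu(\Sigma)=0$. Lower semicontinuity is immediate, since on $\mathbb{R}^*$ each $V_k$ is continuous. The unifying idea for the other two clauses is to write $V_k$ as a superposition, against a positive weight, of translation-invariant positive-definite kernels; after pairing with a neutral measure the ``constant'' part of the kernel is annihilated by $\mu(\Sigma)=0$, and what remains is $-1$ times a manifestly non-negative Fourier integral of $|\hat\mu|^2$. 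I would split into the ranges $0<k<2$, $k=0$ and $k<0$ (the latter including the strongly singular case $k\le -1$).

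For $0<k<2$ I would reduce to Corollary~\ref{lem2} rather than compute directly: it is enough to show that $|x|^k$ is a nonzero continuous negative definite function. Continuity is clear, and negative definiteness follows from the classical representation $|x|^k=\frac{C_k}{2}\int_{\mathbb{R}}\bigl(1-\cos(x\xi)\bigr)|\xi|^{-k-1}\,d\xi$, where $C_k=\bigl(\int_0^\infty(1-\cos u)\,u^{-k-1}\,du\bigr)^{-1}>0$ and the defining integral converges exactly because $0<k<2$. Substituting $x\mapsto x_i-x_j$ and summing with $\sum_i c_i=0$, the constant term drops and the cosine term reassembles as $\bigl|\sum_i c_i e^{ix_i\xi}\bigr|^2\ge 0$, so $\sum_{i,j}c_i\overline{c_j}|x_i-x_j|^k=-\frac{C_k}{2}\int_{\mathbb{R}}\bigl|\sum_i c_i e^{ix_i\xi}\bigr|^2|\xi|^{-k-1}\,d\xi\le 0$. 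Corollary~\ref{lem2} then delivers admissibility.

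For $k\le 0$ the kernel is singular at the origin, so Corollary~\ref{lem2} no longer applies and I would work directly with neutral measures, replacing the $(1-\cos)$ representation by a subordination formula built on the Gaussian. For $k=0$ I would use $\log|x|=\frac12\int_0^\infty t^{-1}\bigl(e^{-t}-e^{-tx^2}\bigr)\,dt$, and for $k<0$, writing $s=-k>0$, the identity $|x|^{-s}=\Gamma(s/2)^{-1}\int_0^\infty t^{s/2-1}e^{-tx^2}\,dt$. In each case one substitutes $x\mapsto x-y$, integrates against $\mu\otimes\mu$, and uses $\int e^{-t(x-y)^2}\mu(dx)\mu(dy)=(4\pi t)^{-1/2}\int_{\mathbb{R}}|\hat\mu(\xi)|^2 e^{-\xi^2/4t}\,d\xi\ge 0$ together with $\int 1\,\mu(dx)\mu(dy)=\mu(\Sigma)^2=0$, which deletes the $e^{-t}$ term when $k=0$. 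This renders the energy equal to $-1$ times a non-negative iterated integral of $|\hat\mu(\xi)|^2$, vanishing only if $\hat\mu\equiv 0$, i.e. $\mu=0$; hence the energy is strictly negative for $\mu\ne 0$. Negative definiteness follows from the same representations applied to discrete configurations; for $k\le 0$ it is in fact immediate, since $V_k(0)=-\infty$ already forces each sum $\sum_{i,j}c_i\overline{c_j}V_k(x_i-x_j)$ to $-\infty$.

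The hard part is making the $k\le 0$ computation rigorous at the diagonal. When $V_k$ is singular the double integral $\int_{\Sigma^2}V_k(x-y)\,\mu(dx)\mu(dy)$ need not converge absolutely; for $k\le -1$ the self-energy $\int|x-y|^{-s}\,|\mu|(dx)|\mu|(dy)$ is genuinely $+\infty$, so the naive expression is of the form $\infty-\infty$ and Fubini cannot be used to exchange $\int_0^\infty dt$ with $\int\mu\otimes\mu$. I would resolve this by reading the energy in the standard regularized potential-theoretic sense, namely as the non-negative iterated integral produced above: by Tonelli it is well defined in $[0,+\infty]$ because, once neutrality is used, the integrand has a fixed sign, and the inequality $<0$ then holds with value possibly $-\infty$, consistently with $V_k\le 0$ for $k<0$. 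When $-1<k$ and $\mu$ has an $L^2$ density this value agrees with the literal double integral by Fubini; the general identification, and the transfer of the conclusion to the variational arguments of Theorem~\ref{thm3} and Proposition~\ref{p3}, I would secure by truncating $\mu$ away from the diagonal (or mollifying) and passing to the limit with Fatou, the sign being preserved. The one genuinely delicate check is the endpoint analysis of the $t$- and $\xi$-integrals across the whole range $k<2$: confirming that the decay $|\hat\mu(\xi)|^2=O(\xi^2)$ supplied by neutrality tames the small-frequency endpoint, while the large-frequency (equivalently large-$t$) behaviour, which is governed by the diagonal singularity, is precisely what drives the energy to $-\infty$ in the strongly singular range $k\le -1$ without ever violating $<0$.
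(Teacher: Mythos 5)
Your proposal is correct, and although it follows the same master plan as the paper --- subordination identities plus Fourier positivity against a neutral measure --- each of your three cases is executed by a genuinely different device. For $0<k<2$ you prove negative definiteness from the L\'evy--Khinchine-type $(1-\cos)$ representation and then simply invoke Corollary~\ref{lem2}; the paper instead recomputes admissibility directly from the Gaussian subordination $|x|^k=c(k)\int_0^\infty\bigl(1-e^{-x^2t^2}\bigr)\,t^{-k-1}\,dt$, so your reduction is shorter and makes explicit that this range is already covered by the general corollary. For $k<0$ the paper regularizes with $\bigl(|x|+\tfrac1n\bigr)^k$ via the exponential Gamma-integral and the Cauchy kernel, then lets $n\to\infty$ by monotone convergence; you use the Gaussian subordination $|x|^{-s}=\Gamma(s/2)^{-1}\int_0^\infty t^{s/2-1}e^{-tx^2}\,dt$, and --- a genuine improvement --- you confront the failure of absolute convergence for $k\le-1$ that the paper silently passes over, reading the energy in the regularized iterated sense (Tonelli on the non-negative Fourier side, truncation and Fatou for the identification), with value possibly $-\infty$, which still satisfies the strict inequality; your shortcut for negative definiteness via $V_k(0)=-\infty$ is legitimate after the standard merging of coincident points. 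For $k=0$ you argue directly from the Frullani formula $\log|x|=\tfrac12\int_0^\infty t^{-1}\bigl(e^{-t}-e^{-tx^2}\bigr)\,dt$, neutrality deleting the $e^{-t}$ term and the bound $|\hat\mu(\xi)|^2=O(\xi^2)$ controlling the $t\to0$ endpoint, whereas the paper takes the monotone limit $k\to0^-$ of $(|x-y|^k-1)/k$; note that your route yields the correct sign $\int\log|x-y|\,\mu(dx)\,\mu(dy)<0$, and the paper's final displayed inequality ``$>0$'' is in fact a sign slip, since admissibility requires $<0$ and the paper's own limiting argument gives $\frac1k\int|x-y|^k\,\mu(dx)\,\mu(dy)<0$ for $k<0$, hence a nonpositive limit. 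In short, your version is equivalent in substance and, at the diagonal singularity, more careful than the published proof.
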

\begin{proof} The function $V_k$ is lower semicontinuous.

Let \( k \in (0, 2) \). The function \( |x|^k \) admits the integral representation:
\[
|x|^k = c(k) \int_0^\infty \left(1 - e^{-x^2 t^2}\right) \frac{dt}{t^{k+1}},
\]
where the constant \( c(k) \) is given by:
\[
c(k) = \frac{k}{\Gamma\left(1 - \frac{k}{2}\right)}.
\]
This representation implies that the function \( V_k(x) = |x|^k \) is negative definite on \( \mathbb{R} \). Consequently, for any compactly supported signed measure \( \mu \) on \( \mathbb{R} \), we have:
\[
\int_{\mathbb{R}^2} V_k(x - y) \, \mu(dx) \, \mu(dy) = -c(k) \int_0^\infty \int_{\mathbb{R}^2} e^{-(x - y)^2 t^2} \, \mu(dx) \, \mu(dy) \, \frac{dt}{t^{k+1}}.
\]
Using the identity:
\[
e^{-(x - y)^2 t^2} = \frac{1}{\sqrt{\pi}} \int_\mathbb{R} e^{-u^2} e^{-2i(x - y)t u} \, du,
\]
we can rewrite the double integral as:
\[
\int_{\mathbb{R}^2} e^{-(x - y)^2 t^2} \, \mu(dx) \, \mu(dy) = \frac{1}{\sqrt{\pi}} \int_\mathbb{R} e^{-u^2} |\hat{\mu}(2t u)|^2 \, du > 0,
\]
where \( \hat{\mu} \) denotes the Fourier transform of \( \mu \). This positivity implies:
\[
\int_{\mathbb{R}^2} V_k(x - y) \, \mu(dx) \, \mu(dy) < 0.
\]

For \( k < 0 \) and \( n \geq 1 \), the function \( \left(|x| + \frac{1}{n}\right)^k \) can be expressed as:
\[
\left(|x| + \frac{1}{n}\right)^k = \frac{1}{\Gamma(-k)} \int_0^\infty e^{-\left(|x| + \frac{1}{n}\right)t} \frac{dt}{t^{k+1}} = \frac{1}{\pi \Gamma(-k)} \int_0^\infty \int_\mathbb{R} e^{i s x t} \frac{e^{-\frac{t}{n}}}{t^{k+1}} \frac{1}{1 + s^2} \, ds \, dt.
\]
Since the measure \( \mu \) is compactly supported, Fubini's theorem allows us to interchange integrals, yielding:
\[
\int_{\mathbb{R}^2} \left(|x - y| + \frac{1}{n}\right)^k \, \mu(dx) \, \mu(dy) = \frac{1}{\pi \Gamma(-k)} \int_0^\infty \int_\mathbb{R} |\hat{\mu}(t s)|^2 \frac{e^{-\frac{t}{n}}}{t^{k+1}} \frac{1}{1 + s^2} \, ds \, dt > 0.
\]
By applying the monotone convergence theorem, we obtain:
\[
\int_{\mathbb{R}^2} V_k(x - y) \, \mu(dx) \, \mu(dy) > 0.
\]

The case \( k = 0 \) is treated by taking the limit as \( k \to 0^- \). First, observe that the function:
\[
k \mapsto \frac{|x - y|^k - 1}{k}
\]
is monotonically increasing on \( (-1, 0) \) for all \( x \neq y \), and satisfies:
\[
\frac{|x - y|^k - 1}{k} \leq \log|x - y|.
\]
The right-hand side is continuous. Thus, for every compactly supported measure \( \mu \) with \( \mu(\mathbb{R}) = 0 \), we have:
\[
\int_{\mathbb{R}^2} \log|x - y| \, \mu(dx) \, \mu(dy) > 0.
\]

\end{proof}
\begin{theorem} Let \( k < 2 \), and let \( Q: \mathbb{R} \to \mathbb{R} \) be a lower semicontinuous function satisfying
\[
\lim_{|x| \to \infty} \left( Q(x) - 4|x|^k \right) = +\infty.
\]
Consider the energy functional
\[
E_k(\mu) = \int_{\mathbb{R}} Q(x) \, \mu(dx) - {\rm sign}(k) \int_{\mathbb{R}^2} |x-y|^k \, \mu(dx) \, \mu(dy),
\]
where \( \mu \) is a probability measure on \( \mathbb{R} \). There exists a unique probability measure \( \mu_k \) that minimizes \( E_k(\mu) \), i.e.,
\[
\inf_{\mu \in \mathcal{M}^1(\mathbb{R})} E_k(\mu) = E_k(\mu_k) = E_k^*.
\]
Moreover, the support of \( \mu_k \) is compact.

\end{theorem}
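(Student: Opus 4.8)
The plan is to deduce this theorem as a direct specialization of Theorem~\ref{thm3}, taking the interaction potential to be $V=V_k$. The two ingredients required there are the admissibility of $V$ and the confinement condition on $Q$, and both are essentially already in hand.

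First I would observe that $E_k$ coincides with the energy $E$ of Theorem~\ref{thm3} once we set $V(x)=V_k(x)={\rm sign}(k)|x|^k$, since then $-{\rm sign}(k)\int_{\mathbb{R}^2}|x-y|^k\,\mu(dx)\mu(dy)=-\int_{\mathbb{R}^2}V_k(x-y)\,\mu(dx)\mu(dy)$. By the admissibility Proposition established just above, $V_k$ is admissible on $\mathbb{R}^*$: it is lower semicontinuous, negative definite, and satisfies $\int_{\mathbb{R}^2}V_k(x-y)\,\mu(dx)\mu(dy)<0$ for every compactly supported signed measure $\mu$ with $\mu(\mathbb{R})=0$. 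This is exactly the hypothesis on $V$ in Theorem~\ref{thm3}.

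Next I would verify the confinement condition $\lim_{|x|\to\infty}\bigl(Q(x)-4V_k(x)\bigr)=+\infty$ required by Theorem~\ref{thm3}, distinguishing the sign of $k$. For $k\in(0,2)$ we have $V_k(x)=|x|^k$, so this is literally the stated hypothesis $Q(x)-4|x|^k\to+\infty$. For $k<0$ we have $V_k(x)=-|x|^k\to 0$ as $|x|\to\infty$, hence $Q(x)-4V_k(x)=Q(x)+4|x|^k$; the stated hypothesis forces $Q(x)\to+\infty$ (because $|x|^k\to 0$), which gives the same divergence. In the borderline logarithmic case $k=0$ one reads $V_0=\log|\cdot|$ and interprets the confinement hypothesis as $Q(x)-4\log|x|\to+\infty$. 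Thus in every regime the confinement condition of Theorem~\ref{thm3} holds.

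With the admissibility of $V_k$ and the confinement condition in place, Theorem~\ref{thm3} applies directly and produces a unique minimizer $\mu_k$ of $E_k$ over $\mathcal{M}^1(\mathbb{R})$ with compact support, which is the assertion. The only point deserving a brief check is that Theorem~\ref{thm3} was stated for continuous $Q$, whereas here $Q$ is merely lower semicontinuous; but its proof uses $Q$ only through properties (P1)--(P3) and the strict-convexity argument, all of which require no more than lower semicontinuity, so nothing is lost. I expect the main obstacle to be precisely this bookkeeping: matching the hypothesis written with $|x|^k$ to the condition $Q-4V_k\to+\infty$ across the three sign regimes, and confirming that lower semicontinuity of $Q$ suffices in place of continuity.
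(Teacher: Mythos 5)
Your proposal is correct and follows exactly the route the paper intends: the theorem is stated immediately after the admissibility proposition for $V_k$ and carries no separate proof, being a direct specialization of Theorem~\ref{thm3} with $V=V_k$, which is precisely your argument. Your extra bookkeeping (checking the confinement condition $Q-4V_k\to+\infty$ in the three sign regimes of $k$, and noting that lower semicontinuity of $Q$ suffices in the proof of Theorem~\ref{thm3}) only makes explicit what the paper leaves implicit.
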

The uniqueness of the equilibrium measure is not guaranteed for $k=2$.
\begin{proposition}\label{paa}
Let \( Q: \mathbb{R} \to \mathbb{R} \) be a twice differentiable and convex function satisfying
\[
\lim_{|x| \to \infty} \left( Q(x) - 4|x| \right) = +\infty.
\]
The unique probability measure \( \mu^* \) that minimizes the energy functional
\[
E(\mu) = \int_{\mathbb{R}} Q(x) \, \mu(dx) - \int_{\mathbb{R}^2} |x-y| \, \mu(dx) \, \mu(dy)
\]
admits a density \( f \) with respect to the Lebesgue measure given by
\[
f(x) = \frac{1}{2} Q''(x) \chi_{[a,b]}(x),
\]
where \( a \) and \( b \) are constants determined by the conditions
\[
Q'(b) - Q'(a) = 2 \quad \text{and} \quad Q(b) - Q(a) = a + b.
\]
\end{proposition}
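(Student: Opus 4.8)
The plan is to sidestep a direct minimisation of $E$ and instead invoke the verification criterion of Proposition~\ref{p3}. Since $V(x)=|x|$ is admissible and $Q$ satisfies the growth hypothesis, Theorem~\ref{thm3} already guarantees that a unique equilibrium measure $\mu^*$ exists with compact support; thus it suffices to exhibit one explicit measure $\mu$ whose potential $U^{\mu}$ is continuous and which satisfies conditions (1)--(2) of Proposition~\ref{p3}, for then $\mu=\mu^*$ automatically. The whole problem therefore reduces to guessing the correct density and verifying the two pointwise conditions.

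To guess the density I would argue formally on the (a priori unknown) support, an interval $[a,b]$. On it condition (1) reads $U^{\mu}(x)+\tfrac12 Q(x)=C$, where $U^{\mu}(x)=-\int_a^b|x-y|f(y)\,dy$. Because the distributional second derivative of $x\mapsto|x|$ is $2\delta_0$, differentiating this relation twice in the interior (splitting the integral at $y=x$ and differentiating under the integral sign) cancels all the terms coming from $Q'$ and leaves an identity forcing $f$ to be a constant multiple of $Q''$, which is the profile recorded in the statement. This is the decisive place where convexity of $Q$ is used: $Q''\ge 0$ makes $f$ nonnegative, so the candidate is a genuine probability density, and $f$ is bounded on the compact interval, which makes $U^{\mu}$ continuous as Proposition~\ref{p3} demands. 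The endpoints $a$ and $b$ are then fixed by two scalar equations: the normalisation $\int_a^b f\,dx=1$ gives a relation between $Q'(a)$ and $Q'(b)$, and the requirement that $U^{\mu}+\tfrac12 Q$ take one and the same constant value $C$ at both ends (equivalently that its derivative vanish throughout $(a,b)$) eliminates $C$ and produces the second relation; together these are the two conditions on $(a,b)$ in the statement. Their solvability and uniqueness follow because convexity makes $Q'$ nondecreasing while the hypothesis $Q(x)-4|x|\to+\infty$ forces $Q'$ to range widely enough to meet both levels with $a<b$.

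The step I expect to be the main obstacle is condition (2), the inequality $U^{\mu}(x)+\tfrac12 Q(x)\ge C$ for $x\notin[a,b]$, since there one must control the potential globally rather than locally. The useful observation is that outside the support the kernel loses its absolute value: for $x>b$ one has $|x-y|=x-y$ for all $y\in[a,b]$, so $U^{\mu}(x)=\bar x-x$ is affine (with $\bar x$ the mean of $\mu$) and $H(x):=U^{\mu}(x)+\tfrac12 Q(x)$ has $H'(x)=-1+\tfrac12 Q'(x)$. The right-endpoint condition is precisely the one making $H'(b)=0$, and then monotonicity of $Q'$ (convexity again) keeps $H'\ge 0$ for $x>b$, so $H$ is nondecreasing and $H(x)\ge H(b)=C$; the mirror-image computation on $(-\infty,a]$, where $U^{\mu}(x)=x-\bar x$ and $H'(x)=1+\tfrac12 Q'(x)\le 0$, handles the left tail. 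Once (1) and (2) are in place, Proposition~\ref{p3} identifies the candidate as $\mu^*$ and the proof is complete, the only remaining items being the routine justifications of differentiation under the integral sign and of the continuity of $U^{\mu}$.
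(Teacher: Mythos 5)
Your overall strategy --- construct the candidate measure, verify conditions (1)--(2) of Proposition \ref{p3}, then conclude by the uniqueness in Theorem \ref{thm3} --- is sound, and it is actually more complete than the paper's own argument, which only derives necessary conditions on the support and never checks the off-support inequality (2). The genuine gap is that you never carry the constants through your own computation, and the constants are exactly where the proof breaks. You correctly record that the distributional second derivative of $|x|$ is $2\delta_0$; differentiating condition (1), $\tfrac12 Q(x)-\int_a^b|x-y|f(y)\,dy=C$, twice therefore gives $\tfrac12 Q''(x)-2f(x)=0$, i.e. $f=\tfrac14 Q''$, \emph{not} the $f=\tfrac12 Q''$ that you accept as ``the profile recorded in the statement.'' The mismatch propagates into your endpoint conditions and makes them mutually inconsistent: with the stated density $\tfrac12 Q''$ one finds, for $x\in(a,b)$,
\[
H'(x)=\tfrac12 Q'(x)-\int_a^x f(y)\,dy+\int_x^b f(y)\,dy=\tfrac12\bigl(Q'(a)+Q'(b)-Q'(x)\bigr),
\]
which is non-constant (unless $Q''\equiv 0$ on $(a,b)$, in which case $f$ is not a probability density at all), so condition (1) of Proposition \ref{p3} cannot hold for this candidate; meanwhile your tail argument needs $H'(b^+)=\tfrac12Q'(b)-1=0$ and $H'(a^-)=\tfrac12Q'(a)+1=0$, i.e. $Q'(b)=2$ and $Q'(a)=-2$, which is inconsistent with the normalization $Q'(b)-Q'(a)=2$ forced by the density $\tfrac12Q''$. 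Contrary to what you assert, $Q'(b)=2$ is not implied by the statement's conditions $Q'(b)-Q'(a)=2$ and $Q(b)-Q(a)=a+b$.

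In fact, a careful execution of your own plan disproves the proposition as stated rather than proving it: with $f=\tfrac14 Q''$ one gets $H'\equiv\tfrac14\bigl(Q'(a)+Q'(b)\bigr)$ on $(a,b)$, so condition (1) forces $Q'(a)+Q'(b)=0$, normalization forces $Q'(b)-Q'(a)=4$, hence $Q'(a)=-2$, $Q'(b)=2$ --- precisely what your tail estimates require --- and the equilibrium measure is $\tfrac14Q''\chi_{[a,b]}$. Test $Q(x)=x^2$: the statement yields density $1$ on an interval of length $1$, e.g. $[-\tfrac12,\tfrac12]$, with energy $-\tfrac14$, whereas the measure $\tfrac12\chi_{[-1,1]}(x)\,dx$ satisfies both conditions of Proposition \ref{p3} (indeed $\tfrac12x^2-\int_{-1}^1|x-y|\,\tfrac12\,dy=-\tfrac12$ on $[-1,1]$ and $\tfrac12x^2-|x|\ge-\tfrac12$ outside) and has strictly smaller energy $-\tfrac13$. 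For completeness: the paper's proof commits the same factor-of-two slip and, in addition, an algebra error at the endpoint evaluation (it replaces $\int_a^b(a+b-2y)Q''(y)\,dy$ by $(a+b)\int_a^bQ''(y)\,dy$), which is how it arrives at $Q(b)-Q(a)=a+b$; your verification route, executed with the correct constants, is the way to obtain the corrected statement.
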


\begin{proof}
Assume \( \mu(dx) = f(x) \, dx \). Differentiating the relation from Proposition \eqref{p3} twice yields, for \( x \in [a, b] \),
\[
f(x) = \frac{1}{2} Q''(x).
\]
Since \( \mu^* \) is a probability measure, we have \( Q'(b) = Q'(a) + 2\). For \( x \in [a, b] \), the equilibrium condition implies
\[
\frac{1}{2} Q(x) - \frac{1}{2} \int_a^b |x-y| Q''(y) \, dy = C,
\]
where \( C \) is a constant. Evaluating this at \( x = a \) and \( x = b \), we obtain
\[
\frac{1}{2} Q(b) - \frac{1}{2} \int_a^b (b-y) Q''(y) \, dy = \frac{1}{2} Q(a) - \frac{1}{2} \int_a^b (y-a) Q''(y) \, dy.
\]
Simplifying, we find
\[
\frac{1}{2} (Q(b) - Q(a)) - \frac{b + a}{2} \int_a^b Q''(y) \, dy = 0.
\]
Using \( \int_a^b Q''(y) \, dy = Q'(b) - Q'(a) \), this reduces to
\[
\frac{1}{2} (Q(b) - Q(a)) - \frac{b + a}{2} (Q'(b) - Q'(a)) = 0.
\]
Substituting \( Q'(b) - Q'(a) = 2 \), we conclude
\[
Q(b) - Q(a) = a + b.
\]
This completes the proof.
\end{proof}
\begin{Example}
For \( Q(x) = x^2 + \lambda x + \gamma \), the constants \( a \) and \( b \) satisfy \( b = a + 1 \). Substituting into the conditions \( Q'(b) - Q'(a) = 2 \) and \( Q(b) - Q(a) = a + b \), we obtain
\[
a = \frac{\lambda - 1}{2}, \quad b = \frac{\lambda + 1}{2}.
\]
Thus, the minimizing measure \( \mu^* \) has the density
\[
\mu^*(dx) = \chi_{[(\lambda - 1)/2, (\lambda + 1)/2]}(x) \, dx.
\]
\end{Example}

%For $\lambda\in(-1,1),k\in\Bbb N$
%$$\int_{-1}^1\frac{C_k^{\lambda/2}(y)}{|x-y|^\lambda(1-y^2)^{\frac{1-\lambda}2}}dy=
%\frac{\pi\Gamma(k+\lambda)}{\cos(\pi\lambda/2)\Gamma(\lambda)k!}C_k^{\lambda/2}(x),\quad|x|<1.$$
%Then for $a>0$
%$$\int_{-a}^a\frac{C_n^{\lambda/2}(y/a)}{|x-y|^\lambda(a^2-y^2)^{\frac{1-\lambda}2}}dy=
%\frac{\pi\Gamma(n+\lambda)}{\cos(\pi\lambda/2)\Gamma(\lambda)n!}C_n^{\lambda/2}(x/a)
%,\quad|x|<a.$$
%For $n=1$ and $\lambda\in(-1,1)$, we get
%$$\int_{-a}^a\frac{ y}{|x-y|^\lambda(a^2-y^2)^{\frac{1-\lambda}2}}dy=
%\frac{\pi \lambda}{\cos(\pi\lambda/2)} x,\quad|x|<a,$$
%Integrate by part yields
%$$-\frac{\cos(\pi\lambda/2)}{(1+\lambda)\pi \lambda}\int_{-a}^a\frac{x- y}{|x-y|^{2+\lambda}}(a^2-y^2)^{\frac{1+\lambda}2}dy=
% x,\quad|x|<a,$$
%and for $\lambda\in(-1,1)$
%$$\int_{-a}^a|x-y|^\lambda\frac{ y}{(a^2-y^2)^{\frac{1+\lambda}2}}dy=-
%\frac{\pi \lambda}{\cos(\pi\lambda/2)} x,\quad|x|<a.$$
% In our case we have,
%the solution of the equation
%$$\alpha \int_\Bbb R\frac{x-y}{|x-y|^{2-k}}\mu(dy)=x.$$
%For $k\in(0,1)$, we get
%$$g_{\alpha,k}(y)=\frac{\cos(\pi k/2)}{k(1-k)\alpha\pi }(a^2-y^2)^{\frac{1-k}2}\chi_{[-a,a]}(y)$$
%with the condition on $a$
%$$\frac{\sqrt{\pi } a^{2-k} \Gamma \left(\frac{3}{2}-\frac{k}{2}\right)}{\Gamma \left(2-\frac{k}{2}\right)}\frac{\cos(\pi k/2)}{k(1-k)\alpha\pi }=1,$$

\section{Generalized Ullman density}
 For \( k \in (-1, 2) \), define the function \( V_k(x) \) as follows:

\[
V_k(x) =
\begin{cases}
\text{sign}(k) |x|^k, & k \neq 0, \\
\log|x|, & k = 0.
\end{cases}
\]

\begin{lemma}\label{li1}

Let \( a < 1 \), \( b > -1 \), and \( c \in \mathbb{R} \), with the condition \( 2a + b - c > 0 \). Define the function \( F(a, b, c) \) as:

\[
F(a, b, c) = \int_{-1}^1 |x|^b \int_1^{1/|x|} \frac{u^c}{(u^2 - 1)^a} \, du \, dx.
\]

Then,

\[
F(a, b, c) = \frac{\Gamma(1 - a) \Gamma\left(\frac{2a + b - c}{2}\right)}{(b + 1) \Gamma\left(\frac{2 + b - c}{2}\right)}.
\]
\end{lemma}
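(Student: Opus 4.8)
The plan is to exploit the evenness of the integrand in $x$, reverse the order of integration, and collapse the whole expression to a single Beta integral. First I would use that both $|x|^b$ and the inner integral depend only on $|x|$ to write
\[
F(a,b,c) = 2\int_0^1 x^b \int_1^{1/x} \frac{u^c}{(u^2-1)^a}\,du\,dx,
\]
and then note that the region of integration is $\{(x,u): u>1,\ 0<x<1/u\}$, since for $u>1$ the constraint $x<1/u$ already forces $x<1$. The integrand is nonnegative there, so Tonelli's theorem permits swapping the order:
\[
F(a,b,c) = 2\int_1^\infty \frac{u^c}{(u^2-1)^a}\left(\int_0^{1/u} x^b\,dx\right)du.
\]

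The inner integral converges exactly because $b>-1$ and equals $\frac{1}{b+1}u^{-(b+1)}$, so $F$ reduces to the one-dimensional integral
\[
F(a,b,c) = \frac{2}{b+1}\int_1^\infty \frac{u^{c-b-1}}{(u^2-1)^a}\,du.
\]
To recognise this as a Beta function I would apply the substitution $t=1/u^2$. A direct computation of the Jacobian turns the $t$-exponent into $\frac{2a+b-c}{2}-1$ and the $(1-t)$-exponent into $-a=(1-a)-1$, giving
\[
\int_1^\infty \frac{u^{c-b-1}}{(u^2-1)^a}\,du = \frac12\int_0^1 t^{\frac{2a+b-c}{2}-1}(1-t)^{(1-a)-1}\,dt = \frac12\,B\!\left(\frac{2a+b-c}{2},\,1-a\right).
\]
Invoking $B(p,q)=\Gamma(p)\Gamma(q)/\Gamma(p+q)$ and using $\frac{2a+b-c}{2}+(1-a)=\frac{2+b-c}{2}$, the prefactor $\frac{2}{b+1}$ absorbs the $\frac12$ and the stated formula falls out.

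The calculation is short, so the only real content is verifying that the three hypotheses are precisely the convergence conditions for the reduced integral. Integrability of $(u^2-1)^{-a}$ near $u=1$ holds iff $a<1$, which is also what makes $\Gamma(1-a)$ finite and the Beta parameter $1-a$ positive; integrability at $u=\infty$ requires the exponent $c-b-1-2a<-1$, i.e.\ $2a+b-c>0$, which is exactly the stated assumption and the positivity condition on the other Beta parameter $\frac{2a+b-c}{2}$. Thus $a<1$, $b>-1$ and $2a+b-c>0$ enter one per step, and I expect the only point demanding care to be the bookkeeping of exponents under $t=1/u^2$.
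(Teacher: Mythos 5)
Your proof is correct and follows essentially the same route as the paper's: both arguments rest on interchanging the order of integration (justified by positivity of the integrand) and then reducing to a single Beta integral $B\bigl(\tfrac{2a+b-c}{2},\,1-a\bigr)$. The only difference is cosmetic bookkeeping --- the paper first substitutes $z=xu$ to work on the bounded triangle $\{0<x<u<1\}$ and then separates variables, whereas you swap directly on the unbounded region $\{u>1,\ 0<x<1/u\}$, integrate out $x$, and substitute $t=1/u^{2}$; the two intermediate integrals are related by $u\mapsto 1/x$.
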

\begin{proof} The proof proceeds as follows:

\[
\begin{aligned}
F(a, b, c) &= \frac{1}{S_{a,b,c}} \int_{-1}^1 |x|^b \int_1^{1/|x|} \frac{u^c}{(u^2 - 1)^a} \, du \, dx \\
&= 2 \int_0^1 x^{b + 2a - c - 1} \int_x^1 \frac{u^c}{(u^2 - x^2)^a} \, du \, dx \\
&= 2 \int_0^1 \int_0^u x^{b + 2a - c - 1} \frac{u^c}{(u^2 - x^2)^a} \, dx \, du \\
&= 2 \int_0^1 \int_0^1 x^{2a + b - c - 1} \frac{u^{\alpha + b}}{(1 - x^2)^a} \, dx \, du \\
&= \frac{1}{\alpha + b + 1} \frac{\Gamma(1 - a) \Gamma\left(\frac{2a + b - c}{2}\right)}{\Gamma\left(\frac{2 + b - c}{2}\right)}.
\end{aligned}
\]

This completes the proof
\end{proof}

\begin{proposition}\label{pa}  \
1) For \( k \in (-1, 1) \), \(k\neq 0\) and \( \beta > 1 \), define the function
\[
f_{k,\beta}(x) = \frac{1}{C_{k,\beta}} |x|^{\beta - k - 1} \int_1^{A_{k,\beta} / |x|} \frac{u^{\beta - 1}}{(u^2 - 1)^{(1 + k)/2}} \, du,
\]
and let \( \mu_{k,\beta} \) be the probability measure supported on \([-A_{k,\beta}, A_{k,\beta}]\) with density \( f_{k,\beta} \), where
\[
C_{k,\beta} = \sqrt{\pi} \frac{|k| \, \Gamma\left(\frac{1 - k}{2}\right) \Gamma\left(\frac{1 + k}{2}\right)}{\Gamma\left(1 - \frac{k}{2}\right)} \frac{\Gamma\left(\frac{\beta - k}{2}\right)}{\beta \, \Gamma\left(\frac{\beta + 1}{2}\right)},
\]
and
\[
A_{k,\beta} = \left( \frac{2 |k| \, \Gamma\left(\frac{1 + k}{2}\right) \Gamma\left(1 + \frac{\beta - k}{2}\right)}{\beta \, \Gamma\left(\frac{\beta + 1}{2}\right)} \right)^{\frac{1}{\beta - k}}.
\]
There exists a constant \( \sigma_{k,\beta} \) such that
\[
\frac{1}{2} |x|^\beta - \int_{\mathbb{R}} V_k(x - y) \, \mu_{k,\beta}(dy) =
\begin{cases}
\sigma_{k,\beta}, & \text{if } |x| \leq A_{k,\beta}, \\
\geq \sigma_{k,\beta}, & \text{if } |x| \geq A_{k,\beta}.
\end{cases}
\]
Moreover,
\[
E^*_{k,\beta} = (A_{k,\beta})^k \frac{(\beta - k)(1 + k - 2\beta)}{\beta \sqrt{\pi} (2\beta - k)} \Gamma\left(\frac{1 + k}{2}\right) \Gamma\left(1 - \frac{k}{2}\right).
\]
For \( k = 0 \),
\[
E^*_\beta = \frac{3}{2\beta} - \log \frac{A_\beta}{2}.
\]

2) For \( k \in (1, 2) \) and \( \beta > k \), define the function
\[
g_{k,\beta}(x) = \frac{1}{C'_{k,\beta}} |x|^{\beta - k - 1} \int_1^{A'_{k,\beta} / |x|} \frac{u^{\beta - 2}}{(u^2 - 1)^{k/2}} \, du,
\]
and let \( \nu_{k,\beta} \) be the probability measure supported on \([-A'_{k,\beta}, A'_{k,\beta}]\) with density \( g_{k,\beta} \), where
\[
C'_{k,\beta} = \sqrt{\pi} \frac{k(k - 1) \, \Gamma\left(\frac{k}{2}\right) \Gamma\left(1 - \frac{k}{2}\right)}{\beta(\beta - 1) \, \Gamma\left(\frac{3 - k}{2}\right)} \frac{\Gamma\left(\frac{\beta - k}{2}\right)}{\Gamma\left(\frac{\beta}{2}\right)},
\]
and
\[
A'_{k,\beta} = \left( \frac{k(k - 1) \, \Gamma\left(\frac{k}{2}\right) \Gamma\left(1 + \frac{\beta - k}{2}\right)}{(\beta - 1) \, \Gamma\left(1 + \frac{\beta}{2}\right)} \right)^{\frac{1}{\beta - k}}.
\]
There exists a constant \( \sigma''_{k,\beta} \) such that
\[
\frac{1}{2} |x|^\beta - \int_{\mathbb{R}} V_k(x - y) \, \nu_{k,\beta}(dy) =
\begin{cases}
\sigma''_{k,\beta}, & \text{if } |x| \leq A'_{k,\beta}, \\
\geq \sigma''_{k,\beta}, & \text{if } |x| \geq A'_{k,\beta}.
\end{cases}
\]
The energy is given by
\[
E^*_{k,\beta} = \frac{\beta - k}{\sqrt{\pi}} (A'_{k,\beta})^k \Gamma\left(\frac{3 - k}{2}\right) \left( \frac{\Gamma\left(\frac{\beta + 1}{2}\right)}{2(2\beta - k) \, \Gamma\left(\frac{\beta - k + 3}{2}\right)} - \frac{2}{\beta \sqrt{\pi}} \Gamma\left(\frac{k + 1}{2}\right) \right).
\]

3) For \( \beta > 1 \), define the function
\[
h_\beta(x) = \frac{\beta(\beta - 1)}{4} |x|^{\beta - 2},
\]
and let \( \mu_\beta \) be the probability measure supported on \(\left[-\left(\frac{2}{\beta}\right)^{1/(\beta - 1)}, \left(\frac{2}{\beta}\right)^{1/(\beta - 1)}\right]\) with density \( h_\beta \). Then
\[
\frac{1}{2} |x|^\beta - \int_{\mathbb{R}} V_1(x - y) \, \mu_\beta(dy) =
\begin{cases}
\frac{1 - \beta}{2}, & \text{if } |x| \leq \left(\frac{2}{\beta}\right)^{1/(\beta - 1)}, \\
\geq \frac{1 - \beta}{2}, & \text{if } |x| \geq \left(\frac{2}{\beta}\right)^{1/(\beta - 1)}.
\end{cases}
\]
Moreover,
\[
E^*_\beta = \frac{\beta - 1}{2} \left( \frac{\beta (2/\beta)^{\frac{2\beta - 1}{\beta - 1}}}{4\beta - 2} - 1 \right).
\]

\end{proposition}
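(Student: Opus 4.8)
The plan is to apply Proposition~\ref{p3}: since each candidate density is nonnegative with compact support and its potential is continuous, it suffices to verify the two Euler--Lagrange conditions, i.e. that the effective potential
\[
\Phi_{k,\beta}(x)=\tfrac12|x|^\beta-\mathrm{sign}(k)\int_{-A_{k,\beta}}^{A_{k,\beta}}|x-y|^k f_{k,\beta}(y)\,dy
\]
is constant, equal to some $\sigma_{k,\beta}$, on the support $[-A_{k,\beta},A_{k,\beta}]$ and is $\geq\sigma_{k,\beta}$ outside. For each of the three cases I would check (i) that $\mu_{k,\beta}$ has total mass one, (ii) that $\Phi_{k,\beta}\equiv\sigma_{k,\beta}$ on the support, and (iii) the outer inequality. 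Once $\sigma_{k,\beta}$ is identified the energy is immediate: integrating the equality in (ii) against $\mu_{k,\beta}$ gives $\mathrm{sign}(k)\iint|x-y|^k\,d\mu\,d\mu=\tfrac12\int|x|^\beta\,d\mu-\sigma_{k,\beta}$, whence $E^*_{k,\beta}=\tfrac12\int|x|^\beta\,d\mu_{k,\beta}+\sigma_{k,\beta}$, and the remaining moment $\int|x|^\beta\,d\mu_{k,\beta}$ is a second application of Lemma~\ref{li1}.

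For step (i) I would rescale $x=A_{k,\beta}t$ and recognize the mass integral as $F(a,b,c)$ of Lemma~\ref{li1} with $a=\tfrac{1+k}2$, $b=\beta-k-1$, $c=\beta-1$; the admissibility condition $2a+b-c=1>0$ holds, and Lemma~\ref{li1} gives $F=\sqrt\pi\,\Gamma(\tfrac{1-k}2)/\big((\beta-k)\Gamma(1-\tfrac k2)\big)$. Requiring unit mass yields $C_{k,\beta}=A_{k,\beta}^{\,\beta-k}\,F$; combined with the stated closed form of $C_{k,\beta}$ and the identity $(\beta-k)\Gamma(\tfrac{\beta-k}2)=2\Gamma(1+\tfrac{\beta-k}2)$, this reproduces exactly the stated value of $A_{k,\beta}$, so the normalization is self-consistent.

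The heart of the argument, and the hard part, is step (ii): evaluating the finite Riesz potential $\int_{-A}^A|x-y|^k f_{k,\beta}(y)\,dy$ and showing it equals $\mathrm{sign}(k)\big(\tfrac12|x|^\beta-\sigma_{k,\beta}\big)$ on the support. The density has been engineered to invert this potential, the inner integral $\int_1^{A/|x|}u^{\beta-1}(u^2-1)^{-(1+k)/2}\,du$ being the explicit solution of an Abel-type finite Riesz equation of order $\tfrac{1+k}2$. I would differentiate $\Phi_{k,\beta}$ to convert the problem into the singular identity
\[
\tfrac{\beta}{2}|x|^{\beta-1}\mathrm{sign}(x)=\mathrm{sign}(k)\,k\int_{-A}^A|x-y|^{k-1}\mathrm{sign}(x-y)\,f_{k,\beta}(y)\,dy,
\]
and then either insert the integral representation of $|x-y|^k$ used in the admissibility proof, or substitute $y=A_{k,\beta}s$ and exchange the order of the $u$- and $s$-integrations, in both cases reducing everything to elementary Beta integrals. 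This is where the $\Gamma$-function bookkeeping concentrates and where the precise value of $A_{k,\beta}$ is forced; evaluating $\Phi_{k,\beta}$ at a convenient point such as $x=0$ then produces the explicit $\sigma_{k,\beta}$ and, via the first paragraph, the energy $E^*_{k,\beta}$.

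For step (iii) I would show that $\Phi_{k,\beta}'(x)$ has the correct sign for $|x|>A_{k,\beta}$ (positive for $x>A_{k,\beta}$, negative for $x<-A_{k,\beta}$), using convexity of $|x|^\beta$ together with the monotonicity of the potential of the compactly supported $\mu_{k,\beta}$; this yields the outer inequality and completes the identification through Proposition~\ref{p3}. Parts 2) and 3) follow the same scheme: for $k=1$ in part 3) the density is the pure power $h_\beta(x)=\tfrac{\beta(\beta-1)}4|x|^{\beta-2}$ and the potential of $|x-y|$ is computed by two elementary differentiations exactly as in Proposition~\ref{paa}, so that $\sigma_\beta=\tfrac{1-\beta}2$ drops out directly; for $k\in(1,2)$ in part 2) the kernel $|x|^k$ is one degree smoother, reflected in the shifted exponents $u^{\beta-2}$ and $(u^2-1)^{k/2}$, and the identical argument applies with Lemma~\ref{li1} invoked at the parameters $a=\tfrac k2$, $b=\beta-k-1$, $c=\beta-2$.
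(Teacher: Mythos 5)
Your skeleton is the same as the paper's (verify the Euler--Lagrange conditions of Proposition~\ref{p3}, pass to the differentiated singular identity, use Lemma~\ref{li1} for normalization and moments, integrate back, and get the outer inequality from a sign/convexity argument), and your step (i) is correct and self-consistent: with $a=\tfrac{1+k}{2}$, $b=\beta-k-1$, $c=\beta-1$ one indeed gets $2a+b-c=1$ and $C_{k,\beta}=A_{k,\beta}^{\beta-k}F(a,b,c)$. But there is a genuine gap exactly at what you yourself call the heart of the argument, step (ii). Your claim that, after substituting $y=A_{k,\beta}s$ and exchanging the $u$- and $s$-integrations, everything reduces to ``elementary Beta integrals'' is not justified and, as stated, is false. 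After Fubini (equivalently, after the paper's change of variable $z=|y|u$), the inner integral one must evaluate is
\[
H(x,z)=\int_{-z}^{z}\frac{\operatorname{sign}(x-y)\,|x-y|^{k-1}}{(z^{2}-y^{2})^{(1+k)/2}}\,dy ,
\]
and the whole proof hinges on the dichotomy that $H(x,z)$ \emph{vanishes identically} for $|x|<z$ while being strictly positive for $x>z$. The vanishing at interior points is not a Beta integral; it is a nontrivial finite-Riesz-transform identity for the Jacobi-type weight $(1-w^{2})^{-(1+k)/2}$, and it is precisely this discontinuity that simultaneously forces the potential to be constant on $[-A_{k,\beta},A_{k,\beta}]$ and produces the inequality (with the explicit nonnegative defect $\theta_{k,\beta}$) outside. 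The paper obtains it through real machinery: the Fourier--sine representation $x^{k-1}=\tfrac{2\sin(k\pi/2)\Gamma(k)}{\pi}\int_0^\infty t^{-k}\sin(xt)\,dt$, Poisson's integral for $J_{-k/2}$, and the discontinuous Weber--Schafheitlin formula [14, (6.699.5)], which is zero for $0<x<b$ and explicit for $x>b$. Your alternative suggestion --- inserting the Gaussian subordination formula $|x|^{k}=c(k)\int_0^\infty(1-e^{-x^{2}t^{2}})t^{-k-1}dt$ from the admissibility proof --- fares no better, since it leads to Gaussian integrals against $f_{k,\beta}$ with no closed form and, again, no mechanism producing the inside/outside dichotomy.

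The same gap affects your part 2), where the paper repeats the Bessel-function argument with $\nu=\tfrac{1-k}{2}$ shifted parameters; only your part 3) ($k=1$) is genuinely elementary, as the paper's direct computation of $\int_{-1}^{1}|x-y|\,|y|^{\beta-2}dy$ confirms. The remaining pieces of your plan (recovering $\sigma_{k,\beta}$ at a convenient point, the energy from $E^{*}_{k,\beta}=\tfrac12\int|x|^{\beta}d\mu_{k,\beta}+\sigma_{k,\beta}$, and the outer inequality via monotonicity of $\Phi'$, as in the proofs of Theorem~\ref{tii} and Corollary~\ref{c1}) are sound, but without an actual proof of the interior identity --- either by the paper's discontinuous-integral route or by citing the Jacobi-weight formulas of \cite{1} used later in Section~\ref{den} --- the proposal does not constitute a proof of the proposition.
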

\begin{corollary}

For \(\beta = 2\) and \(k \in (-1, 2)\) with \(k \neq 0\), the equilibrium measure \(\mu^*_k\) admits the density
\[
f_k(x) = \frac{1}{|k| C_k} \left(A_k^2 - x^2\right)^{\frac{1 - k}{2}} \chi_{[-A_k, A_k]}(x),
\]
where the constants \(C_k\) and \(A_k\) are defined as
\[
C_k = \frac{\pi (1 - k)}{\sin\left(\frac{\pi (1 - k)}{2}\right)}, \quad
A_k = \left(\frac{|k| C_k \Gamma\left(\frac{4 - k}{2}\right)}{\sqrt{\pi} \Gamma\left(\frac{3 - k}{2}\right)}\right)^{\frac{1}{2 - k}}.
\]
The associated energy is given by
\[
E_k = \frac{k - 2}{k (4 - k)} A_k^2.
\]

For the case \(k = 0\), the density \(f_0\) of the equilibrium measure \(\mu^*_0\) simplifies to
\[
f_0(x) = \frac{1}{\pi} \sqrt{2 - x^2} \chi_{[-\sqrt{2}, \sqrt{2}]}(x),
\]
with the corresponding energy
\[
E_0 = \frac{3}{4} + \frac{1}{2} \log 2.
\]

\end{corollary}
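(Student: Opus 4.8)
The plan is to derive the corollary by specialising $\beta=2$ in Proposition \ref{pa}, treating $k\in(-1,1)$, $k\in(1,2)$ and $k=0$ separately, and then to obtain the energy from a scaling (virial) identity. For $k\in(-1,1)$ I would start from $f_{k,2}(x)=\frac{1}{C_{k,2}}|x|^{1-k}\int_1^{A_{k,2}/|x|}\frac{u}{(u^2-1)^{(1+k)/2}}\,du$. Here the substitution $v=u^2-1$ makes the inner integral elementary, $\int_1^{R}\frac{u\,du}{(u^2-1)^{(1+k)/2}}=\frac{1}{1-k}(R^2-1)^{(1-k)/2}$, and with $R=A_{k,2}/|x|$ the factors of $|x|$ cancel, leaving $f_{k,2}(x)=\frac{1}{(1-k)C_{k,2}}(A_{k,2}^2-x^2)^{(1-k)/2}$. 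It then remains to identify $(1-k)C_{k,2}=|k|C_k$ and $A_{k,2}=A_k$. At $\beta=2$ one has $\Gamma(\frac{\beta-k}{2})=\Gamma(1-\frac{k}{2})$ and $2\Gamma(3/2)=\sqrt{\pi}$, so $C_{k,2}=|k|\,\Gamma(\frac{1-k}{2})\Gamma(\frac{1+k}{2})$; the reflection identity $\Gamma(\frac{1-k}{2})\Gamma(\frac{1+k}{2})=\pi/\cos(\frac{\pi k}{2})$ together with $\sin(\frac{(1-k)\pi}{2})=\cos(\frac{\pi k}{2})$ and the shift $\Gamma(\frac{3-k}{2})=\frac{1-k}{2}\Gamma(\frac{1-k}{2})$ then gives both identities, completing this range.

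For $k\in(1,2)$ I would not force the $\beta=2$ reduction of $g_{k,\beta}$, whose inner integral $\int_1^{A_k/|x|}(u^2-1)^{-k/2}\,du$ does not collapse to a pure power of $A_k^2-x^2$ by elementary means; instead I would verify the claimed $f_k$ directly through Proposition \ref{p3}. This range is exactly where the double-differentiation of Proposition \ref{paa} becomes legitimate, since $k-2\in(-1,0)$ makes $|x-y|^{k-2}$ locally integrable (for $k<1$ it is not, which is why part 1 needs a different convergent representation). With $Q(x)=x^2$ and $V_k=|x|^k$, differentiating the equilibrium relation twice reduces condition $(1)$ to $k(k-1)\int_{-A_k}^{A_k}|x-y|^{k-2}f_k(y)\,dy=1$ on $(-A_k,A_k)$, i.e.\ to the classical finite Riesz-transform identity $\int_{-A}^{A}|x-y|^{k-2}(A^2-y^2)^{(1-k)/2}\,dy=\mathrm{const}$. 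Evaluating that constant in closed form (a Beta-function computation) fixes the normalisation $1/(|k|C_k)$, while the mass condition $\int_{-A_k}^{A_k}f_k=1$ gives $A_k^{2-k}=\frac{|k|C_k\,\Gamma(\frac{4-k}{2})}{\sqrt{\pi}\,\Gamma(\frac{3-k}{2})}$ through the Beta integral $\int_{-1}^1(1-t^2)^{(1-k)/2}\,dt=\frac{\sqrt{\pi}\,\Gamma(\frac{3-k}{2})}{\Gamma(\frac{4-k}{2})}$, matching the stated $A_k$; checking $(2)$ off $[-A_k,A_k]$ finishes the argument. Establishing the finite Riesz-transform evaluation is the step I expect to be the main obstacle.

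The entry $k=0$ is singular for these formulas: one checks $|k|C_k\to 0$ and $A_k^{2-k}\to 0$ as $k\to0$, so $f_0$ cannot be obtained by setting $k=0$ in $f_k$. Instead it follows from the separate logarithmic computation (the Ullman/Wigner case already recorded in the introduction), which yields $f_0(x)=\frac{1}{\pi}\sqrt{2-x^2}$ on $[-\sqrt{2},\sqrt{2}]$; this is the $\beta=2$, $k=0$ specialisation of the Ullman density and can be checked against Proposition \ref{p3} directly.

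Finally, the energy is cleanest through a scaling argument rather than the $\beta=2$ reduction of the energy formula in Proposition \ref{pa}. Writing $\mu_{k,\lambda}$ for the dilation of $\mu_k$ by $\lambda$ and using $\int x^2\,d\mu_{k,\lambda}=\lambda^2\int x^2\,d\mu_k$ and $\iint|x-y|^k\,d\mu_{k,\lambda}\,d\mu_{k,\lambda}=\lambda^k\iint|x-y|^k\,d\mu_k\,d\mu_k$, the stationarity $\frac{d}{d\lambda}E_k(\mu_{k,\lambda})|_{\lambda=1}=0$ gives the virial relation $2\int x^2\,d\mu_k=\mathrm{sign}(k)\,k\iint|x-y|^k\,d\mu_k\,d\mu_k$, hence $E_k=\bigl(1-\tfrac{2}{k}\bigr)\int x^2\,d\mu_k$. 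A Beta integral gives $\int x^2\,d\mu_k=\frac{A_k^2}{4-k}$, so $E_k=\frac{k-2}{k(4-k)}A_k^2$. For $k=0$ the log analogue of the virial relation only fixes $\int x^2\,d\mu_0=\tfrac12$, and the value $E_0=\frac34+\frac12\log 2$ is read off from the $\beta=2$ specialisation $E^*_\beta=\frac{3}{2\beta}-\log\frac{A_\beta}{2}$ with $A_\beta=\sqrt2$, since $-\log\frac{\sqrt2}{2}=\frac12\log2$.
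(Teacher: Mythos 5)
Your proof is correct, and for $k\in(-1,1)$ (and $k=0$) it coincides with the paper's own, implicit, argument: the corollary is stated as the $\beta=2$ specialisation of Proposition \ref{pa}, and your substitution $v=u^2-1$ together with the reflection identity $\Gamma(\frac{1-k}{2})\Gamma(\frac{1+k}{2})=\pi/\cos(\frac{\pi k}{2})$ and the shift $\Gamma(\frac{3-k}{2})=\frac{1-k}{2}\Gamma(\frac{1-k}{2})$ is exactly that reduction. Where you depart from the paper is for $k\in(1,2)$ and for the energy, and in both places the departure is not optional but necessary. For $k\in(1,2)$ the $\beta=2$ form of Proposition \ref{pa}(2) does not give $f_k$ at all: $g_{k,2}(x)\propto|x|^{1-k}\int_1^{A'/|x|}(u^2-1)^{-k/2}\,du$ vanishes at the edge like $(A'-|x|)^{1-k/2}$, whereas the stated $f_k$ diverges like $(A_k-|x|)^{(1-k)/2}$; these are genuinely different functions, so the proposition cannot imply the corollary on that range (your phrase ``does not collapse by elementary means'' understates this). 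Your replacement --- verify conditions (1)--(2) of Proposition \ref{p3} directly, differentiating twice (legitimate since $|x|^{k-2}$ is locally integrable exactly for $k>1$) and invoking the classical constant-potential identity $\int_{-A}^{A}|x-y|^{k-2}(A^2-y^2)^{(1-k)/2}dy=\Gamma(\frac{k-1}{2})\Gamma(\frac{3-k}{2})$ for the Riesz kernel on a segment, which together with $|k|C_k=k(k-1)\Gamma(\frac{k-1}{2})\Gamma(\frac{3-k}{2})$ and the mass condition reproduces the stated normalisation and $A_k$ --- is therefore the only route that actually proves the statement, and it shows it is the corollary, not Proposition \ref{pa}(2), that carries the correct formula. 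The same holds for the energy: the $\beta=2$ specialisation of the energy formula in Proposition \ref{pa}(1) is inconsistent with $E_k=\frac{k-2}{k(4-k)}A_k^2$ (as $k\to0^+$ it tends to $-\frac34$, while $E_k\to-1$, the value required for the Remark of Section 6 to make sense), so your dilation/virial identity $2\int x^2\,d\mu_k=\mathrm{sign}(k)\,k\iint|x-y|^k\,d\mu_k\,d\mu_k$ combined with $\int x^2\,d\mu_k=A_k^2/(4-k)$ is again the argument that works. The only step I would tighten is the last one of the $k\in(1,2)$ case: ``checking (2) off $[-A_k,A_k]$'' deserves the explicit convexity argument ($\psi''(x)=1-k(k-1)\int|x-y|^{k-2}f_k(y)\,dy\geq 0$ for $x\geq A_k$, together with $\psi'(A_k)=0$, gives $\psi(x)\geq\psi(A_k)$), in the style of the paper's proofs of Theorem \ref{tii} and Corollary \ref{c1}.
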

\begin{lemma}
Let \(\beta > 1\). For every bounded continuous function \(\varphi\) on \(\mathbb{R}\), the following limit holds:

\[
\lim_{k \to 0} \int_{\mathbb{R}} \varphi\left(k^{-\frac{1}{\beta - k}} x\right) \mu_{k, \beta}(dx) = \int_{\mathbb{R}} \varphi(t) \mu_{\beta}(dt),
\]

where \(\mu_{\beta}\) is the Ullman probability measure supported on \([-A_{\beta}, A_{\beta}]\) with density

\[
f_{\beta}(x) = \frac{1}{C_{\beta}} |x|^{\beta - 1} \int_{1}^{A_{\beta} / |x|} \frac{u^{\beta - 1}}{\sqrt{u^2 - 1}} \, du.
\]

Here, the constants \(C_{\beta}\) and \(A_{\beta}\) are defined as:

\[
C_{\beta} = \pi \frac{\Gamma\left(\frac{1}{2}\right) \Gamma\left(\frac{\beta}{2}\right)}{\beta \Gamma\left(\frac{\beta + 1}{2}\right)},
\;\; A_{\beta} = \left(\frac{\Gamma\left(\frac{1}{2}\right) \Gamma\left(\frac{\beta}{2}\right)}{\Gamma\left(\frac{\beta + 1}{2}\right)}\right)^{\frac{1}{\beta}}.
\]

\end{lemma}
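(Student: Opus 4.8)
The plan is to reduce the weak-convergence statement to pointwise convergence of densities and then invoke Scheffé's lemma. Write $s_k = |k|^{1/(\beta-k)}$ for the scaling factor, interpreting $k^{-1/(\beta-k)}$ as $|k|^{-1/(\beta-k)}$ (this is the scale that keeps the supports bounded, and it makes the statement meaningful for both signs of $k$, so the limit $k\to0$ may be taken from either side without change). The integral on the left equals $\int_{\mathbb R}\varphi(y)\,\tilde\mu_{k,\beta}(dy)$, where $\tilde\mu_{k,\beta}$ is the image of $\mu_{k,\beta}$ under $x\mapsto s_k^{-1}x$; its density is $\tilde f_{k,\beta}(y)=s_k f_{k,\beta}(s_k y)$.

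First I would compute this density explicitly. Using the form of $f_{k,\beta}$ from Proposition \ref{pa} and the identity $s_k^{\beta-k}=|k|$, a direct substitution gives
\[
\tilde f_{k,\beta}(y)=\frac{|k|}{C_{k,\beta}}\,|y|^{\beta-k-1}\int_1^{(A_{k,\beta}/s_k)/|y|}\frac{u^{\beta-1}}{(u^2-1)^{(1+k)/2}}\,du .
\]
The core of the argument is then the asymptotic analysis of the two constants. Using only continuity of the Gamma function one checks that
\[
\frac{A_{k,\beta}}{s_k}=\left(\frac{2\,\Gamma\left(\tfrac{1+k}{2}\right)\Gamma\left(1+\tfrac{\beta-k}{2}\right)}{\beta\,\Gamma\left(\tfrac{\beta+1}{2}\right)}\right)^{\frac{1}{\beta-k}}\longrightarrow A_\beta,\qquad \frac{|k|}{C_{k,\beta}}\longrightarrow\frac{1}{C_\beta}\qquad(k\to0),
\]
the second limit following from $\Gamma(1)=1$ and $\Gamma(\tfrac12)^2=\pi$. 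Combined with $|y|^{\beta-k-1}\to|y|^{\beta-1}$ and the pointwise convergence of the integrand $(u^2-1)^{-(1+k)/2}u^{\beta-1}\to u^{\beta-1}/\sqrt{u^2-1}$, these yield $\tilde f_{k,\beta}(y)\to f_\beta(y)$ for every $y$ with $0<|y|<A_\beta$; for $|y|>A_\beta$ the upper limit $(A_{k,\beta}/s_k)/|y|$ eventually drops below $1$, so $\tilde f_{k,\beta}(y)=0=f_\beta(y)$. Hence $\tilde f_{k,\beta}\to f_\beta$ Lebesgue-almost everywhere.

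To pass from pointwise to integral convergence I would apply Scheffé's lemma. Each $\tilde f_{k,\beta}$ is a probability density, being the density of a pushforward of a probability measure, and $f_\beta$ is also a probability density: this is verified by the substitution $x=A_\beta v$ together with Lemma \ref{li1} applied with $a=\tfrac12$ and $b=c=\beta-1$, which gives $\int_{\mathbb R}f_\beta=\frac{A_\beta^\beta}{C_\beta}\,F\!\left(\tfrac12,\beta-1,\beta-1\right)=\frac{A_\beta^\beta}{C_\beta}\cdot\frac{\pi}{\beta}=1$. Since the densities converge almost everywhere and all integrate to $1$, Scheffé's lemma yields $\int_{\mathbb R}|\tilde f_{k,\beta}-f_\beta|\,dy\to0$, that is, convergence in total variation. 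Consequently, for any bounded $\varphi$,
\[
\left|\int_{\mathbb R}\varphi\,d\tilde\mu_{k,\beta}-\int_{\mathbb R}\varphi\,d\mu_\beta\right|\le \|\varphi\|_\infty\int_{\mathbb R}|\tilde f_{k,\beta}-f_\beta|\,dy\longrightarrow 0\qquad(k\to0),
\]
which is the claimed identity.

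I expect the main obstacle to be the bookkeeping in the asymptotics of $A_{k,\beta}$ and $C_{k,\beta}$: one must track the scaling $s_k$ so that the factor $|k|$ cancels correctly and the surviving Gamma factors assemble exactly into $A_\beta$ and $C_\beta$. Once these limits are pinned down, the pointwise convergence of densities is immediate from continuity of the integrand, and Scheffé's lemma does the rest, so that no tightness estimate or dominating function is required because the total masses are identically $1$.
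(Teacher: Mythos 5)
Your proof is correct, but it takes a genuinely different route from the paper. The paper never works with the rescaled densities at all: its argument (which is in fact embedded inside the proof of Proposition \ref{pa}) consists of taking the limit $k\to 0$ of the constant $c_{k,\beta}$ and of the equilibrium relation, invoking analytic continuation to conclude that the Ullman density satisfies the $k=0$ (log-gas) equilibrium condition; the weak convergence of the rescaled measures is then only implicit, through the variational characterization of the equilibrium measure, and the identification/tightness step is never spelled out. You instead push $\mu_{k,\beta}$ forward by $x\mapsto |k|^{-1/(\beta-k)}x$, compute the rescaled density in closed form, check the two constant asymptotics $|k|/C_{k,\beta}\to 1/C_\beta$ and $A_{k,\beta}/s_k\to A_\beta$ (both computations are right, using $\Gamma(1+\beta/2)=(\beta/2)\Gamma(\beta/2)$ and $\Gamma(1/2)^2=\pi$), verify that $f_\beta$ has total mass one via Lemma \ref{li1} with $a=\tfrac12$, $b=c=\beta-1$, and finish with Scheff\'e's lemma. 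This buys a complete, self-contained proof and in fact a stronger conclusion than the lemma asserts (convergence in total variation, hence against all bounded measurable test functions), at the cost of Gamma-function bookkeeping. Two points you should make explicit: (i) passing the limit inside the inner integral $\int_1^{b_k}u^{\beta-1}(u^2-1)^{-(1+k)/2}\,du$ requires a dominating function near the singular endpoint $u=1$ (for instance $\max\bigl((u^2-1)^{-3/4},1\bigr)u^{\beta-1}$ for $|k|\le \tfrac12$, noting also that the upper limit $b_k$ moves with $k$), since pointwise convergence of the integrand alone does not suffice; (ii) your reading of $k^{-1/(\beta-k)}$ as $|k|^{-1/(\beta-k)}$ is the only sensible one for a two-sided limit $k\to 0$, but it is an interpretation of the statement and deserves a sentence of justification, as you give it.
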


\begin{proof}

1) Proof for \( x \in (-1, 1) \):

By the substitutions \( y \leftrightarrow A y \) and \( x \leftrightarrow A x \), we need to show that for \( x \in (-1, 1) \),
\[
\varphi(x) := \int_{-1}^1 \text{sgn}(x - y) |x - y|^{k-1} |y|^{\beta - k - 1} \int_1^{1/|y|} \frac{u^{\beta - 1}}{(u^2 - 1)^{(1 + k)/2}} \, du \, dy = C_{k, \beta} \beta |x|^{\beta - 1}.
\]
Since the density is even, it suffices to prove the result for \( x \in [0, 1) \). Using the change of variable \( z = |y| u \), we obtain
\[
\varphi(x) = \int_{-1}^1 \text{sgn}(x - y) |x - y|^{k-1} \int_{|y|}^1 \frac{z^{\beta - 1}}{(z^2 - y^2)^{(1 + k)/2}} \, dz \, dy.
\]
From [14, (17.23.26), (17.34.10) table of integrals], for \( 0 < k < 2 \),
\[
x^{k-1} = \frac{2 \sin(k \pi / 2) \Gamma(k)}{\pi} \int_0^\infty t^{-k} \sin(x t) \, dt.
\]
Thus, for \( 0 < k < 2 \),
\[
\varphi(x) = \frac{2 \sin(k \pi / 2) \Gamma(k)}{\pi} \int_{-1}^1 \int_0^\infty t^{-k} \sin((x - y) t) \int_{|y|}^1 \frac{z^{\beta - 1}}{(z^2 - y^2)^{(1 + k)/2}} \, dt \, dz \, dy.
\]
Simplifying further, we get
\[
\varphi(x) = \frac{4 \sin(k \pi / 2) \Gamma(k)}{\pi} \int_0^\infty t^{-k} \sin(x t) \int_0^1 \cos(y t) \int_y^1 \frac{z^{\beta - 1}}{(z^2 - y^2)^{(1 + k)/2}} \, dz \, dy \, dt.
\]
By changing the order of integration, we write
\[
\varphi(x) = \frac{4 \sin(k \pi / 2) \Gamma(k)}{\pi} \int_0^\infty t^{-k} \sin(x t) \int_0^1 \int_0^z \frac{\cos(y t)}{(z^2 - y^2)^{(1 + k)/2}} z^{\beta - 1} \, dy \, dz \, dt.
\]
Using the formula [KU (65)6, WA 35(4)a], for \( \nu > -1/2 \),
\[
J_\nu(\xi) = \frac{2 (\xi / 2)^\nu}{\sqrt{\pi} \Gamma(\nu + 1/2)} \int_0^1 (1 - t^2)^{\nu - 1/2} \cos(\xi t) \, dt.
\]
For \( 0 < k < 1 \), we obtain
\[
\int_0^1 \int_0^z \frac{\cos(y t)}{(z^2 - y^2)^{(1 + k)/2}} z^{\beta - 1} \, dy \, dz = 2^{-1 - k/2} \sqrt{\pi} \Gamma\left(\frac{1 - k}{2}\right) t^{k/2} \int_0^1 z^{\beta - k/2 - 1} J_{-k/2}(z t) \, dz.
\]
Thus,
\[
\varphi(x) = h_k \int_0^1 \int_0^\infty t^{-k/2} \sin(x t) J_{-k/2}(z t) z^{\beta - k/2 - 1} \, dt \, dz,
\]
where
\[
h_k = 2^{1 - k/2} \sqrt{\pi} \Gamma\left(\frac{1 - k}{2}\right) \frac{\sin(k \pi / 2) \Gamma(k)}{\pi}.
\]
From [14, (6.699.5)], for \( -1 < \nu < 1/2 \),
\[
\int_0^\infty t^\nu J_\nu(b t) \sin(x t) \, dt =
\begin{cases}
0, & \text{if } 0 < x < b, \\
\frac{\sqrt{\pi} 2^\nu b^\nu}{\Gamma(1/2 - \nu)} (x^2 - b^2)^{-\nu - 1/2}, & \text{if } 0 < b < x.
\end{cases}
\]
For \( k \in (0, 1) \),
\[
\int_0^\infty t^{-k/2} \sin(x t) J_{-k/2}(z t) \, dt =
\begin{cases}
0, & \text{if } 0 < x < z, \\
\frac{\sqrt{\pi} 2^{-k/2} z^{-k/2}}{\Gamma((k + 1)/2)} (x^2 - z^2)^{(k - 1)/2}, & \text{if } 0 < z < x.
\end{cases}
\]
Therefore, for \( x \in (0, 1) \),
\[
\varphi(x) = h_k \frac{\sqrt{\pi} 2^{-k/2}}{\Gamma((k + 1)/2)} \int_0^x z^{\beta - k - 1} (x^2 - z^2)^{(k - 1)/2} \, dz.
\]
This simplifies to
\[
\varphi(x) = c_{k, \beta} x^{\beta - 1},
\]
where
\[
c_{k, \beta} = \frac{\pi}{2^k} \Gamma\left(\frac{1 - k}{2}\right) \Gamma(k) \frac{\Gamma\left(\frac{\beta - k}{2}\right)}{\Gamma\left(\frac{1 + k}{2}\right) \Gamma\left(\frac{\beta + 1}{2}\right)} \frac{\sin(k \pi / 2)}{\pi}.
\]
Using the Lagrange formula for the gamma function, we obtain
\[
c_{k, \beta} = \sqrt{\pi} \frac{\Gamma\left(\frac{1 - k}{2}\right) \Gamma\left(\frac{1 + k}{2}\right)}{\Gamma\left(1 - \frac{k}{2}\right)} \frac{2 \Gamma\left(\frac{\beta - k}{2}\right)}{\Gamma\left(\frac{\beta + 1}{2}\right)}=\frac{\beta}{2|k|} C_{k, \beta}.
\]
For \( x \in (0, 1) \),
\[
\varphi(x) =  \frac{\beta}{2|k|} C_{k, \beta} x^{\beta - 1}.
\]
For \( x > 1 \),
\[
\varphi(x) = h_k \frac{\sqrt{\pi} 2^{-k/2}}{\Gamma((k + 1)/2)} \int_0^1 z^{\beta - k - 1} (x^2 - z^2)^{(k - 1)/2} \, dz.
\]
Thus,
\[
\varphi(x) = \frac{\beta}{2|k|} C_{k, \beta} x^{\beta - 1} - h_k \frac{\sqrt{\pi} 2^{-k/2}}{\Gamma((k + 1)/2)} x^{\beta - 1} \int_0^{1/x} z^{\beta - k - 1} (1 - z^2)^{(k - 1)/2} \, dz.
\]
We conclude that
\[
\frac{\beta}{2|k|} C_{k, \beta} x^{\beta - 1} - \varphi(x) =
\begin{cases}
0, & \text{if } 0 < x < 1, \\
\theta_{k, \beta}(x), & \text{if } x > 1,
\end{cases}
\]
where
\[
\theta_{k, \beta}(x) = h_k \frac{\sqrt{\pi} 2^{-k/2}}{\Gamma((k + 1)/2)} x^{\beta - 1} \int_0^{1/x} z^{\beta - k - 1} (1 - z^2)^{(k - 1)/2} \, dz.
\]
Integrate the relation above, we get the desired result.

By the Lagrange formula, we have
\[
c_{k, \beta} = \frac{\sqrt{\pi}}{2} \frac{\Gamma\left(\frac{1 - k}{2}\right) \Gamma\left(\frac{1 + k}{2}\right)}{\Gamma\left(1 - \frac{k}{2}\right)} \frac{\Gamma\left(\frac{\beta - k}{2}\right)}{\Gamma\left(\frac{\beta + 1}{2}\right)}.
\]
Therefore,
\[
c_\beta := \lim_{k \to 0} c_{k, \beta} = \frac{\pi}{2} \frac{\Gamma\left(\frac{1}{2}\right) \Gamma\left(\frac{\beta}{2}\right)}{\Gamma\left(\frac{\beta + 1}{2}\right)}.
\]
By analytic continuation, we obtain
\[
c_\beta x^{\beta - 1} - \int_{-1}^1 \frac{1}{x - y} |y|^{\beta - 1} \int_1^{1/|y|} \frac{u^{\beta - 1}}{(u^2 - 1)^{1/2}} \, du \, dy =
\begin{cases}
0, & \text{if } 0 < x < 1, \\
\theta_\beta(x), & \text{if } x > 1,
\end{cases}
\]
where
\[
\theta_\beta(x) = \sqrt{\pi} x^{\beta - 1} \int_0^{1/x} \frac{z^{\beta - 1}}{(1 - z^2)^{1/2}} \, dz.
\]

2) Proof for \( k \in (1, 2) \):

We follow a similar approach as in the first case. For \( k \in (1, 2) \), \( \beta > 2 \), and \( x > 0 \), we define
\[
\psi(x) := \int_{-1}^1 |x - y|^{k - 2} |y|^{\beta - k - 1} \int_1^{1/|y|} \frac{u^{\beta - 2}}{(u^2 - 1)^{k/2}} \, du \, dy.
\]
Using the symmetry of the density, we focus on \( x > 0 \). From [14, (17.23.26)], for \( k \in (1, 3) \),
\[
\psi(x) = \frac{2 \sin((k - 1) \pi / 2) \Gamma(k - 1)}{\pi} \int_{-1}^1 \int_0^\infty t^{1 - k} \sin((x - y) t) |y|^{\beta - k - 1} \int_1^{1/|y|} \frac{u^{\beta - 2}}{(u^2 - 1)^{k/2}} \, du \, dt \, dy.
\]
By changing the order of integration, we obtain
\[
\psi(x) = \frac{4 \sin((k - 1) \pi / 2) \Gamma(k - 1)}{\pi} \int_0^1 \int_0^\infty t^{1 - k} \sin(x t) \cos(y t) y^{\beta - k - 1} \int_1^{1/y} \frac{u^{\beta - 2}}{(u^2 - 1)^{k/2}} \, du \, dt \, dy.
\]
Using the identity for Bessel functions, we simplify this to
\[
\psi(x) = \delta_k \int_0^1 u^{\beta - (k + 3)/2} \int_0^\infty t^{(1 - k)/2} \sin(x t) J_{(1 - k)/2}(t u) \, dt \, du,
\]
where
\[
\delta_k = 2^{(3 - k)/2} \sqrt{\pi} \Gamma(k - 1) \Gamma\left(1 - \frac{k}{2}\right) \frac{\sin((k - 1) \pi / 2)}{\pi}.
\]
From [14, (6.699.5)], for \( -1 < \nu < 1/2 \),
\[
\int_0^\infty t^\nu J_\nu(b t) \sin(x t) \, dt =
\begin{cases}
0, & \text{if } 0 < x < b, \\
\frac{\sqrt{\pi} 2^\nu b^\nu}{\Gamma(1/2 - \nu)} (x^2 - b^2)^{-\nu - 1/2}, & \text{if } 0 < b < x.
\end{cases}
\]
Thus, for \( x \in (0, 1) \),
\[
\psi(x) = \delta_k \frac{\sqrt{\pi} 2^{(1 - k)/2}}{\Gamma(k / 2)} \int_0^x u^{\beta - k - 1} (x^2 - u^2)^{k/2 - 1} \, du.
\]
This simplifies to
\[
\psi(x) = b_{k, \beta} x^{\beta - 2},
\]
where
\[
b_{k, \beta} = \frac{\sqrt{\pi}}{2} \frac{\Gamma\left(\frac{k}{2}\right) \Gamma\left(1 - \frac{k}{2}\right)}{ \Gamma\left(\frac{3 - k}{2}\right)} \frac{\Gamma\left(\frac{\beta - k}{2}\right)}{\Gamma\left(\frac{\beta}{2}\right)}=\frac12\frac{\beta (\beta - 1)C'_{k,\beta}}{k(k-1)}.
\]

For \( x > 1 \), we have
\[
\psi(x) = b_{k, \beta} x^{\beta - 2} - g_{k, \beta}(x),
\]
where
\[
g_{k, \beta}(x) = \delta_k \frac{\sqrt{\pi} 2^{(1 - k)/2}}{\Gamma(k / 2)} x^{\beta - 2} \int_{1/x}^1 u^{\beta - k - 1} (1 - u^2)^{k/2 - 1} \, du.
\]
Integrating, then there exits a constant \(\sigma''_{k, \beta}\in\Bbb R\) such that
\[
\frac{1}{2} x^\beta - \int_\mathbb{R} |x - y|^k \nu_{k, \beta}(dy) =
\begin{cases}
\sigma''_{k, \beta}, & \text{if } 0 < x < 1, \\
\geq \sigma''_{k, \beta}, & \text{if } x > 1.
\end{cases}
\]
The energy is given by
\[
E^*_{k, \beta} = \frac{1}{2} \int_\mathbb{R} |x|^\beta \nu_{k, \beta}(dy) + \sigma''_{k, \beta}.
\]

3) Proof for \( x \in (-1, 1) \):

By a change of variable, it suffices to prove the result for \( x \in (-1, 1) \) with a measure supported on \( (-1, 1) \). By symmetry, we assume \( x \in (0, 1) \).

For \( 0 < x < 1 \),
\[
\int_{-1}^1 |x - y| |y|^{\beta - 2} \, dy = \int_0^1 (|x - y| + |x + y|) |y|^{\beta - 2} \, dy = \frac{2}{\beta (\beta - 1)} x^\beta + \frac{2}{\beta}.
\]
For \( x \geq 1 \),
\[
\int_{-1}^1 |x - y| |y|^{\beta - 2} \, dy = \int_{-1}^1 (x - y) |y|^{\beta - 2} \, dy = 2 x \int_0^1 |y|^{\beta - 2} \, dy = \frac{2}{\beta - 1} x.
\]
Thus,
\[
\frac{1}{2} x^\beta - \int_\mathbb{R} |x - y| \mu(dy) =
\begin{cases}
\frac{1 - \beta}{2}, & \text{if } 0 < x < 1, \\
(x^{\beta - 1} - \beta) x \geq \frac{1 - \beta}{2}, & \text{if } x > 1.
\end{cases}
\]
The energy is given by
\[
E^*_\beta = \frac{\beta - 1}{2} \left( \frac{\beta (2 / \beta)^{(2 \beta - 1) / (\beta - 1)}}{4 \beta - 2} - 1 \right).
\]

\end{proof}

\section {Density of positive particles}\label{den}

Let us define the set
\[
{\cal M}^1_\omega(\Bbb R) = \{\mu \in {\cal M}^1(\Bbb R) \mid \mu([\omega, +\infty)) = 1\}.
\]

\begin{theorem}\label{th0}
Let \( k \in (-1, 1) \). Consider the energy functional
\[
E_k(\mu) = \int_{\Bbb R} x^2 \mu(dx) - \int_{\Bbb R^2} V_k(x - y) \mu(dx) \mu(dy).
\]
There exists a unique probability measure \( \mu^*_{k, \omega} \) on \( \Bbb R \) such that
\[
\inf_{\mu \in {\cal M}^1_\omega(\Bbb R)} E_k(\mu) = E_k(\mu^*_{k, \omega}) = E^*_{k, \omega}.
\]
Moreover, the support of \( \mu^*_{k, \omega} \) is compact.
\end{theorem}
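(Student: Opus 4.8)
The plan is to recognize Theorem~\ref{th0} as a direct application of Theorem~\ref{thm3} to the closed half-line $\Sigma=[\omega,+\infty)$, with confining potential $Q(x)=x^2$ and interaction $V=V_k$. A measure $\mu\in\mathcal{M}^1_\omega(\mathbb{R})$ is precisely a probability measure supported in $[\omega,+\infty)$, so $\mathcal{M}^1_\omega(\mathbb{R})$ is identified with $\mathcal{M}^1([\omega,+\infty))$, and on this set the integrals defining $E_k$ reduce to integrals over $[\omega,+\infty)$ and its square. It therefore suffices to check that the three hypotheses of Theorem~\ref{thm3} hold for this choice of $\Sigma$, $Q$, and $V$.

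First, $Q(x)=x^2$ is continuous, hence lower semicontinuous. Second, the admissibility of $V_k$ for $k\in(-1,1)$ is exactly the earlier result of the Examples section, where $V_k$ is shown to be lower semicontinuous, negative definite, and to satisfy $\int_{\mathbb{R}^2}V_k(x-y)\,\mu(dx)\,\mu(dy)<0$ for every compactly supported signed measure $\mu$ with $\mu(\mathbb{R})=0$. Third, I would verify the coercivity condition $\lim_{|x|\to\infty}\bigl(Q(x)-4V_k(x)\bigr)=+\infty$ by a brief case analysis: for $0\le k<1$ we have $V_k(x)=|x|^k$ with $k<2$, so $x^2-4|x|^k\to+\infty$; for $-1<k<0$ we have $V_k(x)=-|x|^k$ with $|x|^k\to 0$, so $x^2+4|x|^k\to+\infty$; and for $k=0$, $x^2-4\log|x|\to+\infty$. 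In every case the hypothesis is met.

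Once these three points are in place, the proof of Theorem~\ref{thm3} applies essentially verbatim. Its existence part relies only on the lower bound \eqref{k} together with Prokhorov's criterion, whose tightness input is the coercivity of $g_{V_k}(x)=x^2-4V_k(x)$ just verified; the compactness of the support follows from the variational perturbation argument of part~(b), with the left endpoint $\omega$ acting as a hard wall that merely restricts the support from below; and uniqueness follows from the strict convexity of $\mu\mapsto E_k(\mu)$ guaranteed by the negative definiteness of $V_k$. Together these yield the existence of a unique minimizer $\mu^*_{k,\omega}$ with compact support.

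The only point requiring attention is that Theorem~\ref{thm3} is stated for $\Sigma=\mathbb{R}$ or a bounded interval, whereas here $\Sigma=[\omega,+\infty)$ is an unbounded half-line; this is the main, though minor, obstacle. I would therefore note explicitly that the arguments of Theorem~\ref{thm3} are insensitive to this distinction, since they use only the coercivity of $g_{V_k}$ at $+\infty$ to produce tightness and the endpoint $\omega$ only as a reflecting boundary constraining the support, so the existence, compact-support, and uniqueness conclusions transfer without change to the half-line setting.
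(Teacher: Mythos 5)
Your proposal is correct and is essentially the paper's own (implicit) argument: the paper offers no separate proof of this theorem, treating it precisely as an instance of Theorem~\ref{thm3} with $\Sigma=[\omega,+\infty)$, $Q(x)=x^2$, and $V=V_k$, whose admissibility is Proposition 4.1 and whose coercivity check is the same case analysis you give. Your closing caveat is not even needed, since the paper's standing hypotheses already allow unbounded closed intervals $\Sigma$ (with the coercivity assumption imposed exactly in that case).
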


To derive the analytic expression of the measure \( \mu^*_{k, \omega} \), we first state a preliminary result.

\begin{proposition}\label{pi}
For \( k \in (-1, 1) \), \( k \neq 0 \), and \( \alpha_k > 0 \), define the function
\[
f(x, \omega) = (3 - k)x^{2 - k} + 2(2 - k)\omega x^{1 - k} - \alpha_k
\]
on \( (0, +\infty) \times \Bbb R \). The following properties hold:

1. For every \( \omega \in \Bbb R \), there exists a unique real number \( L_k(\omega) > 0 \) such that \( f(L_k(\omega), \omega) = 0 \).

2. For every \( k \), the function \( \omega \mapsto L_k(\omega) \) is continuously differentiable on \( \Bbb R \), and the function \( \omega \mapsto L_k(\omega) + 2\omega \) is strictly increasing.

3. \( L_k(\omega) + 2\omega > 0 \) if and only if \( \omega > \omega_k \), where
\[
\omega_k = -\frac{1}{2} (\alpha_k)^{\frac{1}{2 - k}}.
\]
\end{proposition}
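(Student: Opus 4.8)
The plan is to treat the three assertions in turn, with everything resting on an analysis of the one-variable function $x\mapsto f(x,\omega)$ for fixed $\omega$. For part~1, I note that since $k\in(-1,1)$ the exponents satisfy $1-k>0$, $2-k>0$ and $3-k>0$; hence $f(0^+,\omega)=-\alpha_k<0$, while $f(x,\omega)\to+\infty$ as $x\to+\infty$ because $(3-k)x^{2-k}$ dominates $2(2-k)\omega x^{1-k}$. The intermediate value theorem then gives at least one positive root. For uniqueness I compute
\[
\frac{\partial f}{\partial x}(x,\omega)=(2-k)x^{-k}\bigl[(3-k)x+2(1-k)\omega\bigr],
\]
whose sign is that of $(3-k)x+2(1-k)\omega$. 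When $\omega\geq 0$ this is positive throughout $(0,\infty)$, so $f$ is strictly increasing and the root is unique. When $\omega<0$, $f$ decreases on $(0,x_c)$ and increases on $(x_c,\infty)$ with $x_c=2(1-k)|\omega|/(3-k)$; since $f$ already starts negative at $0^+$, its minimum $f(x_c,\omega)$ is $<-\alpha_k<0$, so $f$ still crosses zero exactly once, on the increasing branch. In every case there is a unique $L_k(\omega)>0$, and at the root $\partial_x f>0$.

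For part~2, the strict positivity $\partial_x f(L_k(\omega),\omega)>0$ just established lets me apply the implicit function theorem to $f(L_k(\omega),\omega)=0$, so that $\omega\mapsto L_k(\omega)$ is $C^1$ (indeed smooth) with
\[
L_k'(\omega)=-\frac{f_\omega}{f_x}=-\frac{2L_k(\omega)}{(3-k)L_k(\omega)+2(1-k)\omega},
\]
after cancelling the factor $(2-k)L_k(\omega)^{-k}$. To prove that $\omega\mapsto L_k(\omega)+2\omega$ is strictly increasing I must show $L_k'(\omega)+2>0$. Putting this over the common denominator $(3-k)L_k(\omega)+2(1-k)\omega$, which is positive precisely because $\partial_x f>0$ at the root, the sign of $L_k'(\omega)+2$ equals the sign of $(2-k)L_k(\omega)+2(1-k)\omega$.

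The main obstacle is to see that this last quantity is positive, since a direct estimate is awkward when $\omega<0$. The key is to eliminate $\omega$ through the defining relation: dividing $f(L_k,\omega)=0$ by $L_k^{1-k}$ gives $2(2-k)\omega=\alpha_k L_k^{k-1}-(3-k)L_k$. Substituting this into $(2-k)L_k+2(1-k)\omega$ and simplifying, the coefficient of $L_k$ collapses via the elementary identity $(2-k)^2-(1-k)(3-k)=1$, yielding
\[
(2-k)L_k(\omega)+2(1-k)\omega=\frac{1}{2-k}\Bigl[L_k(\omega)+(1-k)\,\alpha_k\,L_k(\omega)^{k-1}\Bigr],
\]
which is manifestly positive because $L_k>0$, $\alpha_k>0$ and $1-k>0$. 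Hence $L_k'(\omega)+2>0$ and $\omega\mapsto L_k(\omega)+2\omega$ is strictly increasing.

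Finally, for part~3, strict monotonicity of $\omega\mapsto L_k(\omega)+2\omega$ forces it to have at most one zero, so the equivalence $L_k(\omega)+2\omega>0\iff\omega>\omega_k$ follows once I verify $L_k(\omega_k)+2\omega_k=0$, i.e.\ that $x=(\alpha_k)^{1/(2-k)}=-2\omega_k$ solves $f(x,\omega_k)=0$. This is a direct substitution: with $x^{2-k}=\alpha_k$, $x^{1-k}=(\alpha_k)^{(1-k)/(2-k)}$ and $2\omega_k=-(\alpha_k)^{1/(2-k)}$, the middle term $2(2-k)\omega_k x^{1-k}$ reduces to $-(2-k)\alpha_k$, so $f(x,\omega_k)=[(3-k)-(2-k)-1]\,\alpha_k=0$. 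By the uniqueness established in part~1 this $x$ equals $L_k(\omega_k)$, giving $L_k(\omega_k)+2\omega_k=0$ and completing the argument.
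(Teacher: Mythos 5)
Your proof is correct, and while part~1 follows the paper's own analysis of $\partial f/\partial x$ almost verbatim, your treatment of parts~2 and~3 is genuinely different. For the monotonicity of $\omega\mapsto L_k(\omega)+2\omega$, the paper first shows via the implicit function theorem that $L_k'(\omega)<0$, then rewrites the defining equation as $(3-k)\bigl(L_k(\omega)+2\omega\bigr)=\alpha_k L_k(\omega)^{k-1}+2\omega$ and observes that the right-hand side is increasing precisely because $L_k$ is decreasing; you instead differentiate $L_k(\omega)+2\omega$ directly, reduce its positivity to that of $(2-k)L_k(\omega)+2(1-k)\omega$, and settle this by eliminating $\omega$ through the constraint and invoking the identity $(2-k)^2-(1-k)(3-k)=1$ --- a purely algebraic step the paper does not have. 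For part~3 the paper establishes $\lim_{\omega\to-\infty}\bigl(L_k(\omega)+2\omega\bigr)=-\infty$ (using the lower bound $L_k(\omega)>-\tfrac{2(1-k)}{3-k}\omega$), notes $L_k(0)>0$, and gets $\omega_k$ by the intermediate value theorem before solving for its value; you bypass the asymptotic analysis entirely by guessing $\omega_k=-\tfrac12\alpha_k^{1/(2-k)}$, verifying by substitution that $x=-2\omega_k$ is a root of $f(\cdot,\omega_k)$, and invoking uniqueness from part~1. Your route is shorter and constructive; the paper's route yields as by-products the strict decrease of $L_k$ and its divergence as $\omega\to-\infty$, facts it implicitly relies on later when discussing the limit cases $\omega\le-\omega_k$.
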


\begin{proof}
We begin by analyzing the partial derivative of \( f \) with respect to \( x \):

\[
\frac{\partial f}{\partial x}(x, \omega) = (2 - k)\left((3 - k)x^{1 - k} + 2(1 - k)\omega x^{-k}\right).
\]

Case 1: \(\omega \geq 0\).
For \( x > 0 \) and \( k \in (-1, 1) \) with \( k \neq 0 \), we have \(\frac{\partial f}{\partial x}(x, \omega) > 0\). Additionally, \( f(0, \omega) = -\alpha_k < 0 \).

Case 2: \(\omega < 0\).
For \( x > -\frac{2(1 - k)}{3 - k}\omega = x_0 > 0 \), we have \(\frac{\partial f}{\partial x}(x, \omega) > 0\). Thus, the function \( x \mapsto f(x, \omega) \) is strictly increasing on \( (x_0, +\infty) \). Since \(\lim_{x \to +\infty} f(x, \omega) = +\infty\) and

\[
f(x_0, \omega) = -(3 - k)x_0^{2 - k} - \alpha_k < 0,
\]

it follows that for every \(\omega \in\Bbb R\) and \( k \in (-1, 1) \), there exists a unique \( L_k(\omega) > 0 \) such that \( f(L_k(\omega), \omega) = 0 \).

2) Implicit Function Theorem and Monotonicity:

Fix \(\omega_0 \in \mathbb{R}\), and let \( L_k(\omega_0) \) be the unique solution as described above. The function \( (x, \omega) \mapsto f(x, \omega) \) is continuously differentiable on \( (0, +\infty) \times \mathbb{R} \), and

\[
\frac{\partial f}{\partial x}(L_k(\omega_0), \omega_0) = (2 - k)\left((3 - k)L_k(\omega_0)^{1 - k} + 2(1 - k)\omega_0 L_k(\omega_0)^{-k}\right).
\]

From the definition of \( L_k(\omega_0) \), we deduce:

\(\bullet\) If \(\omega_0 \geq 0\), then \(\frac{\partial f}{\partial x}(L_k(\omega_0), \omega_0) > 0\).

\(\bullet\) If \(\omega_0 < 0\), then
  \[
  \frac{\partial f}{\partial x}(L_k(\omega_0), \omega_0) = (2 - k)\frac{\alpha_k - 2\omega_0 L_k(\omega_0)^{1 - k}}{L_k(\omega_0)} > 0.
  \]

By the implicit function theorem, there exists an open set \( U \subset \mathbb{R} \) containing \(\omega_0\) and a unique continuously differentiable function \( g: U \to (0, +\infty) \) such that \( g(\omega_0) = L_k(\omega_0) \) and \( f(g(\omega), \omega) = 0 \) for all \(\omega \in U\). By uniqueness, \( g(\omega) = L_k(\omega) \). Moreover,

\[
g'(\omega)\frac{\partial f}{\partial x}(g(\omega), \omega) + \frac{\partial f}{\partial \omega}(g(\omega), \omega) = 0.
\]

Thus,

\[
L_k'(\omega) = g'(\omega) = -\frac{2(2 - k)L_k(\omega)^{1 - k}}{\frac{\partial f}{\partial x}(L_k(\omega), \omega)} < 0.
\]

This shows that the function \(\omega \mapsto L_k(\omega)\) is strictly decreasing on \(\mathbb{R}\).

3) Behavior of \( L_k(\omega) \) and Existence of \(\omega_k\):

From the equation \( f(L_k(\omega), \omega) = 0 \), we derive:

\[
(3 - k)(L_k(\omega) + 2\omega) = \frac{\alpha_k}{L_k(\omega)^{1 - k}} + 2\omega.
\]

The right-hand side is a strictly increasing function on \(\mathbb{R}\), and since \( k < 3 \), the function \(\theta(\omega) = L_k(\omega) + 2\omega\) is also strictly increasing. Using the inequality \( L_k(\omega) > -\frac{2(1 - k)}{3 - k}\omega \) for \(\omega < 0\), we find that \(\lim_{\omega \to -\infty} L_k(\omega) = +\infty\). Consequently, from the equation above, \(\lim_{\omega \to -\infty} \theta(\omega) = -\infty\). Since \(\theta(0) = L_k(0) = \left(\frac{\alpha_k}{3 - k}\right)^{1/(2 - k)} > 0\), there exists a unique \(\omega_k < 0\) such that

\[
L_k(\omega_k) + 2\omega_k = 0.
\]

Substituting into the equation \( f(L_k(\omega), \omega) = 0 \), we obtain:

\[
\omega_k = -\frac{1}{2} \alpha_k^{1/(2 - k)}.
\]
This completes the proof.
\end{proof}

In the following theorem, we derive an explicit computation of the density of the measure \(\mu^*_{k, \omega}\).

\begin{theorem}\label{tii}
For \(k \in (-1, 1)\), \(k \neq 0\), define the function
\[
g_{k,\omega}(t) = \frac{1}{|k| C_k} \left(L_k(\omega) + \omega - t\right)^{\frac{1-k}{2}} (t - \omega)^{-\frac{1+k}{2}} \left(t - k\omega + \frac{1-k}{2} L_k(\omega)\right),
\]
and \(\nu_{k,\omega}\) the probability measure with density \(g_{k,\omega}\) supported on the interval \([\omega, \omega + L_k(\omega)]\). Here,
\[
C_k = \frac{\pi(1-k)}{\sin\left(\frac{(1-k)\pi}{2}\right)},
\]
and \(L_k(\omega)\) is the unique positive solution to the equation in the variable \(x\):
\[
(3-k)x^{2-k} + 2(2-k)\omega x^{1-k} = \frac{2^{2-k} |k| C_k \Gamma\left(\frac{4-k}{2}\right)}{\sqrt{\pi} \Gamma\left(\frac{3-k}{2}\right)}.
\]

There exists a constant \(S_{k,\omega}\) such that
\[
\frac{1}{2}x^2 - {\rm sgn}(k) \int_{\mathbb{R}} |x-y|^k \nu_{k,\omega}(dy)
\begin{cases}
= S_{k,\omega}, & \text{if } \omega \leq x \leq \omega + L_k(\omega), \\
\geq S_{k,\omega}, & \text{if } x \geq \omega + L_k(\omega).
\end{cases}
\]
Moreover, for \(k \neq 0\),
\[
\begin{aligned}
E^*_{k,\omega} &= \frac{1}{2}\left(\omega + L_k(\omega)\right)^2 - \frac{1+k}{2k}\left(\omega + \frac{3}{4}L_k(\omega)\right) \\
&+  \Bigg(L_k(\omega)^3 \frac{k^2 - 6k + 5}{8(k^2 - 6k + 8)} + \left((3-k)\omega + \frac{1-k}{2} L_k(\omega)\right) \frac{k-1}{4(k-2)} L_k(\omega)^2 \\
&+ \left((3-2k)\omega + (1-k)L_k(\omega)\right) \omega L_k(\omega) \frac{k-1}{2(k-2)} + \omega^2 \left((1-k)\omega + \frac{1-k}{2} L_k(\omega)\right)\Bigg) \\
&\times \frac{\sqrt{\pi} 2^{k-2} \Gamma\left(\frac{1}{2} - \frac{k}{2}\right)}{|k|C_k\Gamma\left(1 - \frac{k}{2}\right)}L_k(\omega)^{1-k}.\\
&=\frac{1}{2}\left(\omega + L_k(\omega)\right)^2 - \frac{1-k^2}{2k(2-k)(2(2-k)\omega+(3-k)L_k(\omega))}\left(\omega + \frac{3}{4}L_k(\omega)\right) \\
&+\Bigg(\frac{5-k}{8(4-k)}L_k(\omega)^3  + \frac{1}{4}\left((3-k)\omega + \frac{1-k}{2} L_k(\omega)\right)  L_k(\omega)^2 \\
&+ \frac{\omega }{2}\Big((3-2k)\omega + (1-k)L_k(\omega)\Big) L_k(\omega)  +(2-k)\omega^2 \left(\omega + \frac{1}{2} L_k(\omega)\right)\Bigg) \\
\end{aligned}
\]

For \(k = 0\),
\[
g_{0,\omega}(t) = \frac{1}{2\pi} \sqrt{\frac{L_0(\omega) - t}{t}} (2t + 2\omega + L_0(\omega)) \chi_{[0, L_0(\omega)]},
\]
and
\[
L_0(\omega) = \frac{2}{3}\left(\sqrt{\omega^2 + 6} - \omega\right).
\]
\end{theorem}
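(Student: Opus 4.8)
The plan is to identify $\nu_{k,\omega}$ with the equilibrium measure $\mu^*_{k,\omega}$ whose existence and uniqueness are guaranteed by Theorem~\ref{th0}, by verifying the hypotheses of Proposition~\ref{p3} on $\Sigma=[\omega,+\infty)$. Writing $L=L_k(\omega)$ and translating the support to $[0,L]$ via $s=t-\omega$, the density takes the form
\[
\tilde g(s)=\frac{1}{|k|C_k}(L-s)^{\frac{1-k}{2}}\,s^{-\frac{1+k}{2}}\bigl(s+(1-k)(\omega+\tfrac12 L)\bigr),
\]
which is nonnegative on $(0,L)$ precisely when $\omega+\tfrac12 L\geq 0$, the borderline being controlled by Proposition~\ref{pi}(3). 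Setting $W(x)=\tfrac12 x^2-\mathrm{sgn}(k)\int_\omega^{\omega+L}|x-y|^k\,\nu_{k,\omega}(dy)$, it remains to show that $W$ is constant on $[\omega,\omega+L]$ and exceeds that constant on $(\omega+L,+\infty)$; then Proposition~\ref{p3} applies directly.

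First I would differentiate the constancy requirement. For $x$ interior to the support, using $\mathrm{sgn}(k)\,k=|k|$,
\[
W'(x)=x-|k|\int_\omega^{\omega+L}|x-y|^{k-1}\mathrm{sgn}(x-y)\,g_{k,\omega}(y)\,dy,
\]
so $W'\equiv 0$ is equivalent to the singular integral equation $|k|\int|x-y|^{k-1}\mathrm{sgn}(x-y)\,g_{k,\omega}(y)\,dy=x$. To verify that the stated density solves it I would reproduce the mechanism of the Ullman computation in the preceding lemma: for $k\in(0,1)$ represent the odd weakly singular kernel by $|u|^{k-1}\mathrm{sgn}(u)=\tfrac{2\sin(k\pi/2)\Gamma(k)}{\pi}\int_0^\infty t^{-k}\sin(ut)\,dt$ (the range $k\in(-1,0)$ following from the corresponding Gamma-integral representation used in the admissibility proof), interchange the order of integration, recognise the inner $y$-integral as a Bessel integral through $J_\nu(\xi)\propto\int_0^1(1-r^2)^{\nu-1/2}\cos(\xi r)\,dr$, and collapse the $t$-integral by the Weber--Schafheitlin formula [14, (6.699.5)]. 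The linear factor $s+(1-k)(\omega+\tfrac12 L)$ is designed so that its two contributions assemble into exactly $x$ on the support. This cancellation is the step I expect to be the main obstacle, since the $\omega$-dependent shift produces an extra Bessel moment that must combine cleanly with the homogeneous term, and the endpoint contributions require care.

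Once $W'\equiv 0$ on the support, $W$ is constant there with value $S_{k,\omega}$, most conveniently read off at the regular endpoint $x=\omega+L$. The normalization $\int_0^L\tilde g(s)\,ds=1$ reduces, after the substitution $s=Lr$, to a combination of the Beta integrals $B\bigl(\tfrac{3-k}{2},\tfrac{3-k}{2}\bigr)$ and $B\bigl(\tfrac{1-k}{2},\tfrac{3-k}{2}\bigr)$; using $\Gamma(z+1)=z\Gamma(z)$ their ratio simplifies so that the normalization condition becomes exactly
\[
(3-k)L^{2-k}+2(2-k)\omega L^{1-k}=\frac{2^{2-k}|k|C_k\,\Gamma\!\bigl(\tfrac{4-k}{2}\bigr)}{\sqrt{\pi}\,\Gamma\!\bigl(\tfrac{3-k}{2}\bigr)},
\]
whose unique positive root is $L_k(\omega)$ by Proposition~\ref{pi}. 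For $x>\omega+L$ the Weber--Schafheitlin formula gives the vanishing branch on the support and a strictly positive branch outside, so that $W(x)-S_{k,\omega}$ equals a manifestly nonnegative remainder (the analogue of the term $\theta_{k,\beta}$ of the previous lemma), establishing the required inequality.

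Finally, the identity $E(\mu^*)=\tfrac12\int Q\,d\mu^*+\int W\,d\mu^*$ together with $W\equiv S_{k,\omega}$ on the support yields $E^*_{k,\omega}=S_{k,\omega}+\tfrac12\int_\omega^{\omega+L}t^2\,g_{k,\omega}(t)\,dt$. Expanding $t^2=(s+\omega)^2$ and integrating $s^2\tilde g$, $s\tilde g$ and $\tilde g$ against the weight $(L-s)^{\frac{1-k}{2}}s^{-\frac{1+k}{2}}$ produces a finite combination of Beta functions $B\bigl(\tfrac{1-k}{2}+j,\tfrac{3-k}{2}\bigr)$, $j=0,1,2$; assembling these with $S_{k,\omega}$ and simplifying via the duplication and reflection formulas for $\Gamma$ gives the stated closed form for $E^*_{k,\omega}$. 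This last simplification is lengthy but routine. The case $k=0$ is treated by the same variational method applied to the kernel $V_0=\log|x-y|$ (equivalently, as the suitably rescaled limit of the family above, since the naive substitution $k=0$ is singular); the normalization then collapses to the quadratic $3L^2+4\omega L=8$, whose positive root is $L_0(\omega)=\tfrac23\bigl(\sqrt{\omega^2+6}-\omega\bigr)$, recovering the Dean--Majumdar density.
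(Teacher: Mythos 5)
Your global strategy coincides with the paper's: reduce to the variational characterization of Proposition \ref{p3}, observe that the algebraic equation defining $L_k(\omega)$ is exactly the normalization $\int g_{k,\omega}=1$ (your Beta-integral reduction of this is correct), and recover $E^*_{k,\omega}=S_{k,\omega}+\frac12\int t^2\,\nu_{k,\omega}(dt)$ by further Beta integrals. But the step that carries the whole theorem, namely proving that the proposed density solves the singular integral equation
\[
|k|\int_\omega^{\omega+L_k(\omega)}|x-y|^{k-1}\,{\rm sign}(x-y)\,g_{k,\omega}(y)\,dy=x,
\qquad \omega<x<\omega+L_k(\omega),
\]
is left unproven, and the route you indicate for it would fail. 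The Fourier--Bessel--Weber--Schafheitlin mechanism of the Ullman computation depends essentially on the even symmetry of that density: symmetry converts $\sin((x-y)t)$ into $\sin(xt)\cos(yt)$, and the inner integral is then the cosine transform of the symmetric weight $(z^2-y^2)^{-(1+k)/2}$, which is a Bessel function $J_{-k/2}$, so that formula (6.699.5) collapses the $t$-integral. After translating your measure to $[0,L]$, the weight is the asymmetric Jacobi weight $(L-s)^{\frac{1-k}{2}}s^{-\frac{1+k}{2}}$, whose two exponents never coincide: its Fourier transform is a Kummer confluent hypergeometric function, not a Bessel function, there is no parity to separate sine from cosine contributions, and the Weber--Schafheitlin collapse is unavailable. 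You yourself flag this cancellation as ``the main obstacle''; it is not a technical nuisance but the missing proof. (For $k\in(-1,0)$ matters are worse: the representation $|u|^{k-1}\propto\int_0^\infty t^{-k}\sin(|u|t)\,dt$ you start from diverges at infinity when $k<0$, so even the first step fails there.)

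The paper fills precisely this hole with a different tool: Podlubny's fractional-integral identity for Jacobi polynomials (\cite{1}, Corollary 2.2), stated as the lemma just before the proof of Theorem \ref{tii}. Specializing that identity to the weight $(1-t)^{\frac{1-k}{2}}t^{-\frac{1+k}{2}}$ and to the Jacobi polynomials of degree $0$ and $1$ gives the two closed-form integrals
\[
\int_0^1{\rm sign}(x-t)|x-t|^{k-1}(1-t)^{\frac{1-k}{2}}t^{-\frac{1+k}{2}}\Bigl(t+\tfrac{1-k}{2}\Bigr)dt=\frac{\pi(1-k)}{\sin\bigl(\tfrac{1-k}{2}\pi\bigr)}\,x,
\qquad
\int_0^1{\rm sign}(x-t)|x-t|^{k-1}(1-t)^{\frac{1-k}{2}}t^{-\frac{1+k}{2}}\,dt=\frac{\pi}{\sin\bigl(\tfrac{1-k}{2}\pi\bigr)},
\]
whose linear combination, with coefficients encoded in the factor $t-k\omega+\frac{1-k}{2}L_k(\omega)$, reproduces exactly $x$ after the affine substitution $t=\omega+yL_k(\omega)$, $x=\omega+zL_k(\omega)$. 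For the exterior inequality the paper likewise avoids any Bessel analysis: it sets $\psi(x)=x-|k|\int(x-t)^{k-1}\nu_{k,\omega}(dt)$, checks $\psi'\geq 0$ and $\psi(\omega+L_k(\omega))=0$, and integrates --- an argument you could adopt verbatim. As it stands, your proposal becomes a proof only after replacing its central mechanism by (or re-deriving) this Jacobi-polynomial identity.
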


\subsection{Limit Cases}
Let \(k \in (-1, 1)\), \(k \neq 0\). Define
\[
\omega_k = \left(\frac{|k| C_k \Gamma\left(\frac{4-k}{2}\right)}{\sqrt{\pi} \Gamma\left(\frac{3-k}{2}\right)}\right)^{\frac{1}{2-k}}.
\]
If the barrier is located at \(\omega\) with \(\omega > -\omega_k\), the density is given by \(g_{k, \omega}\).

If \(\omega \leq -\omega_k\), then by Proposition \ref{pi}, \(L_k(\omega_k) = 2\omega_k\), and the density \(g_{k,\omega_k}\) is
\[
g_{k,\omega_k}(t) = \frac{1}{|k| C_k} (\omega_k^2 - t^2)^{\frac{1-k}{2}} \chi_{[-\omega_k, \omega_k]}(t).
\]
In this case, the energy is given by
\[
E^*_k = \frac{k-2}{k(4-k)} \omega_k^2.
\]

For \(k = 0\), if the barrier is located at \(\omega\) with \(\omega \geq -\sqrt{2}\), the density is
\[
h_0(t) = \frac{1}{2\pi} \sqrt{\frac{\frac{2}{3}\sqrt{6} - t}{t}} \left(2t + \frac{2}{3}\sqrt{6}\right) \chi_{[0, \frac{2}{3}\sqrt{6}]}.
\]
In this case, the energy is given by \[E_{k, +}=\frac34+\frac12\log 6,\]
If \(\omega \leq -\sqrt{2}\), we obtain the semicircle law
\[
h_{0}(t) = \frac{1}{\pi} \sqrt{2 - t^2},
\]
and \(E_0 = \frac{3}{4} + \frac{1}{2}\log 2\).

\subsection{Special Case:}
Setting \(\omega = 0\) in the previous theorem yields the density for the case where all particles lie on the positive axis, in that case the density is \(h_0\).

\begin{corollary}\label{c1}
For \(k \in (-1, 1)\), \(k \neq 0\), define the function
\[
h_k(t) = \frac{1}{|k| C_k} (a_k - t)^{\frac{1-k}{2}} t^{-\frac{1+k}{2}} \left(t + a_k \frac{1-k}{2}\right),
\]
where
\[
C_k = \frac{\pi(1-k)}{\sin\left(\frac{(1-k)\pi}{2}\right)},
\]
and
\[
a_k = 2(3-k)^{-\frac{1}{2-k}} \omega_k.
\]
Let \(\mu_k\) be the probability measure supported on \([0, a_k]\) with density \(h_k\). There exists a constant \(\sigma_k\) such that
\[
\frac{1}{2}x^2 - {\rm sgn}(k) \int_{\mathbb{R}} |x-y|^k \mu_k(dy)
\begin{cases}
= \sigma_k, & \text{if } 0 < x \leq a_k, \\
\geq \sigma_k, & \text{if } x \geq a_k.
\end{cases}
\]
Moreover, for \(k \neq 0\),
\[
E^*_{k, +} = \frac{k-2}{k(4-k)(3-k)^{\frac{k}{2-k}}} \omega_k^2.
\]
For \(k = 0\),
\[
h_0(t) = \frac{1}{2\pi} \sqrt{\frac{\frac{2}{3}\sqrt{6} - t}{t}} \left(2t + \frac{2}{3}\sqrt{6}\right) \chi_{[0, \frac{2}{3}\sqrt{6}]},
\]
and
\[
E^*_{0, +} = \frac{3}{4} + \frac{1}{2}\log 6.
\]
\end{corollary}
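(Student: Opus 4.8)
The plan is to read off Corollary \ref{c1} as the specialization of Theorem \ref{tii} to $\omega = 0$. Indeed, the constrained class $\mathcal{M}^1_0(\mathbb{R}) = \{\mu : \mu([0,\infty)) = 1\}$ describes precisely the configurations in which every particle lies on the positive half-line, while the confining potential remains $Q(x) = x^2$; hence existence, uniqueness and compactness of the support of the minimizer are already supplied by Theorem \ref{th0}, and it remains only to extract the density, the variational inequality and the value of the energy from the $\omega = 0$ instance of Theorem \ref{tii}. Since $\omega_k > 0$, the value $\omega = 0$ satisfies $\omega > -\omega_k$, so Proposition \ref{pi}(3) gives $L_k(0) > 0$ and we are in the regime governed by $g_{k,\omega}$ rather than by the symmetric limiting law of the Limit Cases subsection.

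First I would identify the endpoint. Putting $\omega = 0$ in the defining equation $(3-k)x^{2-k} + 2(2-k)\omega x^{1-k} = \frac{2^{2-k}|k|C_k\Gamma(\frac{4-k}{2})}{\sqrt\pi\,\Gamma(\frac{3-k}{2})}$ for $L_k(\omega)$ reduces it to $(3-k)x^{2-k} = 2^{2-k}\omega_k^{2-k}$, using the identity $\omega_k^{2-k} = \frac{|k|C_k\Gamma(\frac{4-k}{2})}{\sqrt\pi\,\Gamma(\frac{3-k}{2})}$. By Proposition \ref{pi}(1) this has a unique positive root, namely $L_k(0) = 2(3-k)^{-1/(2-k)}\omega_k = a_k$. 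Substituting $\omega = 0$ and $L_k(0) = a_k$ into $g_{k,\omega}$ collapses the factor $t - k\omega + \frac{1-k}{2}L_k(\omega)$ to $t + \frac{1-k}{2}a_k$ and the interval $[\omega, \omega + L_k(\omega)]$ to $[0, a_k]$, so that $g_{k,0} = h_k$. The variational statement of the corollary, with $\sigma_k = S_{k,0}$, is then verbatim the $\omega = 0$ case of the inequality in Theorem \ref{tii}, and by Proposition \ref{p3} (applied with $\Sigma = [0,\infty)$) this characterization identifies $\mu_k$ as the equilibrium measure $\mu^*_{k,0}$.

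The only genuinely computational point is the energy, and I would approach it in either of two equivalent ways. The shorter route integrates the variational equality against $\mu_k$: from $\frac12 x^2 - \mathrm{sgn}(k)\int_{\mathbb{R}}|x-y|^k\mu_k(dy) = \sigma_k$ on $[0,a_k]$ one gets $E^*_{k,+} = \frac12\int_0^{a_k} t^2 h_k(t)\,dt + \sigma_k$, reducing everything to the second moment of $h_k$ and the constant $\sigma_k$ (the latter evaluated at a convenient endpoint). The alternative route substitutes $\omega = 0$, $L_k(0) = a_k$ directly into the energy formula of Theorem \ref{tii}: every summand carrying an overall factor of $\omega$ then vanishes, and the surviving terms are simplified with the relation $a_k^{2-k} = \frac{2^{2-k}}{3-k}\omega_k^{2-k}$ and repeated applications of $\Gamma(z+1) = z\Gamma(z)$ to the prefactor $\frac{\sqrt\pi\,2^{k-2}\Gamma(\frac{1-k}{2})}{|k|C_k\Gamma(\frac{2-k}{2})}$. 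Both are expected to consolidate to $E^*_{k,+} = \frac{k-2}{k(4-k)(3-k)^{k/(2-k)}}\omega_k^2$.

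The hard part is exactly this algebraic collapse: the energy inherited from Theorem \ref{tii} is a lengthy polynomial in $\omega$ and $L_k(\omega)$ times a ratio of Gamma functions, and one must track every factor of $(3-k)$ and each half-integer shift of the Gamma arguments so that the various powers of $a_k$ and $\omega_k$ reassemble into the single monomial $(3-k)^{-k/(2-k)}\omega_k^2$. Finally, the case $k = 0$ must be handled separately, since the limit $k \to 0$ of $V_k$ is singular (one uses $\frac{|x|^k-1}{k} \to \log|x|$, as in the admissibility proof): the $k = 0$ branch of Theorem \ref{tii} gives $L_0(0) = \frac{2}{3}\sqrt 6$, the density $h_0(t) = \frac{1}{2\pi}\sqrt{\frac{\frac23\sqrt6 - t}{t}}(2t + \frac23\sqrt6)\chi_{[0,\frac23\sqrt6]}$, and, after the logarithmic renormalization, the energy $E^*_{0,+} = \frac34 + \frac12\log 6$.
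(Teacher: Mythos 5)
Your specialization route is sound, and the identifications it rests on all check out: at $\omega=0$ the defining equation for $L_k$ reduces, via the duplication identity $\Gamma\bigl(\tfrac{3-k}{2}\bigr)\Gamma\bigl(\tfrac{4-k}{2}\bigr)=2^{k-2}\sqrt{\pi}\,\Gamma(3-k)$, to $(3-k)x^{2-k}=2^{2-k}\omega_k^{2-k}$, so $L_k(0)=2(3-k)^{-1/(2-k)}\omega_k=a_k$; then $g_{k,0}=h_k$, the support becomes $[0,a_k]$, and $\sigma_k=S_{k,0}$, so the variational statement is verbatim the $\omega=0$ instance of Theorem \ref{tii}. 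This is, however, not how the paper proves the corollary: its proof is a self-contained re-verification that never invokes Theorem \ref{tii}. It defines $\mu_k$ directly (with $a_k$ written in an equivalent Gamma-function form), obtains the inequality for $x\geq a_k$ by the same monotonicity argument used for the theorem (the auxiliary function $\varphi_k(x)=x-|k|\int_0^{a_k}(x-t)^{k-1}\mu_k(dt)$ has nonnegative derivative), and computes $\sigma_k=-\mathrm{sgn}(k)\int_{\mathbb{R}_+}t^k\mu_k(dt)$ and $\tfrac12\int_{\mathbb{R}_+}t^2\mu_k(dt)$ as Beta integrals, assembling $E^*_{k,+}=\sigma_k+\tfrac12\int t^2\mu_k(dt)$. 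Your approach buys economy and matches the paper's own ``Special Case'' remark; the paper's buys explicit intermediate constants that make the final formula transparent. The one place your proposal stops short is precisely where the paper does its work: you defer the energy as an ``algebraic collapse'' without executing it. Note that your first route is literally the paper's computation and is far lighter than your second (substituting into the unwieldy general-$\omega$ energy expression of Theorem \ref{tii}): the Beta integrals, together with $C_k=2\Gamma\bigl(\tfrac{1+k}{2}\bigr)\Gamma\bigl(\tfrac{3-k}{2}\bigr)$, give
\[
\sigma_k=\frac{k-3}{8k}\,a_k^2,\qquad \frac12\int_0^{a_k}t^2\,\mu_k(dt)=\frac{3-k}{8(4-k)}\,a_k^2
\]
(the paper misprints the first as $\frac{(k-3)a_k}{8k}$), whence
\[
E^*_{k,+}=\frac{(k-2)(3-k)}{4k(4-k)}\,a_k^2=\frac{k-2}{k(4-k)}\,(3-k)^{-\frac{k}{2-k}}\,\omega_k^2,
\]
using $a_k^2=4(3-k)^{-2/(2-k)}\omega_k^2$. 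Had you carried out this step, your proof would be complete and, apart from citing the theorem instead of re-verifying the Euler--Lagrange conditions, essentially equivalent in content to the paper's.
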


\begin{proposition}
For every \(k \in (0, 2)\), the probability \(p_{n, k}\) that all particles are positive is given by
\[
\lim_{n \to \infty} \frac{1}{n^{2+k}} \log p_{n,k} = \frac{\left(1 - (3-k)^{-\frac{k}{2-k}}\right)(2-k)}{k(4-k)} \omega_k^2.
\]
For \(k = 0\),
\[
\lim_{n \to \infty} \frac{1}{n^{2}} \log p_{n, 0} = \frac{1}{2}\log 3.
\]
\end{proposition}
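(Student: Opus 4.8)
The plan is to reduce the statement to a comparison of two free energies. Since the confining potential is $Q(x)=x^2$, write the constrained partition functions $Z_{\Sigma,n}=\int_{\Sigma^n}e^{-E(x)}\,dx$ with $E(x)=\sum_i x_i^2-\mathrm{sign}(k)\sum_{i\neq j}|x_i-x_j|^k$. Then, because requiring all particles to be positive is exactly restricting the configuration domain to $[0,\infty)^n$,
\[
p_{n,k}=\frac{Z_{[0,\infty),n}}{Z_{\mathbb{R},n}}.
\]
So it suffices to obtain the leading asymptotics of $\log Z_{\Sigma,n}$ for $\Sigma=[0,\infty)$ and $\Sigma=\mathbb{R}$ and subtract; the two equilibrium energies needed are already available in Corollary~\ref{c1} and in the limit-case analysis.

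First I would rescale the particles by $x_i=n^{1/(2-k)}y_i$, so that the empirical measure $\mu_n=\frac1n\sum_i\delta_{y_i}$ is the quantity that concentrates. Balancing the confinement ($\sim n^{1+2\gamma}$) against the interaction ($\sim n^{2+k\gamma}$) forces $\gamma=1/(2-k)$ and yields $E(x)=s_n\,E_k(\mu_n)+o(s_n)$ with speed $s_n=n^{(4-k)/(2-k)}$ (which equals $n^2$ when $k=0$), where $E_k$ is the mean-field functional of Theorem~\ref{th0}. I would then establish the large deviation principle for $\mu_n$ at speed $s_n$ with good rate function $\mu\mapsto E_k(\mu)-E^*_{k,\Sigma}$, via the standard Riesz-gas scheme: the upper bound from lower semicontinuity together with the energy bound \eqref{k}, and the lower bound from approximating $\mu^*_{k,\Sigma}$ by discrete configurations. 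The decisive point is that the volume (entropy) contribution of $\int dx$ is of order $n\log n=o(s_n)$, because the exponent $(4-k)/(2-k)>1$; hence the entropy never enters the leading order and
\[
\lim_{n\to\infty}\frac{1}{s_n}\log Z_{\Sigma,n}=-E^*_{k,\Sigma}.
\]

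Subtracting the two limits gives $\frac{1}{s_n}\log p_{n,k}\to E^*_k-E^*_{k,+}$, where $E^*_k=E^*_{k,\mathbb{R}}$ is the unconstrained (symmetric, $\beta=2$) equilibrium energy and $E^*_{k,+}=E^*_{k,[0,\infty)}$ is the positive-axis energy of Corollary~\ref{c1}. Substituting $E^*_k=\frac{k-2}{k(4-k)}\omega_k^2$ and $E^*_{k,+}=\frac{k-2}{k(4-k)(3-k)^{k/(2-k)}}\omega_k^2$ gives $E^*_k-E^*_{k,+}=-\frac{(2-k)\bigl(1-(3-k)^{-k/(2-k)}\bigr)}{k(4-k)}\omega_k^2$, whose magnitude is exactly the stated constant (the sign being fixed by $p_{n,k}\le1$). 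For $k=0$ the interaction is logarithmic, the speed is $s_n=n^2$, and the required free-energy limit is the classical log-gas result; substituting $E^*_0=\frac34+\frac12\log2$ and $E^*_{0,+}=\frac34+\frac12\log6$ produces the gap $\frac12\log3$.

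The hard part will be the rigorous large deviation lower bound for the constrained ensemble $\Sigma=[0,\infty)$. The positive-axis equilibrium density $h_k$ is singular at the hard wall $t=0$, behaving like $t^{-(1+k)/2}$ (an integrable singularity for $k\in(0,1)$), so the discrete configurations approximating $\mu^*_{k,+}$ must be arranged to reproduce this concentration near the wall while keeping their interaction energy close to $E^*_{k,+}$; the analogous positive-axis computation must also be carried out for $k\in(1,2)$, where the density profile differs. I expect this wall analysis, rather than the algebra of inserting the closed-form energies, to be the delicate step: once $\frac{1}{s_n}\log Z_{\Sigma,n}\to-E^*_{k,\Sigma}$ is secured for both domains, the proposition follows immediately by subtraction.
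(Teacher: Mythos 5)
You follow the same overall strategy as the paper (Sections~\ref{den}--\ref{app} and the section on the maximal particle): write $p_{n,k}$ as a ratio of the constrained to the unconstrained partition function, establish $\frac{1}{s_n}\log Z_{\Sigma,n}\to -E^*_{\Sigma,k}$ for $\Sigma=[0,\infty)$ and $\Sigma=\mathbb{R}$, subtract, and insert the closed-form energies $E^*_k=\frac{k-2}{k(4-k)}\omega_k^2$ and $E^*_{k,+}=\frac{k-2}{k(4-k)(3-k)^{k/(2-k)}}\omega_k^2$ from Corollary~\ref{c1}; you also correctly observed that the limit must be negative, i.e.\ that the displayed formula should carry a minus sign (as it does in the paper's later computation of $\lim_n -\frac{1}{n^{2+\epsilon_k}}\log D(0,n)$). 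However, there is a genuine gap: you work with the wrong ensemble, and therefore at the wrong speed. The probability $p_{n,k}$ in this proposition refers to the Gibbs measure of Section~\ref{app}, whose Boltzmann weight already carries the $n$-dependent factors $n^{1+\epsilon_k}\sum_i Q(x_i)$ and $n^{\epsilon_k}\sum_{i\neq j}V_k(x_i-x_j)$; there the particles are $O(1)$, no rescaling is needed, and both terms of the Hamiltonian are of order $n^{2+\epsilon_k}=n^{2+k}$, which is exactly the speed in the statement and in Proposition~\ref{pro4}. You instead start from the unscaled energy $E(x)=\sum_i x_i^2-\mathrm{sign}(k)\sum_{i\neq j}|x_i-x_j|^k$ and rescale $x_i=n^{1/(2-k)}y_i$, obtaining the speed $s_n=n^{(4-k)/(2-k)}$. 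But $(4-k)/(2-k)=2+k$ forces $k^2=k$, so the two speeds agree only at $k=0$ and $k=1$; indeed the two ensembles are not related by any rescaling of the particles unless $k=1$ (matching the confinement forces $c_n=n^{-(1+k)/2}$ while matching the interaction forces $c_n=n^{-1}$). Consequently your argument, even if fully justified, proves $\frac{1}{s_n}\log p_{n,k}\to E^*_k-E^*_{k,+}$ for a different model; at the proposition's speed $n^{2+k}$ your ensemble would yield the limit $0$ for $k\in(0,1)$ and $-\infty$ for $k\in(1,2)$, since $s_n/n^{2+k}=n^{-k(1-k)/(2-k)}$. The final claim ``the proposition follows immediately by subtraction'' therefore fails as written.

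The repair is straightforward and brings you back to the paper's route: run the identical free-energy argument directly on the pre-scaled ensemble, for which Proposition~\ref{pro4} already gives $\lim_n \frac{1}{n^{2+\epsilon_k}}\log Z_{\Sigma,n,k}=-E^*_{\Sigma,k}$ simultaneously for $\Sigma=\mathbb{R}$ and $\Sigma=[0,\infty)$. Two further remarks. First, you overestimate the difficulty at the hard wall: the lower bound in Proposition~\ref{pro4} is obtained by Jensen's inequality with smooth compactly supported trial densities approximating the equilibrium measure in energy, not by discrete configurations, and since the wall singularity $t^{-(1+k)/2}$ of $h_k$ is integrable for $k\in(-1,1)$ this approximation causes no trouble; the entropy term it produces is $O(n^{1+\epsilon_k})=o(n^{2+\epsilon_k})$, which is the precise analogue of your $n\log n$ observation. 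Second, your caveat about $k\in(1,2)$ is well taken, but it is a gap in the statement itself rather than in your plan: Corollary~\ref{c1} and Theorem~\ref{tii} compute the positive-axis equilibrium energy only for $k\in(-1,1)$, so for $k\in(1,2)$ the constant $E^*_{k,+}$ appearing in the subtraction is not actually established anywhere in the paper.
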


\begin{remark}
The following limits hold:
\begin{enumerate}
\item
\[
\lim_{k \to 0} \frac{E^*_k + 1 + \frac{1}{2}k \log k}{k} = E_0^* = \frac{3}{4} + \frac{1}{2}\log 2.
\]
\item
\[
\lim_{k \to 0} \frac{E^*_{k, +} + 1 + \frac{1}{2}k \log k}{k} = E_{0, +}^* = \frac{3}{4} + \frac{1}{2}\log 6.
\]
\item
\[
\lim_{k \to 0} \lim_{n \to \infty} \frac{1}{k n^{2+k}} \log p_{n,k} = \frac{1}{2}\log 3.
\]
\end{enumerate}
\end{remark}
To prove Theorem \ref{tii}, we require the following lemma:

\begin{lemma}
For \(k \in (-1, 1)\), \(k \neq 0\), and \(x \in (0, 1)\),
\[
\int_{0}^1 {\rm sign}(x-t) |x-t|^{k-1} (1-t)^{\frac{1-k}{2}} t^{-\frac{1+k}{2}} \left(t + \frac{1-k}{2}\right) dt = \frac{\pi(1-k)}{\sin\left(\frac{1-k}{2}\pi\right)} x,
\]
and
\[
\int_{0}^1 {\rm sign}(x-t) |x-t|^{k-1} (1-t)^{\frac{1-k}{2}} t^{-\frac{1+k}{2}} dt = \frac{\pi}{\sin\left(\frac{1-k}{2}\pi\right)}.
\]
\end{lemma}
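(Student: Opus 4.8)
Both identities are the differentiated form of the equilibrium condition: since \(\mathrm{sign}(x-t)|x-t|^{k-1}=\tfrac1k\,\partial_x|x-t|^{k}\), the first identity says exactly that \(|k|\int\mathrm{sign}(x-y)|x-y|^{k-1}\mu_k(dy)=x\) on the support, while the second fixes the total mass; for \(k\in(-1,0)\) the kernel \(|x-t|^{k-1}\) is hypersingular, so all integrals are read in the principal-value (finite-part) sense, which converges because the odd factor \(\mathrm{sign}(x-t)\) annihilates the leading singularity at \(t=x\) and the weight is smooth in \((0,1)\). Writing \(\rho_0(t)=t^{-(1+k)/2}(1-t)^{(1-k)/2}\) and decomposing \((t+\tfrac{1-k}2)\rho_0(t)=[t(1-t)]^{(1-k)/2}+\tfrac{1-k}2\,\rho_0(t)\), the first identity reduces to \(I_1(x)=J(x)+\tfrac{1-k}2 I_0(x)\), where \(I_0(x)=\int_0^1\mathrm{sign}(x-t)|x-t|^{k-1}\rho_0(t)\,dt\) is precisely the second identity and \(J(x)=\int_0^1\mathrm{sign}(x-t)|x-t|^{k-1}[t(1-t)]^{(1-k)/2}\,dt\). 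Hence it suffices to evaluate \(J\) and \(I_0\).

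The plan is to follow the sine-transform technique already used for the generalized Ullman density. For \(0<k<1\) I would insert the representation \(\mathrm{sign}(x-t)|x-t|^{k-1}=\bigl(\Gamma(1-k)\sin\tfrac{(1-k)\pi}2\bigr)^{-1}\int_0^\infty\tau^{-k}\sin((x-t)\tau)\,d\tau\), justify Fubini after grouping the oscillatory factors, and expand \(\sin((x-t)\tau)=\sin(x\tau)\cos(t\tau)-\cos(x\tau)\sin(t\tau)\), so that everything is governed by the sine and cosine transforms of the weights on \([0,1]\). For the symmetric weight in \(J\), centering at \(t=\tfrac12\) via \(t=\tfrac12(1+s)\) gives \(t(1-t)=\tfrac14(1-s^2)\), and the Poisson representation \(J_\nu(\xi)=\frac{2(\xi/2)^\nu}{\sqrt\pi\,\Gamma(\nu+\frac12)}\int_0^1(1-s^2)^{\nu-1/2}\cos(s\xi)\,ds\) with \(\nu=1-\tfrac k2\) converts these transforms into an explicit multiple of \(\cos(\tau/2)\,J_{\nu}(\tau/2)\) and \(\sin(\tau/2)\,J_{\nu}(\tau/2)\). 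The phase factors recombine into \(\sin\bigl((x-\tfrac12)\tau\bigr)\), so \(J(x)\) becomes a Weber–Schafheitlin discontinuous integral of the type \([14,(6.699.5)]\); the contribution where \(x\) lies inside the support drops out and leaves an elementary power integral over \((0,x)\) that is linear in \(x\).

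Evaluating the resulting Beta integrals and collapsing the Gamma factors with the reflection formula \(\Gamma(\tfrac{1-k}2)\Gamma(\tfrac{1+k}2)=\pi/\cos\tfrac{k\pi}2=\pi/\sin\tfrac{(1-k)\pi}2\) should give \(J(x)=\frac{\pi(1-k)}{\sin\frac{(1-k)\pi}2}\bigl(x-\tfrac12\bigr)\) and \(I_0(x)=\frac{\pi}{\sin\frac{(1-k)\pi}2}\); substituting into \(I_1=J+\tfrac{1-k}2 I_0\) yields \(I_1(x)=\frac{\pi(1-k)}{\sin\frac{(1-k)\pi}2}x\), as claimed. The restriction \(0<k<1\) is then removed by analytic continuation: both sides are holomorphic in \(k\) on \(-1<\mathrm{Re}\,k<1\) (the denominator \(\sin\frac{(1-k)\pi}2\) has no zero there, and the principal-value integrals depend analytically on \(k\)), so the identities persist on \((-1,0)\).

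The hard part will be the mass integral \(I_0\) with the hard-edge weight \(\rho_0\). Unlike the symmetric weight in \(J\), the Fourier transform of \(t^{-(1+k)/2}(1-t)^{(1-k)/2}\) is a genuine confluent hypergeometric function \(M\bigl(\tfrac{1-k}2,2-k,\cdot\bigr)\) rather than a single Bessel function (its parameters fail the \(c=2b\) relation), so the clean discontinuous-integral argument does not apply verbatim. I would treat \(I_0\) either by splitting at \(t=x\) and evaluating the two incomplete Beta integrals directly — the substitutions \(t=xs\) and \(t=x+(1-x)s\) turn them into \({}_2F_1\)'s, and a contiguous/connection relation should show the \(x\)-dependence cancels — or by regularizing the singularity at \(t=0\) and passing to the limit. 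Matching the precise Bessel order in \(J\) to an admissible tabulated discontinuous integral, and rigorously justifying the interchange of the singular oscillatory integrals together with the finite-part interpretation for \(-1<k<0\), are where the genuine work lies rather than in the final algebra.
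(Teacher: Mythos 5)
There is a genuine gap: your proposal never actually proves the second identity, and by your own decomposition the first identity needs it too. Writing $\rho_0(t)=t^{-(1+k)/2}(1-t)^{(1-k)/2}$ and $\bigl(t+\tfrac{1-k}{2}\bigr)\rho_0(t)=[t(1-t)]^{(1-k)/2}+\tfrac{1-k}{2}\rho_0(t)$ correctly reduces the lemma to evaluating $J(x)$ and $I_0(x)$; but $I_0$ \emph{is} the second assertion of the lemma, and for it you only list candidate strategies (incomplete-Beta/${}_2F_1$ connection formulas that ``should show the $x$-dependence cancels'', or a regularization) without carrying any of them out --- you say yourself this is where the genuine work lies. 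Since $I_0$ is simultaneously one of the two claims and an input to the other, the proposal as written establishes neither claim. A secondary, fixable, inaccuracy: for $J$, after centering $t=\tfrac12(1+s)$ you arrive at an integral proportional to $\int_0^\infty \tau^{-1-k/2}\sin\bigl((x-\tfrac12)\tau\bigr)J_{1-k/2}(\tau/2)\,d\tau$, in which the power of $\tau$ does not equal the Bessel order, so the tabulated integral [14, (6.699.5)] you invoke does not apply verbatim; you need the general Weber--Schafheitlin discontinuous integral, whose value for $|x-\tfrac12|<\tfrac12$ is proportional to $(x-\tfrac12)\,{}_2F_1\bigl(1-\tfrac{k}{2},\,0;\,\tfrac32;\,(2x-1)^2\bigr)$, and it is the vanishing second parameter that makes $J$ exactly linear in $x$.

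The paper avoids your hard part entirely: it quotes Podlubny [1, Corollary 2.2], which states that the kernel ${\rm sign}(x-t)|x-t|^{k-1}$ maps the Jacobi weight $\rho_0$ times a Jacobi polynomial $P_m^{((1-k)/2,\,-(1+k)/2)}$ to an explicit constant times a Jacobi polynomial in $x$; the choice $\nu=1-k$, $r=0$, $n=-1$, $m=1$, together with the explicit degree-one Jacobi polynomials, yields the linear identity, and the constant identity is the degree-zero instance of the same formula --- which is exactly your missing $I_0$. So the structural fact you lack, namely that this specific weight is mapped polynomially by the Riesz kernel, is precisely what the reference supplies; to close your gap you must either invoke such a result or genuinely carry out the ${}_2F_1$ computation for $I_0$. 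Your principal-value and analytic-continuation remarks for $-1<k<0$ are sensible (and the paper is silent on that range), but they are moot until the case $0<k<1$ is complete.
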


\begin{proof}
From \cite{1}, Corollary 2.2, we have for every integers \(m, n, r\) with \(m + r + n + 1 > 0\), \(\nu \in (0, 1)\), and \(x \in (-1, 1)\),
\[
\int_{-1}^1 \frac{{\rm sign}(x-t)}{|x-t|^\nu} (1-t)^{\nu/2 + r}(1+t)^{\nu/2 + n}P_m^{\nu/2 + r, \nu/2 + n}(t) \, dt = v_{n, m, r, \nu} P_{m + r + n + 1}^{\nu/2 - r - 1, \nu/2 - n - 1}(x),
\]
where \(P_m^{\alpha, \beta}\) are the Jacobi polynomials and
\[
v_{n, m, r, \nu} = \frac{\pi (-1)^r 2^{r + n + 1} \Gamma(m + \nu)}{m! \Gamma(\nu) \sin\left(\frac{\nu \pi}{2}\right)}.
\]

Let \(y \in (-1, 1)\). Then,
\[
2^{1+\nu+r+n} \int_{0}^1 \frac{{\rm sign}(y + 1 - 2t)}{|y + 1 - 2t|^\nu} (1-t)^{\nu/2 + r}t^{\nu/2 + n}P_m^{\nu/2 + r, \nu/2 + n}(2t - 1) \, dt = v_{n, m, r, \nu} P_{m + r + n + 1}^{\nu/2 - r - 1, \nu/2 - n - 1}(y).
\]

For \(x = \frac{y + 1}{2} \in (0, 1)\),
\[
2^{1+r+n} \int_{0}^1 \frac{{\rm sign}(x - t)}{|x - t|^\nu} (1-t)^{\nu/2 + r}t^{\nu/2 + n}P_m^{\nu/2 + r, \nu/2 + n}(2t - 1) \, dt = v_{n, m, r, \nu} P_{m + r + n + 1}^{\nu/2 - r - 1, \nu/2 - n - 1}(2x - 1).
\]

%Similarly,
%\[2^{1+r+n} \int_{0}^1 {\rm sign}(x - t) |x - t|^{-\nu} (1-t)^{\nu/2 + r}t^{\nu/2 + n}P_m^{\nu/2 + r, \nu/2 + n}(2t - 1) \, dt = v_{n, m, r, \nu} P_{m + r + n + 1}^{\nu/2 - r - 1, \nu/2 - n - 1}(2x - 1),\]
%where
%\[v_{n, m, r, \nu} = \frac{\pi (-1)^r 2^{n + r + 1} \Gamma(m + \nu)}{m! \Gamma(\nu) \sin\left(\frac{\nu \pi}{2}\right)},\]
%and \(m + n + r + 1 \geq 0\).

For \(k \in (0, 1)\), \(\nu = 1 - k\), \(n = -1\), \(r = 0\), and \(m = 1\), we get
\[
\int_{0}^1 {\rm sign}(x - t) |x - t|^{k - 1} (1 - t)^{\frac{1 - k}{2}} t^{-\frac{1 + k}{2}} P_1^{\frac{1 - k}{2}, -\frac{1 + k}{2}}(2t - 1) \, dt = \frac{\pi(1 - k)}{\sin\left(\frac{1 - k}{2} \pi\right)} P_1^{-\frac{1 + k}{2}, \frac{1 - k}{2}}(2x - 1).
\]

Since,
\[
P_1^{\frac{1 - k}{2}, -\frac{1 + k}{2}}(2t - 1) = \frac{3 - k}{2} + (2 - k)(t - 1) = (2 - k)t + \frac{k - 1}{2},
\]
and
\[
P_1^{-\frac{1 + k}{2}, \frac{1 - k}{2}}(2x - 1) = \frac{1 - k}{2} + (2 - k)(x - 1) = (2 - k)x + \frac{k - 3}{2}.
\]

Thus,
\[
\int_{0}^1 {\rm sign}(x - t) |x - t|^{k - 1} (1 - t)^{\frac{1 - k}{2}} t^{-\frac{1 + k}{2}} \left((2 - k)t + \frac{k - 1}{2}\right) dt = \frac{\pi(1 - k)}{\sin\left(\frac{1 - k}{2} \pi\right)} \left((2 - k)x + \frac{k - 3}{2}\right).
\]

From this, we deduce that
\[
\int_{0}^1 {\rm sign}(x - t) |x - t|^{k - 1} (1 - t)^{\frac{1 - k}{2}} t^{-\frac{1 + k}{2}} dt = \frac{\pi}{\sin\left(\frac{1 - k}{2} \pi\right)},
\]
and
\[
\int_{0}^1 {\rm sign}(x - t) |x - t|^{k - 1} (1 - t)^{\frac{1 - k}{2}} t^{-\frac{1 + k}{2}} \left(t + \frac{1 - k}{2}\right) dt = \frac{\pi(1 - k)}{\sin\left(\frac{1 - k}{2} \pi\right)} x.
\]
This completes the proof of the lemma.
\end{proof}

\begin{proof}[Proof of Theorem \ref{tii}.]
For $x\in(\omega, \omega+L_k(\omega))$, by the substitution $t=\omega+y L_k(\omega)$ and $x=\omega+zL_k(\omega)$, we get
$$\begin{aligned}&|k|\int_0^\infty|x-t|^{k-1}g_{\omega,k}(t)dt\\
&=\frac1{C_{k}}\int_{0}^{1}{\rm sign}(z-y)|z-y|^{k-1}(1-y)^{\frac{1-k}2}y^{-\frac{1+k}2}((y+\frac{1-k}2)L_k(\omega)+(1-k)\omega)dy\\
&=zL_k(\omega)+\omega=x\\
\end{aligned}$$
Therefore,  for $x\in(\omega, \omega+L_k(\omega))$
$${\rm sgn}(k)\int_0^\infty|x-t|^{k}g_{\omega,k}(t)dt=\frac12 x^2+S_{k,\omega}.$$
Now, for $x\geq \omega+L_k(\omega)$, let
$$\psi(x)=x-|k|\int_0^\infty(x-t)^{k-1}\nu_{\omega,k}(t)dt.$$
We have, $$\psi'(x)=1+(1-k)|k|\int_0^\infty(x-t)^{k}g_{\omega,k}(t)dt\geq 0.$$
and $$\psi(x)=\psi(\omega+L_k(\omega))+\int_{\omega+L_k(\omega)}^x\psi'(t)dt.$$
Since,
$\psi(\omega+L_k(\omega))=0$
Whence,
$$x-|k|\int_{\omega}^\infty(x-t)^{k-1}\nu_{k,\omega}(dt)=\int_{\omega+L_k(\omega)}^x\psi'(t)dt,$$
and for $x\geq \omega+L_k(\omega)$
$$\frac12x^2-{\rm sgn}(k)\int_{\omega}^\infty(x-t)^{k}\nu_{k,\omega}(dt)\geq S_{k,\omega}.$$
With
$$\begin{aligned}S_{k,\omega}&=\frac12\Big(\omega+L_k(\omega)\Big)^2-{\rm sgn}(k)\int_{\omega}^\infty(\omega+L_k(\omega)-t)^{k}\nu_{k,\omega}(dt)\\
&=\frac12\Big(\omega+L_k(\omega)\Big)^2-\frac1{kC_{k}}\int_{\omega}^{\omega+L_k(\omega)}
(\omega+L_k(\omega)-t)^{\frac{1+k}2}(t-\omega)^{-\frac{1+k}2}(t-k\omega+\frac{1-k}{2} L_k(\omega))dt\\
&=\frac12\Big(\omega+L_k(\omega)\Big)^2-\frac1{kC_{k}}\int_{0}^{L_k(\omega)}
(L_k(\omega)-t)^{\frac{1+k}2}t^{-\frac{1+k}2}(t+(1-k)\omega+\frac{1-k}{2} L_k(\omega))dt\\
&=\frac12\Big(\omega+L_k(\omega)\Big)^2-\frac{L_k(\omega)}{kC_{k}}\int_{0}^{1}
(1-t)^{\frac{1+k}2}t^{-\frac{1+k}2}(tL_k(\omega)+(1-k)\omega+\frac{1-k}{2} L_k(\omega))dt\\
&=\frac12\Big(\omega+L_k(\omega)\Big)^2-\frac{L_k(\omega)}{k}(\frac18(1+k)L_k(\omega)+\frac12(1+k)(\omega+\frac{1}{2} L_k(\omega)))\\
&=\frac12\Big(\omega+L_k(\omega)\Big)^2-\frac{1+k}{2k}(\omega+\frac34L_k(\omega))L_k(\omega).\end{aligned}$$
Moreover,
$$E^*_{k,\omega}=\frac12\int_\Bbb R t^2\nu_{k,\omega}(dt)+S_{k,\omega}.$$
Since,
$$\begin{aligned}\frac12\int_\Bbb R t^2\nu_{k,\omega}(dt)&=
\frac1{2kC_{k}}\int_{\omega}^{\omega+L_k(\omega)}
(\omega+L_k(\omega)-t)^{\frac{1-k}2}(t-\omega)^{-\frac{1+k}2}t^2(t-k\omega+\frac{1-k}{2} L_k(\omega))dt\\
&=\frac1{2kC_{k}}\int_0^{L_k(\omega)}(L_k(\omega)-t)^{\frac{1-k}2}t^{-\frac{1+k}2}(t+\omega)^2(t+(1-k)\omega+\frac{1-k}{2} L_k(\omega))dt\\
&=\frac{L_k(\omega)^{1-k}}{2kC_{k}}\int_0^{1}(1-t)^{\frac{1-k}2}t^{-\frac{1+k}2}(t L_k(\omega)+\omega)^2(tL_k(\omega)+(1-k)\omega+\frac{1-k}{2} L_k(\omega))dt\\
&=\frac{L_k(\omega)^{1-k}}{2kC_{k}}\Bigl(L_k(\omega)^3\frac{k^2-6 k+5}{8 \left(k^2-6 k+8\right)}+((3-k)\omega+\frac{1-k}{2} L_k(\omega))\frac{k-1}{4 (k-2)}L_k(\omega)^2\\
&+((3-2k)\omega+(1-k) L_k(\omega))\omega L_k(\omega))\frac{k-1}{2 (k-2)}+\omega^2((1-k)\omega+\frac{1-k}{2} L_k(\omega))\Bigr)\\
&\times\frac{\sqrt{\pi } 2^{k-1} \Gamma \left(\frac{1}{2}-\frac{k}{2}\right)}{\Gamma \left(1-\frac{k}{2}\right)}
\end{aligned}$$
From the equation $$(3-k)L_k^{2-k}+2(2-k)w L_k^{1-k}=2^{2-k}|k|C_k\frac{\Gamma(4/2 - k/2)}{\sqrt\pi\Gamma(3/2 - k/2)},$$
we get
$$\begin{aligned}&128 C_k^4 (1-k)^3 kE^*_{k,\omega}\\&=
16 C_k^2 (k-1) \left(-4 \omega^2 k+4 \omega L_k(\omega) (1-2 k)+(L_k(\omega))^2 (k (3 (k-2) k-8)+10)\right)\\&+2^kL_k(\omega) \Bigl(8 \omega^3 (4-k) (2-k)+4 (4-k) (5-2 k)\omega^2L_k(\omega)
+2 (4-k) (5-(3-k) k)\omega(L_k(\omega)) ^2\\& +(9+k ((9-2 k) k-13))(L_k(\omega))^3 \Bigr) \left(\frac{C_k}{L_k(\omega)}\right)^k\frac{  \sin \left(\frac{(1-k)\pi}{2}\right) \Gamma \left(\frac{1-k}{2}\right)}{\sqrt{\pi } \Gamma \left(3-\frac{k}{2}\right)}.\end{aligned}$$

\end{proof}
\begin{proof} [Proof of Corollary \ref{c1}.]
Let us define the probability measure $\mu_k$ on $[0,a_k]$ with density
$$h_k(t)=\frac1{|k| C_k}(a_k-t)^{\frac{1-k}2}t^{-\frac{1+k}2}(t+a_k\frac{1-k}2),$$
where
$$C_k=\frac{\pi(1-k)}{\sin(\frac{(1-k)\pi}2)},$$
and
$$a_k=\Bigl(\frac{|k| C_k\Gamma(\frac{4-k}2)}{2^{k-1}\sqrt\pi\Gamma(\frac{5-k}2)}\Bigr)^{1/(2-k)}.$$
For
$x\geq a_k$,
let $$\varphi_k(x)=x-|k|\int_{0}^{a_k}(x-t)^{k-1}\mu_k(dt)dt.$$
Differentiate yields, $$\varphi_k'(x)=1+(1-k)\int_{0}^{a_k}(x-t)^{k-1}\nu_k(dt)dt\geq 0.$$
Since,
$$x-\int_{0}^{1}(x-t)^{k-1}(1-t)^{\frac{1-k}2}t^{-\frac{1+k}2}(t+\frac{1-k}2)dt=\varphi_k(a_k)+\int_{a_k}^x\varphi_k'(t)dt.$$
and,
$$\varphi_k(a_k)=a_k\frac{3+k}8.$$
Then,
$$x-|k|\int_{0}^{a_k}(x-t)^{k-1}\mu_k(dt)dt=a_k\frac{3+k}8+\int_{a_k}^x\varphi_k'(t)dt.$$
Therefore, for $k\in(-1,1)$, $k\neq 0$
$$\frac12x^2-{\rm sgn}(k)\int_{\Bbb R_+}|x-t|^{k}\mu_k(dt)\left\{\begin{aligned}&=\sigma_k\;{\rm if}\; x\in(0,a_k)\\
&\geq \sigma_k\;{\rm if}\; x\geq a_k,\end{aligned}\right.$$
where
$$\sigma_k=-\int_{\Bbb R_+} t^{k}\mu_k(dt)=-\frac{a_k}{k C_k}\int_0^1(1-t)^{\frac{1-k}2}t^{\frac{k-1}2}(t+\frac{1-k}2)=\frac{(k-3)a_k}{8 k}.$$
Furthermore, $$E^*_k=\frac{(k-3)a_k}{8 k}+\frac12\int_{\Bbb R_+} t^2\mu_k(dt)=\frac{(k-3)a_k}{8 k}+a_k^{2}\frac{3-k}{8(4-k)}.$$
\end{proof}
For $k=0$, we get
%$$\frac12x^2-\int_{\Bbb R_+}\log|x-t|\nu_0(dt)\left\{\begin{aligned}&=\frac12(1+\log 6)\;{\rm if}\; x\in(0,a_0)\\
%&\geq \frac12(1+\log 6)\;{\rm if}\; x\geq a_0\end{aligned}\right.$$
%with
$$h_0(t=\frac1{\pi}\sqrt{\frac{a_0-t}t}(t+\frac {a_0}2),$$
$a_0=\frac23\sqrt 6$
and
$$E_0^*=\frac34+\frac12\log 6.$$

This corresponds to the Dean-Majumdar density of positive eigenvalues for the Gaussian random matrix ensemble.
%$$\frac12x^2-\int_\omega^\infty|x-t|^k\nu_{k,\omega}(dt)\left\{\begin{aligned}&=C_{k,\omega}\;{\rm if}\; x\in(\omega,\omega+L_k(\omega))\\
%&\geq C_{k,\omega}\;{\rm if}\; x\geq \omega+L_k(\omega)\end{aligned}\right.$$
%For $k\in(-1,1)$, $$\nu_{k,\omega}(dt)=\frac1{C_{k,\omega}}(\beta_k/c_k-t)^{\frac{1-k}2}(t-\alpha_k/c_k)^{-\frac{1+k}2}(t-\frac k{1-k}\omega+\frac{1-2k}{2(1-k)} L_k(\omega))\chi_{[\alpha_k/c_k,\beta_k/c_k]}(t)dt,$$
%where
%$$L(\omega)a(\omega)=\frac{2 \left(\sqrt{(k+1) \omega ^2+2 (1-k) (3-k)}- \omega \sqrt{k+1}\right)}{(3-k) \sqrt{k+1}}.$$
%$$L_k(\omega)=\frac{2 \left(\sqrt{(k+1) \omega ^2+2 (1-k) (3-k)}- \omega\sqrt{k+1}\right)}{(3-k) \sqrt{k+1}},$$
%$$\beta_k=\frac 1{1-k}\omega+\frac{2-k}{2(1-k)} L_k(\omega),\quad\alpha_k=\frac 1{1-k}\omega+\frac{k}{2(1-k)} L_k(\omega),$$
%and
%$$c_k=\Bigl(\frac{L_k(\omega)^{1-k}}{C_k}2^{k-1}\frac{\sqrt\pi\Gamma(1/2 - k/2)}{\Gamma(1 - k/2)}(\omega+\frac{3-2k}{2(2-k)}L_k(\omega))\Bigr)^{\frac1{2-k}}.$$

\section{Application}\label{app}

For \( k \in (-1, 2) \), define the function:
\[
V_k(x) =
\begin{cases}
\text{sign}(k) |x|^k, & k \neq 0, \\
\log|x|, & k = 0,
\end{cases}
\]
and
\[
\epsilon_k =
\begin{cases}
k, & 0 \leq k < 2, \\
1 + k, & -1 < k < 0.
\end{cases}
\]

Consider on \( \Sigma^n \) the probability density function (pdf):
\[
\mathbb{P}_n(\lambda) = \frac{1}{Z_{\Sigma, n, k}} \exp\left(-n^{1+\epsilon_k} \sum_{i=1}^n Q(\lambda_i) + n^{\epsilon_k} \sum_{1 \leq i \neq j \leq n} V_k(\lambda_i - \lambda_j)\right) d\lambda_1 \dots d\lambda_n,
\]
where \( Z_{\Sigma, n, k} \) is the normalization constant. We make the following assumptions:

\begin{enumerate}
\item[$(H_1)$] The function \( Q \) is lower semicontinuous on \( \Sigma \).
\item[$(H_2)$] For \( k \neq 0 \), \( \displaystyle \lim_{|x| \to \infty} \left(Q(x) - 4|x|^k\right) = +\infty \).
\item[$(H_3)$] For \( k = 0 \), \( \displaystyle \lim_{|x| \to \infty} \left(Q(x) - \log(x^2 + 1)\right) = +\infty \).
\end{enumerate}

For \( \Lambda = (\lambda_1, \dots, \lambda_n) \in \Sigma^n \), define the empirical measure:
\[
\mu_n^\Lambda = \frac{1}{n} \sum_{i=1}^n \delta_{\lambda_i},
\]
and let
\begin{equation}\label{e}
\mu_{\Sigma, n, k} = \mathbb{E}_n(\mu_n^\Lambda),
\end{equation}
where \( \mathbb{E}_n \) denotes the expectation with respect to \( \mathbb{P}_n \).

The goal is to study the asymptotic behavior of \( \mu_{\Sigma, n, k} \) as \( n \to \infty \). Let \( \varphi \) be a continuous function. Then:
\[
\int_\Sigma \varphi(t) \, \mu_{\Sigma, n, k}(dt) = \frac{1}{nZ_{\Sigma, n, k}}\int_{\Sigma^n} \sum_{i=1}^n \varphi(\lambda_i) \exp\left(-n^{1+\epsilon_k} \sum_{i=1}^n Q(\lambda_i) + n^{\epsilon_k} \sum_{i \neq j} V_k(\lambda_i - \lambda_j)\right) d\lambda_1 \dots d\lambda_n.
\]
By symmetry, this simplifies to:
\[
\int_\Sigma \varphi(t) \, \mu_{\Sigma, n, k}(dt) = \frac{1}{Z_{\Sigma, n, k}} \int_\Sigma \varphi(\lambda_1) h_n(\lambda_1) \, d\lambda_1,
\]
where
\[
h_n(\lambda_1) = \frac{1}{Z_{\Sigma, n, k}} \int_{\Sigma^{n-1}} \exp\left(-n^{1+\epsilon_k} \sum_{i=1}^n Q(\lambda_i) + n^{\epsilon_k} \sum_{1 \leq i \neq j \leq n} V_k(\lambda_i - \lambda_j)\right) d\lambda_2 \dots d\lambda_n.
\]

For a probability measure \( \mu \) on \( \Sigma \), define the energy:
\[
E_k(\mu) = \int_\Sigma Q(x) \, \mu(dx) - \int_\Sigma U^\mu(x) \, \mu(dx),
\]
where
\[
U^\mu(x) = \int_\Sigma V_k(x - y) \, \mu(dy).
\]
The existence of the energy ensures that \( \mu \) has compact support. The unique measure minimizing the energy is the equilibrium measure \( \mu^*_{\Sigma, k} \), which satisfies:
\[
\frac{1}{2} Q(x) - \int_\Sigma V_k(x - y) \, \mu^*_{\Sigma, k}(dy) = C_{\Sigma, k}, \quad x \in \text{supp}(\mu).
\]

Define:
\[
K_n(x) = n^{\epsilon_k} (n - 1) \sum_{i=1}^n Q(x_i) - n^{\epsilon_k} \sum_{1 \leq i \neq j \leq n} V_k(x_i - x_j),
\]
and let:
\[
h_k(s, t) = \frac{1}{2} Q(s) + \frac{1}{2} Q(t) - V_k(s - t),
\]
\[
g_k(x) = Q(x) - 4 V_k(x).
\]
Since \( V_k \) is negative definite, we have:
\[
|V_k(s - t)| \leq \left(\sqrt{|V_k(s)|} + \sqrt{|V_k(t)|}\right)^2,
\]
which implies:
\[
h_k(s, t) \geq \frac{1}{2} g_k(s) + \frac{1}{2} g_k(t).
\]
By assumptions \( H_1, H_2, H_3 \), there exists \( m_{\Sigma, k} \) such that \( g_k(s) \geq m_{\Sigma, k} \) for all \( s \in \Sigma \). Thus:
\[
h_k(s, t) \geq m_{\Sigma, k}, \quad K_n(x) \geq n^{1+\epsilon_k} (n - 1) m_{\Sigma, k}.
\]
Furthermore, \( K_n \) is lower semicontinuous on \( \Sigma^n \), and \( \lim_{\|x\| \to \infty} K_n(x) = +\infty \). Therefore, there exists \( (x_1^{(\Sigma, n, k)}, \dots, x_n^{(\Sigma, n, k)}) \in \Sigma^n \) where \( K_n \) attains its minimum.

Define the probability measure:
\[
\nu_{\Sigma, n, k} = \frac{1}{n} \sum_{i=1}^n \delta_{x_i^{(\Sigma, n, k)}},
\]
and let:
\[
\tau_{\Sigma, n, k} = \frac{1}{n^{1+\epsilon_k} (n - 1)} K_n(x_1^{(\Sigma, n, k)}, \dots, x_n^{(\Sigma, n, k)}).
\]

\begin{proposition} \
\begin{enumerate}
\item \( \displaystyle \lim_{n \to \infty} \tau_{\Sigma, n, k} = E_k(\mu^*_{\Sigma, k}) = E^*_{\Sigma, k} \).
\item The probability measure \( \nu_{\Sigma, n, k} \) converges in the tight topology to \( \mu^*_{\Sigma, k} \).
\end{enumerate}
\end{proposition}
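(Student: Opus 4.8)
The plan is to squeeze $\tau_{\Sigma,n,k}$ between an upper bound that never exceeds $E^*_{\Sigma,k}$ and a lower bound, taken along convergent subsequences of $(\nu_{\Sigma,n,k})$, that is at least $E^*_{\Sigma,k}$; uniqueness of the equilibrium measure then forces convergence of both the energies and the empirical measures. The first step is a purely algebraic rewriting. Since $\sum_{i\ne j}\big(\tfrac12 Q(x_i)+\tfrac12 Q(x_j)\big)=(n-1)\sum_i Q(x_i)$, one has $K_n(x)=n^{\epsilon_k}\sum_{i\ne j}h_k(x_i,x_j)$ and hence $\tau_{\Sigma,n,k}=\frac1{n(n-1)}\sum_{i\ne j}h_k(x_i^{(\Sigma,n,k)},x_j^{(\Sigma,n,k)})$, where $h_k(s,t)=\tfrac12 Q(s)+\tfrac12 Q(t)-V_k(s-t)\ge m_{\Sigma,k}$ is lower semicontinuous and bounded below. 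This reduces the statement to the convergence of a discretized energy to the continuous energy $E_k$.

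For the upper bound I would argue probabilistically rather than construct explicit points. Draw $X_1,\dots,X_n$ i.i.d.\ from the equilibrium measure $\mu^*_{\Sigma,k}$; since $X_i$ and $X_j$ are independent for $i\ne j$, we get $\mathbb{E}\big[h_k(X_i,X_j)\big]=\int_{\Sigma^2}h_k(s,t)\,\mu^*_{\Sigma,k}(ds)\,\mu^*_{\Sigma,k}(dt)=E_k(\mu^*_{\Sigma,k})=E^*_{\Sigma,k}$, which is finite (the diagonal carries no $\mu^*_{\Sigma,k}\otimes\mu^*_{\Sigma,k}$-mass because $\mu^*_{\Sigma,k}$ has a density). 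Hence the expectation of $\frac1{n(n-1)}\sum_{i\ne j}h_k(X_i,X_j)$ equals $E^*_{\Sigma,k}$, so some configuration attains a value at most $E^*_{\Sigma,k}$; as $x^{(\Sigma,n,k)}$ minimizes $K_n$, this yields $\tau_{\Sigma,n,k}\le E^*_{\Sigma,k}$ for every $n$.

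Tightness of $(\nu_{\Sigma,n,k})$ follows from $h_k(s,t)\ge\tfrac12 g_k(s)+\tfrac12 g_k(t)$, which gives $\tau_{\Sigma,n,k}\ge\int_\Sigma g_k\,d\nu_{\Sigma,n,k}$; combined with the upper bound this bounds $\int_\Sigma g_k\,d\nu_{\Sigma,n,k}$ uniformly, and since $g_k$ is bounded below with $g_k(x)\to+\infty$, a Markov/Prokhorov argument produces a uniformly tight family. Extract a subsequence with $\nu_{\Sigma,n_j,k}\to\nu$ tightly. For the lower bound, fix $M$ and set $h_k^M=\min(h_k,M)$, which is bounded and lower semicontinuous; then $\tau_{\Sigma,n,k}\ge\frac1{n(n-1)}\sum_{i\ne j}h_k^M=\frac{n}{n-1}\big(\int h_k^M\,d(\nu_n\otimes\nu_n)-\frac1{n^2}\sum_i h_k^M(x_i,x_i)\big)$, where the diagonal term is $O(M/n)$. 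Passing to the limit along the subsequence via lower semicontinuity of $f\mapsto\int f\,d(\nu_n\otimes\nu_n)$ for lower semicontinuous $f$ (using $\nu_n\otimes\nu_n\to\nu\otimes\nu$), and then letting $M\to\infty$ by monotone convergence, gives $\liminf_j\tau_{\Sigma,n_j,k}\ge E_k(\nu)\ge E^*_{\Sigma,k}$.

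Combining the two bounds forces $E_k(\nu)=E^*_{\Sigma,k}$, so $\nu=\mu^*_{\Sigma,k}$ by uniqueness of the minimizer, and $\lim_n\tau_{\Sigma,n,k}=E^*_{\Sigma,k}$. Because every tight-convergent subsequence has the same limit $\mu^*_{\Sigma,k}$ and the family is tight, the whole sequence $\nu_{\Sigma,n,k}$ converges to $\mu^*_{\Sigma,k}$, which proves both assertions. The main obstacle is the lower bound: controlling the diagonal of the repulsive kernel (singular for $k\le 0$) while keeping the truncated functional lower semicontinuous and compatible with the tight convergence of the product measures; the remaining ingredients are either algebraic or standard consequences of the confinement $g_k\to+\infty$ and of the uniqueness already established.
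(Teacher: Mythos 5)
Your proposal is correct and follows essentially the same route as the paper: the identity $K_n=n^{\epsilon_k}\sum_{i\ne j}h_k(x_i,x_j)$, an upper bound $\tau_{\Sigma,n,k}\le E^*_{\Sigma,k}$ obtained by averaging $K_n$ over the product measure $(\mu^*_{\Sigma,k})^{\otimes n}$ (your i.i.d.\ sampling argument is exactly this integration, phrased probabilistically), tightness from $h_k\ge\tfrac12 g_k(s)+\tfrac12 g_k(t)$, a truncated kernel $\min(h_k,\ell)$ with the $O(\ell/n)$ diagonal correction for the lower bound along subsequences, and uniqueness of the minimizer to conclude. The only superfluous point is your aside that the diagonal is null because $\mu^*_{\Sigma,k}$ has a density; this is neither established in this generality nor needed, since the sum runs over $i\ne j$ and finiteness of $E_k(\mu^*_{\Sigma,k})=E^*_{\Sigma,k}$ suffices.
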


\begin{proof}
From the definitions, we have:
\[
K_n(x) = n^{\epsilon_k} \sum_{i \neq j} h_k(x_i, x_j).
\]
For a probability measure \( \mu \),
\[
\int_{\Sigma^n} K_n(x) \, \mu(dx_1) \dots \mu(dx_n) = n^{1+\epsilon_k} (n - 1) E_k(\mu).
\]
For \( \mu = \mu^*_{\Sigma, k} \), we obtain:
\begin{equation}\label{e2}
m_{\Sigma, k} \leq \tau_{\Sigma, n, k} \leq E_k(\mu^*_{\Sigma, k}) = E^*_{\Sigma, k}.
\end{equation}
Furthermore,
\[
K_n(x) \geq n^{\epsilon_k} (n - 1) \sum_{i=1}^n g_k(x_i),
\]
which implies:
\begin{equation}\label{e1}
\int_\Sigma g_k(s) \, \nu_{\Sigma, n, k}(ds) \leq \tau_{n, k} \leq E^*_{\Sigma, k}.
\end{equation}
Thus, the sequence \( (\tau_{\Sigma, n, k}, \nu_{\Sigma, n, k}) \) is relatively compact in \( \Sigma \times \mathcal{M}^1(\Sigma) \), and there exists a subsequence \( n_j \to \infty \) such that:
\[
\lim_{j \to \infty} \tau_{\Sigma, n_j, k} = \tau_{\Sigma, k}, \quad \lim_{j \to \infty} \nu_{\Sigma, n_j, k} = \nu_{\Sigma, k}.
\]
For \( \ell > 0 \), define the truncated kernel:
\[
h^\ell_k(x, y) = \inf(h_k(x, y), \ell),
\]
and the truncated energy:
\[
E^\ell_k(\mu) = \int_{\Sigma^2} h^\ell_k(s, t) \, \mu(ds) \mu(dt).
\]
Then:
\[
E^\ell_k(\nu_{\Sigma, n, k}) \leq \tau_{\Sigma, n, k} + \frac{\ell}{n}.
\]
Taking \( j \to \infty \), we obtain:
\[
E^\ell_k(\nu_{\Sigma, k}) \leq \liminf_{j \to \infty} \tau_{\Sigma, n_j, k}.
\]
Letting \( \ell \to \infty \) and using the monotone convergence theorem, we get:
\[
E_k(\nu_{\Sigma, k}) \leq \liminf_{j \to \infty} \tau_{\Sigma, n_j, k}.
\]
From equation \eqref{e2}, we :
\[
E^*_{\Sigma, k} \leq E_k(\nu_{\Sigma, k}) \leq \liminf_{j \to \infty} \tau_{\Sigma, n_j, k} \leq \limsup_{j \to \infty} \tau_{\Sigma, n_j, k} \leq E^*_{\Sigma, k}.
\]
This proves that the sequence \( (\tau_{\Sigma, n, k}, \nu_{\Sigma, n, k}) \) converges, and:
\[
\lim_{n \to \infty} \tau_{\Sigma, n, k} = E^*_{\Sigma, k}, \quad \lim_{n \to \infty} \nu_{\Sigma, n, k} = \mu^*_{\Sigma, k}.
\]
\end{proof}

\begin{proposition}\label{pro4}
For \( k \in (-1, 2) \),
\[
\lim_{n \to \infty} \frac{1}{n^{2+\epsilon_k}} \log Z_{\Sigma, n, k} = -E^*_{\Sigma, k}.
\]
\end{proposition}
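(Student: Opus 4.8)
The plan is to establish matching upper and lower bounds for $\frac{1}{n^{2+\epsilon_k}}\log Z_{\Sigma,n,k}$, both tending to $-E^*_{\Sigma,k}$. The first step is to rewrite the exponent through $K_n$. Comparing the two definitions gives, for every $x\in\Sigma^n$,
\[
-n^{1+\epsilon_k}\sum_{i=1}^nQ(x_i)+n^{\epsilon_k}\sum_{i\neq j}V_k(x_i-x_j)=-K_n(x)-n^{\epsilon_k}\sum_{i=1}^nQ(x_i),
\]
so that $Z_{\Sigma,n,k}=\int_{\Sigma^n}\exp\bigl(-K_n(x)-n^{\epsilon_k}\sum_iQ(x_i)\bigr)\,dx$. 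The residual potential term carries the factor $n^{\epsilon_k}$, one power of $n$ below the leading scale $n^{2+\epsilon_k}$, and this is what renders it negligible in both bounds.

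For the upper bound I would use that $K_n$ attains its minimum $n^{1+\epsilon_k}(n-1)\tau_{\Sigma,n,k}$ on $\Sigma^n$, whence
\[
Z_{\Sigma,n,k}\leq e^{-n^{1+\epsilon_k}(n-1)\tau_{\Sigma,n,k}}\Bigl(\int_\Sigma e^{-n^{\epsilon_k}Q(x)}\,dx\Bigr)^{n}.
\]
Hypotheses $(H_2)$–$(H_3)$ force $Q(x)\to+\infty$ as $|x|\to\infty$ (for $k>0$ because $Q$ dominates $4|x|^k$, and for $k\le 0$ because $V_k$ grows at most logarithmically), so $\int_\Sigma e^{-Q}\,dx<\infty$ and, for $n$ large, $\int_\Sigma e^{-n^{\epsilon_k}Q}\,dx\le e^{-(n^{\epsilon_k}-1)\inf_\Sigma Q}\int_\Sigma e^{-Q}\,dx$. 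Taking logarithms and dividing by $n^{2+\epsilon_k}$, the second factor contributes $O(1/n)\to0$, while the first gives $-\tfrac{n-1}{n}\tau_{\Sigma,n,k}\to-E^*_{\Sigma,k}$ by the convergence $\tau_{\Sigma,n,k}\to E^*_{\Sigma,k}$ established in the preceding proposition. This yields $\limsup_n\frac{1}{n^{2+\epsilon_k}}\log Z_{\Sigma,n,k}\le -E^*_{\Sigma,k}$.

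For the lower bound I would run a Jensen (mean-field) argument. Fix any probability density $p$ on $\Sigma$ that is bounded, compactly supported and of finite entropy, and let $P_0=p^{\otimes n}$. Writing the integrand of $Z_{\Sigma,n,k}$ as $\exp(\Phi_n-\sum_i\log p(x_i))$ against $dP_0$ and applying Jensen's inequality to the convex function $\exp$ gives
\[
\log Z_{\Sigma,n,k}\ge -n^{2+\epsilon_k}\!\int_\Sigma Qp\,dx+n^{\epsilon_k}n(n-1)\!\int_{\Sigma^2}\!V_k(x-y)p(x)p(y)\,dx\,dy-n\!\int_\Sigma p\log p\,dx.
\]
Dividing by $n^{2+\epsilon_k}$ and letting $n\to\infty$ (the entropy term is divided by $n^{1+\epsilon_k}$ and vanishes, while $\tfrac{n(n-1)}{n^2}\to1$) produces $\liminf_n\frac{1}{n^{2+\epsilon_k}}\log Z_{\Sigma,n,k}\ge-E_k(p)$ for every such $p$; finiteness of $\int_{\Sigma^2}V_k\,p\otimes p$ holds because $V_k$ is locally integrable for $k>-1$ and $p$ is bounded with compact support. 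It then remains to pass from densities to $\mu^*_{\Sigma,k}$: taking mollifications $p_\varepsilon=\mu^*_{\Sigma,k}\ast\varphi_\varepsilon$, suitably renormalized on $\Sigma$, I would show $E_k(p_\varepsilon)\to E_k(\mu^*_{\Sigma,k})=E^*_{\Sigma,k}$, so that $\inf_p E_k(p)=E^*_{\Sigma,k}$ and the lower bound follows.

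The main obstacle is precisely this last approximation step. The confining part $\int_\Sigma Q\,p_\varepsilon\to\int Q\,d\mu^*_{\Sigma,k}$ is routine, but the interaction energy $\int_{\Sigma^2}V_k\,p_\varepsilon\otimes p_\varepsilon$ must be shown to converge to $\int_{\Sigma^2}V_k\,d\mu^*_{\Sigma,k}\,d\mu^*_{\Sigma,k}$; for $k\le 0$ the kernel $V_k$ is singular on the diagonal, so this requires the negative-definiteness of $V_k$ together with the finiteness of $E^*_{\Sigma,k}$ and a dominated/monotone convergence argument—via the L\'evy--Khinchin or the integral representations exploited earlier—to control the contribution of near-coincident points as $\varepsilon\to0$. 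Once $E_k(p_\varepsilon)\to E^*_{\Sigma,k}$ is secured, combining the two bounds gives $\lim_n\frac{1}{n^{2+\epsilon_k}}\log Z_{\Sigma,n,k}=-E^*_{\Sigma,k}$.
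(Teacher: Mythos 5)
Your proposal is correct and follows essentially the same route as the paper: the upper bound via the pointwise minimum of $K_n$ together with $\tau_{\Sigma,n,k}\to E^*_{\Sigma,k}$, and the lower bound via Jensen's inequality against a product measure $p^{\otimes n}$ with nice density, followed by approximating $\mu^*_{\Sigma,k}$ by such densities. In fact your mollification sketch supplies more detail for the final approximation step than the paper, which merely asserts that for every $\varepsilon>0$ there is a measure of this form with $E_k(\mu)\leq E^*_{\Sigma,k}+\varepsilon$.
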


\begin{proof}
Recall that:
\[
K_n(x) = n^{\epsilon_k} (n - 1) \sum_{i=1}^n Q(x_i) - n^{\epsilon_k} \sum_{1 \leq i \neq j \leq n} V_k(x_i - x_j),
\]
and:
\[
\tau_{\Sigma, n, k} = \frac{1}{n^{1+\epsilon_k} (n - 1)} \inf_{x \in \Sigma^n} K_n(x).
\]
By the previous proposition, \( \lim_{n \to \infty} \tau_{\Sigma, n, k} = E^*_k \). Since:
\[
Z_{\Sigma, n, k} = \int_{\Sigma^n} \exp\left(-K_n(x) - n^{\epsilon_k} \sum_{i=1}^n Q(\lambda_i)\right) dx_1 \dots dx_n,
\]
we have:
\[
Z_{\Sigma, n, k} \leq \exp\left(-n^{1+\epsilon_k} (n - 1) \tau_{\Sigma, n, k}\right) \left(\int_\Sigma \exp\left(-n^{\epsilon_k} Q(x)\right) dx\right)^n.
\]
By assumptions \( H_1, H_2, H_3 \),
\[
\int_\Sigma \exp\left(-n^{\epsilon_k} Q(x)\right) dx \leq \exp\left(-(n^{\epsilon_k} - 1) \inf_{x \in \Sigma} Q(x)\right) \int_\Sigma \exp\left(-Q(x)\right) dx.
\]
Thus:
\[
\limsup_{n \to \infty} \frac{1}{n^{2+\epsilon_k}} \log \int_\Sigma \exp\left(-n^{\epsilon_k} Q(x)\right) dx = 0,
\]
and:
\[
\limsup_{n \to \infty} \frac{1}{n^{2+\epsilon_k}} \log Z_{\Sigma, n, k} \leq -E^*_{\Sigma, k}.
\]

For the lower bound, consider a probability measure \( \mu \) on \( \Sigma^n \) with density \( f(x) \), where \( f \) is continuous and supported on \( \overline{O} \), with \( f(x) > 0 \) on the open set \( O \). Then:
\[
Z_{\Sigma, n, k} = \int_O \exp\left(-K_n(x) - n^{\epsilon_k} \sum_{i=1}^n Q(x_i) - \sum_{i=1}^n \log f(x_i)\right) \prod_{i=1}^n f(x_i) dx_1 \dots dx_n.
\]
Applying Jensen's inequality:
\[
Z_{\Sigma, n, k} \geq \exp\left(-\int_O \left(K_n(x) + n^{\epsilon_k} \sum_{i=1}^n Q(x_i) + \sum_{i=1}^n \log f(x_i)\right) \prod_{i=1}^n f(x_i) dx_1 \dots dx_n\right).
\]
This implies:
\[
Z_{\Sigma, n, k} \geq \exp\left(-n^{1+\epsilon_k} (n - 1) E_k(\mu)\right) \exp\left(-\int_O \left(n^{\epsilon_k} Q(x) + \log f(x)\right) f(x) dx\right).
\]
Thus:
\[
\liminf_{n \to \infty} \frac{1}{n^{2+\epsilon_k}} \log Z_{\Sigma, n, k} \geq -E_k(\mu).
\]
If the equilibrium measure \( \mu^*_{\Sigma, k} \) has this form, the proof is complete. Otherwise, for every \( \varepsilon > 0 \), there exists a probability measure \( \mu \) of this form such that \( E_k(\mu) \leq E_k(\mu^*_{\Sigma, k}) + \varepsilon \).
\end{proof}

\begin{theorem}\label{th1}
The measure \( \mu_{\Sigma, n, k} \) defined in equation \eqref{e} converges to the equilibrium measure \( \mu^*_{\Sigma, k} \) in the tight topology:
\[
\lim_{n \to \infty} \int_\Sigma f(t) \, \mu_{\Sigma, n, k}(dt) = \int_\Sigma f(t) \, \mu^*_{\Sigma, k}(dt),
\]
for every bounded continuous function \( f \) on \( \Sigma \).
\end{theorem}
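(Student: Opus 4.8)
The plan is to realise $\int_\Sigma f\,d\mu_{\Sigma,n,k}$ as the derivative at $0$ of a convex free energy, and then to transfer that derivative to the limit using convexity together with the asymptotics of the partition function from Proposition~\ref{pro4}. Fix a bounded continuous function $f$ on $\Sigma$ and, for $s\in\mathbb{R}$, set $Q_s=Q-sf$. Because $f$ is bounded and continuous, $Q_s$ is again lower semicontinuous and still satisfies $(H_2)$ (resp.\ $(H_3)$), since the behaviour of $Q_s-4|x|^k$ as $|x|\to\infty$ is unaffected by the bounded perturbation $-sf$. Let $Z_n(s)$ be the partition function obtained by replacing $Q$ with $Q_s$ in the definition of $Z_{\Sigma,n,k}$, so that $Z_n(0)=Z_{\Sigma,n,k}$, and put $\phi_n(s)=n^{-(2+\epsilon_k)}\log Z_n(s)$.

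Differentiating under the integral sign, which is legitimate as $f$ is bounded and the integrals converge by the growth hypotheses, gives $\phi_n'(s)=\int_\Sigma f\,d\mu^{s}_{\Sigma,n,k}$, where $\mu^{s}_{\Sigma,n,k}$ is the statistical density of the tilted ensemble; in particular $\phi_n'(0)=\int_\Sigma f\,d\mu_{\Sigma,n,k}$, which is the quantity to control. Moreover $\log Z_n(s)$ is a logarithmic moment generating function in the external field $s$, so each $\phi_n$ is smooth and convex, its second derivative being a variance.

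Applying Proposition~\ref{pro4} to the admissible potential $Q_s$ yields the pointwise limit $\phi_n(s)\to -E^*(s)$, where $E^*(s)=\inf_{\mu}\bigl(E_k(\mu)-s\int_\Sigma f\,d\mu\bigr)$ is the equilibrium energy associated with $Q_s$. As an infimum of functions affine in $s$, $E^*$ is concave, so $-E^*$ is convex and coincides with the limit of the convex functions $\phi_n$. The crux is to show that $E^*$ is differentiable at $0$ with $(E^*)'(0)=-\int_\Sigma f\,d\mu^*_{\Sigma,k}$. Testing the infimum against the unique equilibrium measure $\mu^*_{\Sigma,k}$ (feasible at every $s$) and against the minimiser $\mu_s^*$ of the tilted problem produces, for $s>0$, the sandwich
\[
-\int_\Sigma f\,d\mu_s^* \;\le\; \frac{E^*(s)-E^*(0)}{s} \;\le\; -\int_\Sigma f\,d\mu^*_{\Sigma,k},
\]
with the inequalities reversed for $s<0$. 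It then remains to prove $\int_\Sigma f\,d\mu_s^*\to\int_\Sigma f\,d\mu^*_{\Sigma,k}$ as $s\to0$: any subsequential tight limit of $\mu_s^*$ lies in a fixed compact sublevel set by property (P3), has energy at most $E^*_{\Sigma,k}$ by the lower semicontinuity (P2), and hence equals $\mu^*_{\Sigma,k}$ by the uniqueness in Theorem~\ref{thm3}; boundedness of $f$ then yields the convergence of the integrals, so both ends of the sandwich tend to $-\int_\Sigma f\,d\mu^*_{\Sigma,k}$.

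Finally I would invoke the standard convexity fact that if convex functions converge pointwise and the limit is differentiable at a point, then their derivatives converge there. Applied at $s=0$ this gives $\int_\Sigma f\,d\mu_{\Sigma,n,k}=\phi_n'(0)\to(-E^*)'(0)=\int_\Sigma f\,d\mu^*_{\Sigma,k}$, and since $f$ was an arbitrary bounded continuous function this is precisely tight convergence of $\mu_{\Sigma,n,k}$ to $\mu^*_{\Sigma,k}$. The main obstacle is the differentiability of the limiting free energy $E^*$ at $0$, equivalently the stability of the equilibrium measure under the perturbation $Q\mapsto Q-sf$; the differentiation under the integral, the convexity of $\phi_n$, and the convexity lemma are all routine, and the required stability is already forced by the compactness (P3), lower semicontinuity (P2), and uniqueness (Theorem~\ref{thm3}) established earlier.
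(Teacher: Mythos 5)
Your proposal is correct, but it follows a genuinely different route from the paper. The paper proves the theorem by a concentration argument: it introduces the sets $A_{\eta,n,k}=\{x\in\Sigma^n \mid K_n(x)\le (E^*_{\Sigma,k}+\eta)n^{2+\epsilon_k}\}$, shows via Proposition~\ref{pro4} that $\mathbb{P}_n(A_{\eta,n,k})\to 1$, bounds $\int_\Sigma f\,d\mu_{\Sigma,n,k}$ by the supremum of the empirical averages $F_n$ over this compact set plus a vanishing error, and then identifies subsequential limits of the maximizing empirical measures with $\mu^*_{\Sigma,k}$ through the truncated-energy bound $E_k^\ell(\nu^{(\eta)}_{n,k})\le E^*_{\Sigma,k}+\eta+\ell/n$ and uniqueness, finally applying the inequality to $\pm f$. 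You instead tilt the potential, $Q_s=Q-sf$, and run a Griffiths-lemma argument: convexity of $\phi_n(s)=n^{-(2+\epsilon_k)}\log Z_n(s)$, pointwise convergence $\phi_n(s)\to -E^*(s)$ (Proposition~\ref{pro4} applied to each $Q_s$, which is legitimate since the hypotheses are stable under bounded continuous perturbations), differentiability of $E^*$ at $0$ via your two-sided testing inequalities, and the classical fact that derivatives of convex functions converge at points where the limit is differentiable. Your approach is more modular --- it uses Proposition~\ref{pro4} as a black box over a family of potentials and avoids any direct analysis of where the Gibbs measure concentrates --- at the price of needing the stability of the equilibrium measure under the perturbation; the paper's approach uses Proposition~\ref{pro4} only for $Q$ itself but must establish the extra concentration lemma. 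One small point you should make explicit: to invoke (P3) for the family $\{\mu^*_s\}$ you need a uniform energy bound, which indeed follows from your own testing inequalities, since $E_k(\mu^*_s)=E^*(s)+s\int_\Sigma f\,d\mu^*_s\le E^*_{\Sigma,k}+2|s|\,\|f\|_\infty$; with that line added, the stability step, and hence the whole argument, is complete.
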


To prove the theorem, we need the following lemma:

\begin{lemma}
For \( \eta > 0 \) and \( k \in (-1, 2) \), define:
\[
A_{\eta, n, k} = \{x \in \Sigma^n \mid K_n(x) \leq (E^*_k + \eta) n^{2+\epsilon_k}\}.
\]
The set \( A_{\eta, n, k} \) is compact, and:
\[
\lim_{n \to \infty} \mathbb{P}_n(A_{\eta, n, k}) = 1.
\]
\end{lemma}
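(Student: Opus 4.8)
The plan is to treat the two assertions of the lemma separately. Compactness of $A_{\eta,n,k}$ for fixed $n$ follows directly from the coercive lower bound on $K_n$ already recorded in this section, while the probability limit $\mathbb{P}_n(A_{\eta,n,k})\to 1$ is a concentration estimate built on the partition-function asymptotics of Proposition~\ref{pro4}. Throughout I write $E^*_k=E^*_{\Sigma,k}$ and use the identity $-K_n(x)-n^{\epsilon_k}\sum_i Q(x_i)=-n^{1+\epsilon_k}\sum_i Q(x_i)+n^{\epsilon_k}\sum_{i\neq j}V_k(x_i-x_j)$, so that $\mathbb{P}_n(B)=Z_{\Sigma,n,k}^{-1}\int_B \exp\bigl(-K_n(x)-n^{\epsilon_k}\sum_i Q(x_i)\bigr)\,dx$.

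For the compactness I would start from $h_k(s,t)\ge \tfrac12 g_k(s)+\tfrac12 g_k(t)$, which gives $K_n(x)\ge n^{\epsilon_k}(n-1)\sum_{i=1}^n g_k(x_i)$. If $x\in A_{\eta,n,k}$ then $\sum_{i=1}^n g_k(x_i)\le (E^*_k+\eta)\,n^2/(n-1)$, and since each term satisfies $g_k(x_i)\ge m_{\Sigma,k}$, every coordinate obeys $g_k(x_i)\le (E^*_k+\eta)n^2/(n-1)-(n-1)m_{\Sigma,k}=:M_n$. Because $g_k$ is lower semicontinuous with $g_k(x)\to+\infty$ as $|x|\to\infty$, the sublevel set $\{g_k\le M_n\}$ is bounded, so $A_{\eta,n,k}$ sits inside a product of bounded intervals; lower semicontinuity of $K_n$ makes $A_{\eta,n,k}=\{K_n\le(E^*_k+\eta)n^{2+\epsilon_k}\}$ closed, hence compact.

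For $\mathbb{P}_n(A_{\eta,n,k})\to 1$ I would bound the complementary probability. Fix $\theta\in(0,1)$ and split $K_n=(1-\theta)K_n+\theta K_n$. On $A_{\eta,n,k}^c$ the factor $(1-\theta)K_n$ is at least $(1-\theta)(E^*_k+\eta)n^{2+\epsilon_k}$, producing the exponential smallness, while the leftover integral is controlled exactly as in the upper-bound part of Proposition~\ref{pro4}, using the global minimum bound $\theta K_n\ge \theta\, n^{1+\epsilon_k}(n-1)\tau_{\Sigma,n,k}$ and the confinement integral. This yields
\[
\int_{A_{\eta,n,k}^c}\!\!\exp\bigl(-K_n-n^{\epsilon_k}\textstyle\sum_i Q(x_i)\bigr)\,dx\le e^{-(1-\theta)(E^*_k+\eta)n^{2+\epsilon_k}}\,e^{-\theta n^{1+\epsilon_k}(n-1)\tau_{\Sigma,n,k}}\Bigl(\int_\Sigma e^{-n^{\epsilon_k}Q}\,dx\Bigr)^{n}.
\]
Taking $n^{-(2+\epsilon_k)}\log$, recalling $n^{1+\epsilon_k}(n-1)/n^{2+\epsilon_k}=(n-1)/n\to 1$, that $\tau_{\Sigma,n,k}\to E^*_k$ as proved above, and that the confinement term is $o(1)$, I get $\limsup_n n^{-(2+\epsilon_k)}\log\int_{A_{\eta,n,k}^c}(\cdots)\,dx\le -E^*_k-(1-\theta)\eta$.

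Finally, combining with $n^{-(2+\epsilon_k)}\log Z_{\Sigma,n,k}\to -E^*_k$ from Proposition~\ref{pro4} gives $\limsup_n n^{-(2+\epsilon_k)}\log \mathbb{P}_n(A_{\eta,n,k}^c)\le -(1-\theta)\eta<0$, so the complement decays like $e^{-c\,n^{2+\epsilon_k}}$ and $\mathbb{P}_n(A_{\eta,n,k})\to 1$. The delicate point is the splitting: one must peel off enough of $K_n$ (through the coercivity of $g_k$ and the finiteness of $\int_\Sigma e^{-Q}\,dx$ guaranteed by $H_1,H_2,H_3$) to keep the leftover integral summable, while still retaining a strictly positive fraction $(1-\theta)\eta$ of the energy gap so that the numerator beats $Z_{\Sigma,n,k}$ on the exponential scale; any $\theta\in(0,1)$ works, and one may let $\theta\to0$ to obtain the rate $-\eta$.
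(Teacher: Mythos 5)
Your proof is correct and follows essentially the same approach as the paper: compactness via lower semicontinuity of $K_n$ plus the bound $K_n(x)\ge n^{\epsilon_k}(n-1)\sum_{i=1}^n g_k(x_i)$ and coercivity of $g_k$, and the probability limit via an exponential bound on the complement normalized by the partition-function asymptotics of Proposition~\ref{pro4}. Your $\theta$-splitting is the only deviation and it is superfluous: integrability of the leftover integral is supplied by the confinement factor $e^{-n^{\epsilon_k}\sum_i Q(x_i)}$ (which you retain anyway), not by the peeled-off piece $\theta K_n$, so the choice $\theta=0$, which is precisely the paper's argument with its $\varepsilon<\eta$ in place of your $(1-\theta)\eta$, is already valid and yields the rate $-\eta$ directly.
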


\begin{proof}
The function \( K_n \) is lower semicontinuous, so \( A_{\eta, n, k} \) is closed. From equation \eqref{e1}, we have:
\[
K_n(x) \geq n^{\epsilon_k} (n - 1) \sum_{i=1}^n g_k(x_i),
\]
where \( g_k(x) = Q(x) - 4|x|^k \). Thus:
\[
A_{\eta, n, k} \subset \{x \in \Sigma^n \mid \sum_{i=1}^n g_k(x_i) \leq \frac{n^2}{n - 1} (E^*_{\Sigma, k} + \eta)\}.
\]
By assumptions \( H_2 \) and \( H_3 \), \( A_{\eta, n, k} \) is bounded and hence compact. By definition:
\[
\mathbb{P}_n(\Sigma^n \setminus A_{\eta, n, k}) \leq \frac{1}{Z_{\Sigma, n, k}} \exp\left(-n^{2+\epsilon_k} (E^*_{\Sigma, k} + \eta)\right) \left(\int_\Sigma \exp\left(-n^k Q(x)\right) dx\right)^n.
\]
By Proposition \ref{pro4}, for all \( \varepsilon > 0 \), there exists \( N \) such that for \( n \geq N \):
\[
\frac{1}{Z_{n, k}} \leq \exp\left(n^{2+\epsilon_k} (E^*_{\Sigma, k} + \varepsilon)\right).
\]
Choosing \( \varepsilon < \eta \), the result follows.
\end{proof}

\begin{proof}[Proof of Theorem \ref{th1}]
Let \( f \) be a bounded continuous function on \( \Sigma \), and define:
\[
F_n(x) = \frac{1}{n} \sum_{i=1}^n f(x_i).
\]
Fix \( \eta > 0 \) and \( k \in (-1, 2) \). The set \( A_{\eta, n, k} \) is compact, so \( F_n \) attains its supremum on \( A_{\eta, n, k} \) at some point \( x^{(\eta, n, k)} = (x_1^{(\eta, n, k)}, \dots, x_n^{(\eta, n, k)}) \). Thus:
\[
\int_\Sigma f(t) \, \mu_{\Sigma, n, k}(dt) \leq F_n(x^{(\eta, n, k)}) + \|f\|_\infty (1 - \mathbb{P}_n(A_{\eta, n, k})).
\]
Define the probability measure:
\[
\nu_{n, k}^{(\eta)} = \frac{1}{n} \sum_{i=1}^n \delta_{x_i^{(\eta, n, k)}}.
\]
The inequality above becomes:
\begin{equation}\label{e3}
\int_\Sigma f(x) \, \mu_{\Sigma, n, k}(dx) \leq \int_\Sigma f(t) \, \nu_{n, k}^{(\eta)}(dt) + \|f\|_\infty (1 - \mathbb{P}_n(A_{\eta, n, k})).
\end{equation}
The truncated energy \( E^\ell_k \) of \( \nu_{n, k}^{(\eta)} \) satisfies:
\[
E^\ell_k(\nu_{n, k}^{(\eta)}) \leq E^*_k + \eta + \frac{\ell}{n}.
\]
From equation \eqref{e1}, we have:
\[
\int_\Sigma g_k(t) \, \nu_{n, k}^{(\eta)}(dt) \leq \frac{n}{n - 1} (E^*_{\Sigma, k} + \eta).
\]
This implies that the sequence \( \nu_{n, k}^{(\eta)} \) is relatively compact in the tight topology. There exists a subsequence \( n_j \to \infty \) such that:
\[
\lim_{j \to \infty} \nu_{n_j, k}^{(\eta)} = \nu_k^{(\eta)}.
\]
We may also assume:
\[
\lim_{j \to \infty} \int_\Sigma f(t) \, \mu_{\Sigma, n_j, k}(dt) = \limsup_{n \to \infty} \int_\Sigma f(t) \, \mu_{\Sigma, n, k}(dt).
\]
The limit measure \( \nu_k^{(\eta)} \) satisfies:
\[
E^\ell_k(\nu_k^{(\eta)}) \leq E^*_{\Sigma, k} + \eta.
\]
Letting \( \ell \to \infty \), we obtain:
\[
E_k(\nu_k^{(\eta)}) \leq E^*_{\Sigma, k} + \eta.
\]
Moreover, from equation \eqref{e3}:
\[
\limsup_{n \to \infty} \int_\Sigma f(t) \, \mu_{\Sigma, n, k}(dt) \leq \int_\Sigma f(t) \, \nu_k^{(\eta)}(dt).
\]
The inequality \( E_k(\nu_k^{(\eta)}) \leq E^*_{\Sigma, k} + \eta \) implies that \( \nu_k^{(\eta)} \) converges to \( \mu^*_{\Sigma, k} \) as \( \eta \to 0 \). Therefore:
\[
\limsup_{n \to \infty} \int_\Sigma f(t) \, \mu_{\Sigma, n, k}(dt) \leq \int_\Sigma f(t) \, \mu^*_{\Sigma, k}(dt).
\]
Applying this inequality to \( -f \) instead of \( f \), we obtain:
\[
\liminf_{n \to \infty} \int_\Sigma f(t) \, \mu_{\Sigma, n, k}(dt) \geq \int_\Sigma f(t) \, \mu^*_{\Sigma, k}(dt).
\]
Thus:
\[
\lim_{n \to \infty} \int_\Sigma f(t) \, \mu_{\Sigma, n, k}(dt) = \int_\Sigma f(t) \, \mu^*_{\Sigma, k}(dt).
\]
\end{proof}

\begin{corollary}
\begin{enumerate}
\item For \( Q(x) = |x|^\beta \) and \( \Sigma = \mathbb{R} \), the measure \( \mu_{\Sigma, k, n} \) converges to the probability measure \( \mu^*_{k, \beta} \) of Proposition \ref{pa}.
\item For \( Q(x) = x^2 \) and \( \Sigma = [\omega, +\infty) \), the measure \( \mu_{\omega, n, k} \) converges to the equilibrium measure \( \mu^*_{k, \omega} = \nu_{k, \beta} \) of Theorem \ref{tii}.
\end{enumerate}
\end{corollary}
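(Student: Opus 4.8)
The plan is to deduce the corollary directly from Theorem~\ref{th1}, which already establishes that the statistical density $\mu_{\Sigma,n,k}$ converges in the tight topology to the equilibrium measure $\mu^*_{\Sigma,k}$ under the standing assumptions $(H_1)$, $(H_2)$, $(H_3)$. Thus the corollary reduces to two routine verifications: checking that each of the two specific pairs $(Q,\Sigma)$ satisfies these hypotheses, and then identifying the abstract equilibrium measure $\mu^*_{\Sigma,k}$ with the explicitly computed measures of Proposition~\ref{pa} and Theorem~\ref{tii}.

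First I would verify the hypotheses. In part~1, with $Q(x)=|x|^\beta$ on $\Sigma=\mathbb{R}$, the function $Q$ is continuous so $(H_1)$ holds; for $k\neq 0$ one has $|x|^\beta-4|x|^k\to+\infty$ as $|x|\to\infty$ since $\beta>k$ holds throughout the admissible ranges of Proposition~\ref{pa} (namely $\beta>1>k$ when $k\in(-1,1)$, $\beta>k$ when $k\in(1,2)$, and $\beta>1=k$ when $k=1$), giving $(H_2)$; for $k=0$ the growth $|x|^\beta-\log(x^2+1)\to+\infty$ gives $(H_3)$. In part~2, with $Q(x)=x^2$ on $\Sigma=[\omega,+\infty)$, continuity yields $(H_1)$ and $x^2-4|x|^k\to+\infty$ yields $(H_2)$ for $k\in(-1,1)$, $k\neq 0$. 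Hence Theorem~\ref{th1} applies in both cases and gives convergence of $\mu_{\Sigma,n,k}$ to $\mu^*_{\Sigma,k}$.

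It then remains to identify $\mu^*_{\Sigma,k}$. For this I would invoke Proposition~\ref{p3}: any probability measure with continuous potential satisfying $U^\mu(x)+\tfrac12 Q(x)=C$ on its support and $\geq C$ off it is \emph{the} equilibrium measure, uniqueness being guaranteed by Theorem~\ref{thm3}. In part~1, Proposition~\ref{pa} establishes exactly this variational characterization for $\mu_{k,\beta}$, since $U^{\mu_{k,\beta}}(x)+\tfrac12|x|^\beta=\tfrac12|x|^\beta-\int_{\mathbb{R}}V_k(x-y)\,\mu_{k,\beta}(dy)$ equals $\sigma_{k,\beta}$ on $[-A_{k,\beta},A_{k,\beta}]$ and dominates it elsewhere; therefore $\mu_{k,\beta}=\mu^*_{\mathbb{R},k}$ and the claimed convergence follows. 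In part~2, Theorem~\ref{tii} verifies the same conditions for $\nu_{k,\omega}$ on $[\omega,+\infty)$ with $Q(x)=x^2$, so $\nu_{k,\omega}=\mu^*_{k,\omega}$ and the stated convergence holds.

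The one point I would treat with care, as a prerequisite for applying Proposition~\ref{p3}, is the continuity of the potential $U^\mu$ of each candidate measure, which must be read off from the explicit densities $f_{k,\beta}$, $g_{k,\beta}$, $h_\beta$ and $g_{k,\omega}$; the endpoint singularities of these densities are integrable for the admissible ranges of $k$, so the corresponding potentials are indeed continuous. Beyond this verification I do not anticipate any substantial obstacle, since all the hard analysis — the derivation of the densities and the confirmation of the equilibrium inequalities — has already been carried out in Proposition~\ref{pa} and Theorem~\ref{tii}; the corollary is essentially a bookkeeping step assembling those results with Theorem~\ref{th1}.
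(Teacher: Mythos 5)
Your proposal is correct and follows exactly the route the paper intends: the corollary is stated without proof precisely because it is the immediate combination of Theorem~\ref{th1} (convergence to the abstract equilibrium measure under $(H_1)$--$(H_3)$) with the variational characterizations of Proposition~\ref{p3}, which Proposition~\ref{pa} and Theorem~\ref{tii} verify for the explicit densities. Your added care about checking the hypotheses and the continuity of the potentials $U^\mu$ is a sound and complete filling-in of the details the paper leaves implicit.
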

\section{Density of the maximal particle}
In this section, we focus on the position $x_{\text{max}}$ of the rightmost particle on the infinite
line. Clearly  $x_{\text{max}}$ is a random variable which fluctuates from one realisation to another. To derive the distribution of  $x_{\text{max}}$, it is convenient to consider the cumulative distribution
\[ D(\omega, n)=\text {Prob}( x_{\text{max}}\leq\omega)={\Bbb P}_n[x_1\leq\omega,...,x_n\leq \omega].\]
By the symmetry of the density of the particles
\[\text {Prob}( x_{\text{max}}\leq\omega)=\text {Prob}( x_{\text{min}}\geq-\omega)\]
Using the Boltzmann distribution of the particles \({\Bbb P}_n(x_1, x_2, ..., x_n )\), one can express $D(\omega, n)$
as the ratio of two partition functions:
\[D(-\omega, n)=\frac{Z_{k,n}(-\omega)}{Z_{k, n}(-\infty)},\]
where \[Z_{k,n}(-\omega)=\int_{[-\omega,\infty)^n}{\Bbb P}_n(x_1,...,x_n)dx_1...dx_n.\]
Using the result of Sections~[\ref{den}, \ref{app}]), we get
For $k\in(-1,1)$, \\
\(\bullet\) If $\omega\geq \omega_k$,
\[\lim_{n\to\infty}-\frac1{n^{2+\epsilon_k}}\log D(\omega, n)=1.\]
\(\bullet \) If $\omega\leq \omega_k$,
\[\lim_{n\to\infty}-\frac1{n^{2+\epsilon_k}}\log D(\omega, n)=E^*_{k, \omega}-E^*_k.\]
In particular the probability \(p_{k}\) that all particles are positive is the same as the probability that all particles are negative and is given by
\[
p_k=\lim_{n\to\infty}-\frac1{n^{2+\epsilon_k}}\log D(0, n)=E^*_{k,0}-E^*_k = \frac{\left(1 - (3-k)^{-\frac{k}{2-k}}\right)(2-k)}{k(4-k)} \omega_k^2.
\]
For \(k = 0\),
\[
 p_{0} = \frac{1}{2}\log 3.
\]
%For $k=2$, one may assume that $Q(x)=\gamma x^2$ and $\gamma>4$.
% Which is equivalent to
%  $$-\alpha\int_x^\infty (x-y)\mu^*(dy)+\alpha\int_{-\infty}^x (x-y)\mu^*(dy)=\frac12Q(x)+C.$$
%  Since, the measure $\mu^*$ is absolutely continuous with respect to the Lebesgue measure; let $f$ be its density and $[a,b]$ its support. Then
%  $$-\alpha\int_x^b (x-y)f(y)dy+\alpha\int_a^x (x-y)f(y)dy=\frac12Q(x).$$
% By differentiation we get,
%  $$-\alpha\int_x^bf(y)dy+\alpha\int_a^x f(y)dy=\frac12Q'(x).$$
%  Another differentiation gives
%  $$f(x)=\frac1{2\alpha}Q''(x),$$
 % and
 %$$\mu(dx)=\frac1{2\alpha}Q''(x)\chi_{[a,b]}(x) dx.$$
 %Since, the support of $\mu^*$ is $[a,b]$. Using the normalization condition we get,
 %$Q'(b)=Q'(a)+2\alpha$.

\section{Generalized Marchenko-Pastur Distribution}

Consider the potential \( Q(x) = x \) defined on the interval \([0, +\infty)\). For this potential, we define a probability density on \(\mathbb{R}_+^n\) as follows:
\[
\mathbb{P}_{n,k}(dx) = \frac{1}{Z_{n,k}} \prod_{i=1}^n e^{-n^{1+\epsilon_k} x_i} \prod_{1 \leq i < j \leq n} e^{n^{\epsilon_k} V_k(x_i - x_j)} dx_1 \dots dx_n,
\]
where \( k \in (-1, 1) \), \(\epsilon_k\) is a parameter, and \( Z_{n,k} \) is a normalization constant.

The empirical density of the particles is given by:
\[
\nu_n = \frac{1}{n} \sum_{i=1}^n \delta_{x_i},
\]
and the statistical density of the particles is defined as:
\[
\mu_{n,k} = \mathbb{E}_{n,k}(\nu_n),
\]
where \(\mathbb{E}_{n,k}\) denotes the expectation with respect to the probability density \(\mathbb{P}_{n,k}\).

For \( k \in (-1, 1) \) with \( k \neq 0 \), we define the probability measure \(\mu_k\) with density:
\[
h_k(t) = \frac{\sin\left(\frac{(1-k)\pi}{2}\right)}{2|k|\pi} (a_k - t)^{\frac{1-k}{2}} t^{-\frac{1+k}{2}} \chi_{[0, a_k]}(t),
\]
where
\[
a_k = \left(\frac{|k| \Gamma\left(\frac{1+k}{2}\right) \Gamma\left(\frac{2-k}{2}\right)}{2^{k-2} \sqrt{\pi}}\right)^{\frac{1}{1-k}}.
\]
For \( k = 0 \), the density is:
\[
h_0(t) = \frac{1}{2\pi} \sqrt{\frac{4 - t}{t}} \chi_{[0, 4]}(t).
\]
\begin{proposition}\label{th0}
Let \( k \in (-1, 1) \). Consider the energy functional
\[
E_k(\mu) = \int_{\Bbb R_+} x \mu(dx) - \int_{\Bbb R_+^2} V_k(x - y) \mu(dx) \mu(dy).
\]
Then
\[
\inf_{\mu \in {\cal M}^1(\Bbb R)} E_k(\mu) = E_k(\mu_{k}) = E^*_{k}=-\frac{1}{2} \frac{(1 - k)^2}{k(2 - k)} a_k\qquad, k\neq 0.
\]
\end{proposition}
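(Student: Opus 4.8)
The plan is to recognize $\mu_k$ as the equilibrium measure for $Q(x)=x$ on $\Sigma=\mathbb{R}_+$ with interaction $V_k$, and then to read the energy off the equilibrium constant, following the template of the proof of Corollary~\ref{c1}. By Proposition~\ref{p3} it suffices to exhibit a constant $\sigma_k$ with
\[
\tfrac12 x-{\rm sign}(k)\int_{\mathbb{R}_+}|x-y|^k\,\mu_k(dy)=\sigma_k \ \text{ for } x\in[0,a_k],\qquad \geq\sigma_k \ \text{ for } x\geq a_k,
\]
since this is exactly the condition $U^{\mu_k}(x)+\tfrac12 Q(x)=\sigma_k$ on the support and $\geq\sigma_k$ off it. Once $\mu_k=\mu^*_k$ is established, substituting $U^{\mu_k}=\sigma_k-\tfrac12 Q$ into $E_k(\mu)=\int Q\,d\mu+\int U^{\mu}\,d\mu$ yields $E_k(\mu_k)=\sigma_k+\tfrac12\int x\,\mu_k(dx)$, reducing the energy to two moment integrals.

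First I would verify that $h_k$ is a probability density. The scaling $t=a_k s$ turns $\int_0^{a_k}h_k$ into $a_k^{1-k}$ times the Beta integral $B\!\left(\tfrac{1-k}{2},\tfrac{3-k}{2}\right)$, and setting this equal to $1$ recovers the stated $a_k$; the equality between the resulting closed form and the displayed expression for $a_k$ is a gamma identity that I would settle with the Legendre duplication formula together with the reflection formula $\Gamma(\tfrac{1-k}{2})\Gamma(\tfrac{1+k}{2})=\pi/\sin(\tfrac{(1-k)\pi}{2})$.

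Next I would treat the balance function by differentiation in $x$. On $(0,a_k)$ its derivative equals $\tfrac12-|k|\int{\rm sign}(x-y)|x-y|^{k-1}\,\mu_k(dy)$, and after the substitution $x=a_k z$, $y=a_k s$ (under which all powers of $a_k$ cancel) the integral is precisely the second identity of the Lemma preceding the proof of Theorem~\ref{tii}, so the derivative evaluates to $\tfrac12-\tfrac12=0$; hence the balance function is constant $=\sigma_k$ on $[0,a_k]$. For $x>a_k$ every $y\in[0,a_k]$ gives ${\rm sign}(x-y)=1$, so the derivative is $\phi_k(x)=\tfrac12-|k|\int_0^{a_k}(x-y)^{k-1}\,\mu_k(dy)$; since $k-1<0$ we get $\phi_k'(x)=|k|(1-k)\int(x-y)^{k-2}\,\mu_k(dy)>0$, while $\phi_k(a_k)=0$ by continuity from the interior identity. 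Thus $\phi_k>0$ on $(a_k,\infty)$, the balance function is increasing there, and the required inequality $\geq\sigma_k$ follows. This confirms $\mu_k=\mu^*_k$.

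Finally I would evaluate the two constants. Taking $x=0$ in the interior equality gives $\sigma_k=-{\rm sign}(k)\int_0^{a_k}y^k\,\mu_k(dy)$, and both $\int y^k\,\mu_k(dy)$ and $\int y\,\mu_k(dy)$ reduce, again via $y=a_k s$, to Beta integrals. Using $\Gamma(\tfrac{3-k}{2})=\tfrac{1-k}{2}\Gamma(\tfrac{1-k}{2})$, the duplication formula, and the normalization value of $a_k^{1-k}$, I expect $\int y^k\,\mu_k(dy)=\tfrac{(1-k)a_k}{4|k|}$ and $\int y\,\mu_k(dy)=\tfrac{(1-k)a_k}{2(2-k)}$, whence $\sigma_k=-\tfrac{(1-k)a_k}{4k}$ and $E_k(\mu_k)=\sigma_k+\tfrac12\int y\,\mu_k(dy)=-\tfrac12\tfrac{(1-k)^2}{k(2-k)}a_k$, as claimed. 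The main obstacle here is not conceptual but bookkeeping: keeping the gamma-function simplifications consistent across the normalization and the two moments, handling the sign cases $k<0$ and $k>0$ uniformly through $|k|$ and ${\rm sign}(k)$, and justifying the continuity of $\phi_k$ at the endpoint $a_k$, where the integrand $(a_k-y)^{k-1}h_k(y)\sim(a_k-y)^{(k-1)/2}$ is integrable precisely because $k>-1$.
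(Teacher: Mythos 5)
Your proposal is correct and takes essentially the same route as the paper: both verify the two conditions of Proposition~\ref{p3} for $\mu_k$ by reducing, after the scaling $x=a_k z$, $y=a_k s$, to the weighted fractional-integral identity $\int_0^1 {\rm sign}(z-s)|z-s|^{k-1}(1-s)^{\frac{1-k}{2}}s^{-\frac{1+k}{2}}\,ds=\pi/\sin\bigl(\tfrac{(1-k)\pi}{2}\bigr)$ (which the paper re-derives from Podlubny's Jacobi-polynomial formula, the same source as the Lemma you cite), both handle $x\geq a_k$ by showing the derivative of the balance function vanishes at $a_k$ and is increasing, and both compute $E^*_k=\sigma_k+\tfrac12\int t\,\mu_k(dt)$ from the same Beta-integral moments $\int t^k\mu_k(dt)=\tfrac{(1-k)a_k}{4|k|}$ and $\int t\,\mu_k(dt)=\tfrac{(1-k)a_k}{2(2-k)}$. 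Your explicit verification of the normalization (deriving $a_k$ from $\int h_k=1$) is a small addition the paper leaves implicit, but it does not change the argument.
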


\begin{proof}
To demonstrate that the probability measure \(\mu_k\) minimizes the energy, it suffices to show that $\mu_k$ satisfies the conditions of Proposition \ref{p3}:
\[
\frac{1}{2}x - \text{sgn}(k) \int_0^\infty |x - t|^k \mu_k(dt) \begin{cases}
= \sigma_k & \text{if } x \in (0, a_k), \\
\geq \sigma_k & \text{if } x \geq a_k.
\end{cases}
\]
From \cite{1}, Corollary 2.2, for integers \( m, n, r \) with \( m + r + n + 1 > 0 \), \(\nu \in (0, 1)\), and \( x \in (-1, 1) \),
\[
\int_{-1}^1 \frac{\text{sign}(x - t)}{|x - t|^\nu} Q_m^{\nu/2 + r, \nu/2 + n}(t) dt = v_{n,m,r,\nu} P_{m + r + n + 1}^{\nu/2 - r - 1, \nu/2 - k - 1}(t),
\]
where
\[
v_{n,m,r,\nu} = \frac{\pi (-1)^r 2^{r + n + 1} \Gamma(m + \nu)}{m! \Gamma(\nu) \sin\left(\frac{\nu \pi}{2}\right)}.
\]
For \( y \in (-1, 1) \),
\[
2 \int_0^1 \frac{\text{sign}(y + 1 - 2t)}{|y + 1 - 2t|^\nu} Q_m^{\nu/2 + r, \nu/2 + k}(2t - 1) dt = v_{n,m,r,\nu} P_{m + r + n + 1}^{\nu/2 - r - 1, \nu/2 - k - 1}(y).
\]
For \( x = \frac{y + 1}{2} \in (0, 1) \),
\[
2^{1 - \nu} \int_0^1 \frac{\text{sign}(x - t)}{|x - t|^\nu} Q_m^{\nu/2 + r, \nu/2 + n}(2t - 1) dt = v_{n,m,r,\nu} P_{m + r + n + 1}^{\nu/2 - r - 1, \nu/2 - n - 1}(2x - 1).
\]
For \( n = -1 \), \( r = 0 \), and \( \nu = 1 - k \),
\[
2^k \int_0^1 \text{sign}(x - t) |x - t|^{k - 1} Q_m^{(1 - k)/2, -(1 + k)/2}(2t - 1) dt = \frac{\pi \Gamma(m + 1 - k)}{m! \Gamma(1 - k) \sin\left(\frac{(1 - k)\pi}{2}\right)} P_m^{-(1 + k)/2, (1 - k)/2}(2x - 1).
\]
For \( m = 0 \),
\[
\int_0^1 \text{sign}(x - t) |x - t|^{k - 1} (1 - t)^{\frac{1 - k}{2}} t^{-\frac{1 + k}{2}} dt = \frac{\pi}{\sin\left(\frac{(1 - k)\pi}{2}\right)}.
\]
Thus, for \( x \in (0, 1) \) and \( k \in (-1, 1) \), \( k \neq 0 \),
\[
\frac{1}{2}x - \text{sgn}(k) \int_0^1 |x - t|^k \nu_k(dt) = c_k.
\]
For \( k = 0 \),
\[
\frac{1}{2}x - \int_0^1 \log|x - t| \nu_0(dt) = c_k.
\]
Here, \(\nu_k\) is the probability measure with density \( h_k \), where
\[
h_k(t) = \frac{\sin\left(\frac{(1 - k)\pi}{2}\right)}{2|k|\pi} (1 - t)^{\frac{1 - k}{2}} t^{-\frac{1 + k}{2}} \chi_{[0, 1]}(t),
\]
and
\[
h_0(t) = \frac{1}{2\pi} \sqrt{\frac{4 - t}{t}} \chi_{[0, 4]}(t).
\]
For \( x \in (0, a_k) \), using the change of variables \( x = a_k y \) and \( t = a_k u \), we obtain:
\[
\frac{1}{2}x - \text{sgn}(k) \int_0^{a_k} |x - t|^k \mu_k(dt) = a_k \left(\frac{1}{2} y - \text{sgn}(k) \int_0^1 |x - u|^k \nu_k(du)\right) = a_k c_k = \sigma_k,
\]
where \(\mu_k\) is the probability measure with density:
\[
h_k(t) = \frac{\sin\left(\frac{(1 - k)\pi}{2}\right)}{2|k|\pi} (a_k - t)^{\frac{1 - k}{2}} t^{-\frac{1 + k}{2}} \chi_{[0, a_k]}(t),
\]
and
\[
a_k = \left(\frac{|k| \Gamma\left(\frac{1 + k}{2}\right) \Gamma\left(\frac{2 - k}{2}\right)}{2^{k - 2} \sqrt{\pi}}\right)^{\frac{1}{1 - k}}.
\]

For \( x \geq a_k \), let
\[
\psi(x) = \frac{1}{2}x - \text{sgn}(k) \int_0^{a_k} |x - t|^k \mu_k(dt).
\]
Then,
\[
\psi''(x) = -k(k - 1) \text{sgn}(k) \int_0^{a_k} |x - t|^{k - 2} \mu_k(dt) \geq 0.
\]
Thus,
\[
\psi'(x) \geq \psi'(a_k) = \frac{1}{2} - k \text{sgn}(k) \int_0^1 |x - t|^{k - 1} \nu_k(dt) = 0.
\]
Therefore, \(\psi\) is increasing and \(\psi(x) \geq \psi(a_k) = \sigma_k\).

Since
\[
\sigma_k = -\text{sgn}(k) \int_0^{a_k} t^k \mu_k(dt) = -\frac{1 - k}{4k} a_k,
\]
and
\[
\frac{1}{2} \int_0^1 t \mu_k(dt) = \frac{1 - k}{4(2 - k)} a_k,
\]
the energy is given by:
\[
E^*_k = \sigma_k + \frac{1}{2} \int_0^1 t \mu_k(dt) = -\frac{1}{2} \frac{(1 - k)^2}{k(2 - k)} a_k.
\]
For \( k = 0 \),
\[
\sigma_0 = -\int_0^4 \log t \mu_0(dt) = -1,
\]
and
\[
\int_0^4 t \mu_0(dt) = 1.
\]
Thus,
\[
E_0 = \sigma_0 + \frac{1}{2} \int_0^4 t \mu_0(dt) = -\frac{1}{2}.
\]
\end{proof}
\begin{proposition}
For \( k \in (-1, 1) \), the probability measure \(\mu_{n,k}\) converges for the tight topology to the probability measure \(\mu_k\). For every bounded continuous function \( f \) on \(\mathbb{R}_+\),
\[
\lim_{n \to \infty} \int_{\mathbb{R}_+} f\left(\frac{t}{n^{1+\epsilon_k}}\right) \mu_{n,k}(dt) = \int_{\mathbb{R}_+} f(t) \mu_k(dt).
\]
\end{proposition}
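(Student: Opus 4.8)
The plan is to read this statement as the $\Sigma=[0,+\infty)$, $Q(x)=x$ instance of the general convergence scheme of Section~\ref{app} and to deduce it from Theorem~\ref{th1}. First I would verify that the hypotheses of that machinery hold in this case: $Q(x)=x$ is continuous, hence lower semicontinuous, on $[0,+\infty)$; the kernel $V_k$ is admissible for $k\in(-1,1)$, $k\neq0$, by the proposition establishing admissibility of $V_k$; and the confinement condition holds since $\lim_{x\to+\infty}\bigl(x-4\,\mathrm{sgn}(k)|x|^k\bigr)=+\infty$ for every $k\in(-1,1)$ (the linear term dominates $|x|^k$ when $k\in(0,1)$, while for $k\in(-1,0)$ one has $x+4x^{k}\to+\infty$). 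Crucially, the equilibrium measure has already been identified in the preceding proposition: the measure $\mu_k$ with density $h_k$ satisfies the Euler--Lagrange conditions (1)--(2) of Proposition~\ref{p3}, so by that proposition it is the unique minimizer of $E_k$ over $\mathcal M^1(\mathbb{R}_+)$, with value $E_k^*$.

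Next I would make the rescaling precise. Writing $\tilde\mu_{n,k}$ for the law of the rescaled particles $x_i/s_n$ under $\mathbb P_{n,k}$, where $s_n$ is the macroscopic scale at which the gas equilibrates, the displayed identity is exactly the tight convergence $\tilde\mu_{n,k}\to\mu_k$. The scale $s_n$ is fixed by balancing the linear confinement against the $k$-homogeneous repulsion: substituting $x_i=s_n y_i$ and using $V_k(s_n u)=s_n^{k}V_k(u)$ for $k\neq0$ (respectively $\log(s_n u)=\log s_n+\log u$ for $k=0$, the additive constant being absorbed into $Z_{n,k}$), the Hamiltonian becomes $n^{2+\epsilon_k}$ times a Riesz energy of the variables $y_i$ whose minimizer, for the correct choice of $s_n$, is $\mu_k$; the statement records this with $s_n=n^{1+\epsilon_k}$. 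The accompanying free-energy limit $\tfrac1{n^{2+\epsilon_k}}\log Z_{n,k}\to -E_k^*$ then follows as in Proposition~\ref{pro4}: the upper bound from $Z_{n,k}\le e^{-n^{1+\epsilon_k}(n-1)\tau_{n,k}}\bigl(\int e^{-n^{\epsilon_k}Q}\,dx\bigr)^n$ together with $\tau_{n,k}\to E_k^*$, and the lower bound from Jensen's inequality applied to a continuous probability density approximating $\mu_k$.

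With the free energy in hand, the convergence is obtained by the two-sided argument of Theorem~\ref{th1}. I would introduce the compact sets $A_{\eta,n}=\{K_n\le (E_k^*+\eta)n^{2+\epsilon_k}\}$, show $\mathbb P_{n,k}(A_{\eta,n})\to1$ from the free-energy estimate and the bound $K_n(x)\ge n^{\epsilon_k}(n-1)\sum_i g_k(x_i)$ with $g_k=Q-4V_k$, and then bound $\int f\bigl(t/n^{1+\epsilon_k}\bigr)\,\mu_{n,k}(dt)$ above by the supremum of $F_n=\tfrac1n\sum_i f(\cdot/n^{1+\epsilon_k})$ over $A_{\eta,n}$ plus $\|f\|_\infty\bigl(1-\mathbb P_{n,k}(A_{\eta,n})\bigr)$. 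The empirical measures of these near-maximizers are tight and have truncated energy $\le E_k^*+\eta+\ell/n$, so every tight limit has energy $\le E_k^*+\eta$; letting first $\ell\to\infty$ and then $\eta\to0$, uniqueness of the minimizer forces the limit to be $\mu_k$. Applying the same inequality to $-f$ yields the matching lower bound, and hence the stated limit.

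The step I expect to be the genuine obstacle is the combination of the hard wall at $x=0$ with the pinning-down of the scaling exponent. The density $h_k$ has an integrable but unbounded singularity $t^{-(1+k)/2}$ at the origin, so one must check that (i) the energy $E_k(\mu_k)$ is finite and the Euler--Lagrange verification of the preceding proposition remains valid up to the endpoint $0$, and (ii) the smooth approximating densities used in the Jensen lower bound can be chosen to reproduce $E_k(\mu_k)$ in the limit despite this singularity, which requires a careful truncation near $0$. Establishing the precise normalizing power $n^{1+\epsilon_k}$ --- rather than merely \emph{some} scale $s_n$ balancing the linear confinement against the $k$-Riesz repulsion --- is the other delicate point, and it is exactly what ties the generalized law $h_k$ to the classical Marchenko--Pastur density $h_0$ on $[0,4]$ recovered at $k=0$.
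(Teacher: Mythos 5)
Your overall strategy --- reduce the proposition to Theorem~\ref{th1} of Section~\ref{app} with $\Sigma=[0,+\infty)$ and $Q(x)=x$, combined with the preceding proposition that identifies $\mu_k$ as the unique minimizer through the Euler--Lagrange conditions of Proposition~\ref{p3} --- is exactly the intended argument: the paper states this proposition with no proof at all, and your verification of the hypotheses ($Q$ lower semicontinuous, $V_k$ admissible, $x-4\,\mathrm{sgn}(k)|x|^k\to+\infty$), your free-energy bound, and your two-sided argument with the compact sets $A_{\eta,n}$ are all sound reconstructions of the paper's general machinery.

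The genuine gap is in your treatment of the rescaling. You claim that substituting $x_i=s_ny_i$ with $s_n=n^{1+\epsilon_k}$ turns the Hamiltonian into $n^{2+\epsilon_k}$ times a Riesz energy minimized by $\mu_k$. That is arithmetically false: the ensemble $\mathbb{P}_{n,k}$ as defined already carries the prefactors $n^{1+\epsilon_k}$ (confinement) and $n^{\epsilon_k}$ (interaction), so the gas equilibrates at scale $O(1)$ --- that is precisely what those prefactors are for. Under your substitution the confinement term is of order $n^{1+\epsilon_k}\cdot n^{1+\epsilon_k}\cdot n=n^{3+2\epsilon_k}$ while the repulsion is of order $n^{\epsilon_k}\cdot n^{k(1+\epsilon_k)}\cdot n^{2}$, and these exponents agree only when $k=1$; for $k\in(-1,1)$ the confinement dominates, the rescaled particles collapse to the origin, and $\int f\bigl(t/n^{1+\epsilon_k}\bigr)\mu_{n,k}(dt)\to f(0)$, not $\int f\,d\mu_k$. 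What Theorem~\ref{th1} actually yields is the first sentence of the proposition: $\mu_{n,k}\to\mu_k$ tightly with \emph{no} rescaling (at $k=0$ this is the familiar convergence of the Laguerre-type ensemble $\prod e^{-nx_i}\prod_{i<j}|x_i-x_j|$ to the Marchenko--Pastur law on $[0,4]$). The displayed formula with $f\bigl(t/n^{1+\epsilon_k}\bigr)$ is inconsistent with that sentence (the two together would force $\mu_k=\delta_0$); it could only refer to a bare, un-normalized ensemble, and even then the balancing scale would be $n^{1/(1-k)}$, not $n^{1+\epsilon_k}$. So the defect originates in the statement itself, but your proof papers over it by asserting a false identity rather than flagging it: the correct course is to apply Theorem~\ref{th1} as you do, conclude the unscaled convergence, and observe that the rescaled display cannot hold as written. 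Your closing worries about the singularity $t^{-(1+k)/2}$ at the origin are, by contrast, harmless: the density is integrable, the Euler--Lagrange verification in the preceding proposition is carried out up to the endpoint, and the Jensen lower bound only needs measures with $E_k(\mu)\le E_k^*+\varepsilon$, which exist by truncating near $0$.
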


\end{document}